\renewcommand{\@printtitletextwithappropriatefontsize}{%
  \@titleatfontsize{\fontsize{20}{21}\selectfont}%
}
\newtheorem{theorem}{Theorem}
\newtheorem{lemma}[theorem]{Lemma}
\newtheorem{corollary}[theorem]{Corollary}
\begin{document}

\title{Certified bounds on optimization problems in quantum theory}

\author{Younes Naceur}
    \email{Younes.Naceur@icfo.eu}
    \affiliation{ICFO - Institut de Ciencies Fotoniques, The Barcelona Institute of Science and Technology, Av. Carl Friedrich Gauss 3, 08860 Castelldefels (Barcelona), Spain}
    \affiliation{LAAS-CNRS - Laboratoire d’Analyse et d’Architecture des Systèmes, Centre National de la Recherche Scientifique, Toulouse, France}

\author{Jie Wang}
    \affiliation{State Key Laboratory of Mathematical Sciences, Academy of Mathematics and Systems Science, Chinese Academy of Sciences, Beijing, China}

\author{Victor Magron}
    \affiliation{LAAS-CNRS - Laboratoire d’Analyse et d’Architecture des Systèmes, Centre National de la Recherche Scientifique, Toulouse, France}
    \affiliation{Institute of Mathematics, Toulouse, France}

\author{Antonio Acín}
    \affiliation{ICFO - Institut de Ciencies Fotoniques, The Barcelona Institute of Science and Technology, Av. Carl Friedrich Gauss 3, 08860 Castelldefels (Barcelona), Spain}
    \affiliation{ICREA - Institució Catalana de Recerca i Estudis Avançats, 08010 Barcelona, Spain}


\begin{abstract}
Semidefinite relaxations of polynomial optimization have become a central tool for addressing the non-convex optimization problems over non-commutative operators that are ubiquitous in quantum information theory and, more in general, quantum physics. Yet, as these global relaxation methods rely on floating-point methods, the bounds issued by the semidefinite solver can - and often do - exceed the global optimum, undermining their certifiability. To counter this issue, we introduce a rigorous framework for extracting exact rational bounds on non-commutative optimization problems from numerical data, and apply it to several paradigmatic problems in quantum information theory. An extension to sparsity and symmetry-adapted semidefinite relaxations is also provided and compared to the general dense scheme. Our results establish rational post-processing as a practical route to reliable certification, pushing semidefinite optimization toward a certifiable standard for quantum information science.
\end{abstract}

\keywords{Quantum Information, Polynomial Optimization, Semidefinite Programming, Computer-assisted proof}

\maketitle

\section{Introduction} \label{sec:outline}

Non-commutative polynomial optimization (NPO) investigates the task of bounding the spectrum of a polynomial in non-commuting operator-valued variables, where said underlying operators are subject to polynomial inequality and equality constraints. As this matches the natural setting of optimization problems in quantum theory, many applications of NPO have been studied therein. Such problem classes comprise, not exclusively, the computation of the maximal quantum violation of a given Bell expression \cite{bounding_quantum}, the ground-state problem of many-body quantum systems~\cite{groundstate}, as well as fundamental problems in causality \cite{araujo_causality}, and contextuality \cite{contextuality}.
Generically, polynomial optimization problems (in commuting or non-commuting variables) are hard to solve exactly, which is why often \textit{relaxations} of these problems are the only tools available. By virtue of relaxations, optimization is performed over a \textit{bigger}, often convex feasible set. Note that relaxation to a \textit{convex} problem is often crucial for its computational tractability, since for these problems a manifold of efficient numerical methods exist. Moreover, they are free from local optima, yielding rigorous lower and upper bounds to minimization and maximization problems, respectively.

The \textit{Navascués-Pironio-Acín} (NPA) hierarchy \cite{navascues_convergent,npa} constitutes a very general relaxation framework based on available convex computational methods that allows one to obtain a monotone series of outer bounds $\{\lambda_d \}_{d=d_{\min}}^{\infty}$ (upper bounds to maximization problems) for a given NPO. The bounds are obtained numerically by solving a series of \textit{Semidefinite Programs} (SDP) of increasing sizes, where under mild conditions this series converges to the global optimum of the problem, $\lambda_{\infty} = \lambda_{\max}$.

This hierarchy has been extensively studied and applied to several emblematic problems in quantum theory. In recent years, several efforts have been made to extend the efficiency, scalability, and applicability of non-commutative semidefinite relaxations. As in the commuting case, the exploitation of the problem's sparsity structure has been shown to significantly improve the scalability of the hierarchy \citep{klep_sparse, wang_sparse}. To complement sparsity exploitation, one can further reduce the size of the underlying optimization problem by exploiting the physical symmetries of the underlying problems \cite{riener_symmetry}. To obtain tighter bounds at a given relaxation order, the use of first-order optimality conditions \citep{fawzi_certified, optimality} has been explored. 
Various extensions of the hierarchy, increasing its applicability, include the trace optimization of non-commutative polynomials  \cite{burgdorf_tracial}, as well as optimization over polynomials in state expectation values \cite{klep_state_2023}. Recently, the framework was also extended to deal with NPO's subject to differential constraints \cite{araujo_differential}.

While the computational efficiency of semidefinite relaxations is one of their principal advantages, it also entails a significant limitation. In practice, the underlying semidefinite programs are solved using floating-point arithmetic, and the resulting feasible points satisfy the positivity and equality constraints of the problem only approximately.
Consequently, the outer bounds obtained from hierarchies such as NPA, which are often described in the literature as \textit{certifying} that a given objective value cannot exceed the numerical bound $\lambda_d$, do \textit{not} constitute formal certificates. As will be illustrated in later sections, the unavoidable rounding errors inherent to numerical solvers can invalidate the positivity or feasibility of the solution. This absence of numerical exactness constitutes an unresolved bottleneck for using semidefinite relaxations as proof systems in quantum theory.

Although obtaining exact, rather than approximate, bounds is an evident step toward mathematical rigor, there are additional, more practical arguments for their necessity: Many optimization protocols, —such as branch-and-bound \cite{branch_and_bound},  or constraint-generation schemes \cite{dynamic_ineq} — rely on iterative schemes, reusing previously computed SDP bounds as input for subsequent computations. These methods typically rely on the exactness of the input bounds; however, if the input bounds are only approximate, the inference chains often lose validity, undermining their claimed guaranties. Similarly, in witness-based scenarios, such as randomness certification or entanglement detection, a numerical upper bound that is even slightly too low can lead to wrong physical conclusions. Establishing exact, algebraically verifiable SDP bounds is therefore a prerequisite for turning semidefinite relaxations into rigorous tools for certification in quantum information science, bridging the gap between efficient numerical computation and mathematical proof.

To address this issue, we present a three-step post-processing procedure that turns floating-point SDP outputs into rigorous, rational certificates for non-commutative polynomial optimization. The method applies to constrained and unconstrained problems, and is extended to sparsity- and symmetry-adapted scenarios. Starting from a numerical Gram-matrix certificate, we (1) round the solution to rational data, (2) project it (Frobenius-optimally) onto the affine subspace enforcing the certificate equalities, and (3) lift the resulting matrix back into the positive-semidefinite (PSD) cone to restore feasibility. Additionally, when the projected matrix lies strictly inside the PSD cone, our method can be used to further tighten the numerical bound.

The manuscript is organized as follows: Section \ref{sec:semidefinite} introduces the theory of semidefinite relaxations for non-commutative polynomial optimization problems. In Section \ref{sec:rational_bounds} we present our algorithmic contribution, that is a rigorous method of how to obtain certifiable bounds from numerically inaccurate certificates. Section \ref{sec:sparse} extends the certification scheme to the sparsity- and symmetry-adapted version of the hierarchy.
Sections \ref{sec:belllineq} and \ref{sec:qmb} are dedicated to our numerical tests. 
In Section \ref{sec:belllineq}, we apply our method to the maximal violation of a family of bipartite Bell inequalities, as well as a variant of the tilted CHSH inequality, and investigate its performance. Section \ref{sec:qmb} deals with the application of our scheme to certify bounds on quantum many-body ground-state observables. In Section \ref{sec:conclusion} we conclude our work and outline directions for further research.
\section{Semidefinite relaxations for non-commutative optimization problems} \label{sec:semidefinite}
We start by presenting the polynomial optimisation problems analysed in this work. Given a separable Hilbert space $\mathcal{H}$, let $\mathcal{B}(\mathcal{H})$ be the set of bounded operators acting on $\mathcal{H}$. 
We consider the problem of finding the  maximal eigenvalue $\lambda_{\max}$ of a polynomial $f(\underline{X})$ with rational coefficients in tuples of bounded, self-adjoint operator-valued variables $\underline{X} = \{X_1, ..., X_n\}$ over a feasible set given by
\begin{equation*}
\{\underline{X} \subset \mathcal{B}(\mathcal{H}) \, |\, h_i(\underline{X}) = 0, g_i(\underline{X}) \succeq 0;\, h_i\in H,\,g_i\in G\},
\end{equation*}
which is characterized by a set of polynomial inequality and equality constraints $G = \{g_i(\underline{X} ) \}_{i=1,..,m_G}$ and $H = \{h_i(\underline{X} )\}_{i=1,..,m_H}$, respectively. The dimension of the variables $\underline{X}$ is unrestricted, allowing for feasible solutions acting on a Hilbert space $\mathcal{H}$ of arbitrary dimension — a key feature of this framework, enabling the treatment of device-independent scenarios in quantum information science.

Without loss of generality, only maximization problems are considered in this section. The generic optimization problem then reads as
\begin{align}
    &\max_{\ket{\psi},\mathcal{H},\underline{X}\subset \mathcal{B}(\mathcal{H})} 
    \qquad \bra{\psi} f(\underline{X}) \ket{\psi}  \label{eq:originalprob} \\
    &\qquad\mathrm{s.t.}\qquad \quad \quad \, \, h_i(\underline{X}) = 0, \qquad i=1,...,m_H, \nonumber \\
    &\qquad \qquad \qquad \quad \, \, \, \, \,g_i(\underline{X}) \succeq 0, \qquad i= 1,...,m_G. \nonumber
\end{align}
The optimization can be interpreted as follows: We maximize over all possible (finite and infinite dimensional) Hilbert spaces $\mathcal{H}$, the normalized states therein, $\ket{\psi} \in \mathcal{H}$, as well as the tuples of bounded operators $\underline{X} \subset \mathcal{B}(\mathcal{H})$ acting upon it. This is equivalent to optimizing for the maximal eigenvalue of the polynomial $f(\underline{X})$ s.t. $\underline{X} \in K$. Generically, this problem is non-convex and computationally intractable \cite{laurent2008sums}, which is why studying a relaxed form of the problem is often necessary.

\subsection{Relaxations of polynomial optimization}

One starts by recasting the optimization as a problem of non-negativity
\begin{align}
    &\min_{\lambda \in \mathbb{R},\mathcal{H},\underline{X}\subset \mathcal{B}(\mathcal{H})} 
    \qquad \lambda \label{eq:nonnegativity}\\
    &\qquad \, \, \, \,\mathrm{s.t.}  \qquad \lambda - f(\underline{X}) \succeq 0 \nonumber \\
    &\qquad \quad \, \, \,  \qquad \, \, \, h_i(\underline{X}) = 0, \qquad i=1,...,m_H, \nonumber \\
    &\qquad \quad \, \, \, \, \qquad  \, \, g_i(\underline{X}) \succeq 0, \qquad i=1,...,m_G, \nonumber
\end{align}
which is equivalent to \eqref{eq:originalprob} so far. Now, as the task of characterizing the set of non-negative polynomials $\mathcal{P}_K\langle\underline{X}\rangle$ on a semialgebraic set $K$ is again a computationally intractable task, a version of a central result by Helton and McCullough \cite{helton_positivstellensatz_2004} is used. To do so, we rephrase Problem \eqref{eq:nonnegativity} in the framework of operator theory.

Let $\mathbb{K} \in \{\mathbb{R}, \mathbb{C} \}$ denote the underlying field of real or complex numbers, respectively. We then consider all polynomial objects as members of the $*$-algebra $\mathbb{K}\langle \underline{X} \rangle$ spanned by $\underline{X}$, and equipped with the involution $*$, acting on $\mathbb{K}\langle \underline{X} \rangle$ by reversing the word order of monomials and conjugating their coefficients. 
Given $r \in \mathbb{K}\langle \underline{X} \rangle$, the  polynomial $r^* r$ is called a \textit{Hermitian square}. 
Sums of the form $\sum_i r_i^* r_i$ are called \textit{Sums-of-Hermitian-Squares} (SOHS), and \textit{Sums-of-Squares} (SOS) when $\mathbb{K} = \mathbb{R}$.

In addition, we define the quadratic module $\mathcal{K}$ associated to the feasible set $K$ as
\begin{align*}
\mathcal{K} = \{ p \in \mathbb{K}\langle \underline{X} \rangle \ | \, p = \sum_i r_i^* r_i + \sum_{i,j} u_{ij}^* g_i u_{ij} + \\ \sum_{i,j} v_{ij}^*h_iw_{ij}; \, \, r_i,u_{ij},v_{ij},w_{ij} \in \mathbb{K}\langle \underline{X} \rangle \}.
\end{align*}
A quadratic module is said to fulfill the \textit{Archimedean} property if and only if there exists some $C > 0$ s.t. $C - \sum_{i=1}^n X_i^2 \in \mathcal{K}$. This property guarantees boundedness of the feasible operator tuples, a crucial aspect for the asymptotic convergence of the hierarchy.

\begin{theorem}[Helton–McCullough Positivstellensatz \cite{helton_positivstellensatz_2004}]
\label{thm:heltonmc}
    Let $K$ be a basic semialgebraic set with associated Archimedean quadratic module $\mathcal{K}$. If $p \in \mathbb{K}\langle \underline{X} \rangle$ is positive definite on $K$, then $p \in \mathcal{K}$. 
\end{theorem}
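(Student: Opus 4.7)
The plan is to argue by contradiction via a Hahn--Banach separation argument in the real vector space $V$ of symmetric elements of $\mathbb{K}\langle \underline{X}\rangle$, followed by a GNS-type construction that produces an explicit feasible operator tuple on which $p$ fails to be positive. Suppose towards contradiction that $p$ is positive definite on $K$ but $p \notin \mathcal{K}$.

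The first step is to use the Archimedean condition to make $V$ an order-unit space with order unit $1$. From $C - \sum_i X_i^2 \in \mathcal{K}$ it follows by standard manipulation that for every $a \in V$ there exists $\lambda > 0$ with $\lambda \cdot 1 \pm a \in \mathcal{K}$; this defines a seminorm $\|a\| = \inf\{\lambda > 0 : \lambda \cdot 1 \pm a \in \mathcal{K}\}$ with respect to which $\mathcal{K}$ is a convex cone with non-empty interior around $1$. After passing to the Archimedean closure of $\mathcal{K}$ (which is still a proper cone, precisely because $p$, or rather its strict version $p - \varepsilon \cdot 1$, is positive on $K$), Hahn--Banach gives a linear functional $L : V \to \mathbb{R}$ with $L(q) \geq 0$ for all $q \in \mathcal{K}$, $L(1) = 1$, and $L(p) \leq 0$.

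The second step is the non-commutative GNS construction applied to $L$. Extending $L$ complex-linearly to $\mathbb{K}\langle \underline{X}\rangle$, the sesquilinear form $(a,b)\mapsto L(a^* b)$ is positive semidefinite because $L$ is non-negative on Hermitian squares. Quotienting by its kernel $N = \{a : L(a^* a) = 0\}$ and completing yields a Hilbert space $\mathcal{H}$ on which each $X_i$ acts by left multiplication; the Archimedean bound $C - \sum_i X_i^2 \in \mathcal{K}$ ensures that these operators are bounded and self-adjoint, so $\underline{X} \subset \mathcal{B}(\mathcal{H})$. Taking $\ket{\psi}$ to be the image of $1$ in $\mathcal{H}$, for any $\ket{\phi}$ represented by some $a \in \mathbb{K}\langle \underline{X}\rangle$ we have $\bra{\phi} g_i(\underline{X}) \ket{\phi} = L(a^* g_i a) \geq 0$, since $a^* g_i a \in \mathcal{K}$; the equality constraints $h_i(\underline{X}) = 0$ follow analogously from $\pm v^* h_i w \in \mathcal{K}$. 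Thus $\underline{X}$ is feasible, yet $\bra{\psi} p(\underline{X}) \ket{\psi} = L(p) \leq 0$, contradicting positive definiteness of $p$ on $K$.

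The main obstacle, and the place where the Archimedean hypothesis does the real work, is the GNS step: in the non-commutative setting one must ensure that left multiplication by $X_i$ extends from the pre-Hilbert quotient to a \emph{bounded} operator on its completion, and that the resulting operators satisfy the inequality constraints in the operator-inequality sense. This is exactly the content supplied by $C - \sum_i X_i^2 \in \mathcal{K}$, together with the closure properties of $\mathcal{K}$ under sandwiching $a^* \cdot a$. A secondary technical point is that the separation must be performed with a small slack — using strict positivity $p - \varepsilon \cdot 1 \succ 0$ on $K$ — so that the non-strict inequality $L(p) \leq 0$ produced by Hahn--Banach still delivers a genuine contradiction at the end.
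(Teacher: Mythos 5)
The paper does not prove this theorem: it is stated as a cited result from Helton and McCullough and used as a black box throughout Section~\ref{sec:semidefinite}, so there is no paper-internal proof to compare against. On its own terms, your sketch follows the standard route for operator Positivstellens\"atze in the Archimedean setting — order-unitization from $C-\sum_i X_i^2 \in \mathcal{K}$, Eidelheit/Hahn--Banach separation of $p$ from the cone $\mathcal{K}$ to produce a state $L$ with $L(\mathcal{K})\ge 0$, $L(1)=1$, $L(p)\le 0$, and then a GNS construction turning $L$ into a feasible operator tuple and cyclic vector — and this is essentially the argument in the cited reference, so the proposal is correct in substance. One small remark: your ``secondary technical point'' about running the separation on $p - \varepsilon\cdot 1$ is not actually needed. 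Since $L(1)=1$, the GNS vector $\ket{\psi}$ (the image of $1$) is a unit vector, and positive definiteness of $p(\underline{X})$ as an operator already gives $\bra{\psi} p(\underline{X})\ket{\psi} > 0$ for any nonzero $\ket{\psi}$; this strictly contradicts the non-strict $L(p)\le 0$ produced by separation, so the slack is a spurious complication rather than a necessity. The real technical content is, as you correctly identify, in verifying that left multiplication by the $X_i$ extends to bounded self-adjoint operators on the completion and that the constraints pass to operator inequalities, both of which rest on the Archimedean bound and closure of $\mathcal{K}$ under $a \mapsto b^* a b$.
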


Using this result, we rephrase the non-negativity condition as
\begin{align}
    &\inf_{\lambda \in \mathbb{R}} 
    \qquad \lambda \label{eq:inftqm} \\
    &\qquad \, \, \, \,\mathrm{s.t.}  \qquad \lambda - f \in \mathcal{K}. \nonumber
\end{align}
The next step consists of relaxing this problem by not considering all polynomials in $\mathcal{K}$, but only those up to a certain degree.  To do so, we restrict the summands of $p \in \mathcal{K}$ to have maximal degree $2d$, which defines $\mathcal{K}_{2d}$, the \textit{truncated} quadratic module of degree $2d$. This allows one to define the $d$-th level of the non-commutative SOHS hierarchy, which is dual to the NPA hierarchy:
\begin{align}
    &\lambda_d = \inf_{\lambda \in \mathbb{R}} 
    \qquad \lambda \label{eq:truncated}\\
    &\qquad \qquad \, \, \, \,\mathrm{s.t.}  \qquad \lambda - f \in \mathcal{K}_{2d}, \nonumber
\end{align}
where the minimal level for which the relaxation can be run is given by 
\begin{equation*}
d_{\min} = \lceil\max\{ \mathrm{deg}(f), \mathrm{deg}(h_i), \mathrm{deg}(g_i) \}/2\rceil.
\end{equation*}
Now, obviously, $\mathcal{K}_{2d} \subseteq \mathcal{K}_{2d+2} \subseteq...\subseteq \mathcal{P}_K\langle \underline{X}\rangle$, which implies $\lambda_{d_{\min}} \geq \lambda_{d_{\min}+1} \geq \cdots \geq \lambda_{\max}$, yielding a series of monotonically decreasing upper bounds $\lambda_d$ to the original problem. 

Most conveniently, each membership certificate of $\mathcal{K}_{2d}$ can be written as a matrix positivity condition. Such optimization problems take the form of an SDP, enabling the use of a well developed toolbox for convex optimization. We note here that, in contrast, checking membership of $\mathcal{K}$ is computationally intractable, as no efficient numerical methods are known for this task. 


It has been shown in \cite{navascues_convergent}, see also~\cite{npa,moment}, that under the Archimedean condition this series of upper bounds $\{ \lambda_d\}_{d_{min}}^\infty$ converges asymptotically, i.e., $\lambda_{\infty} = \lambda_{\max}$. However, there are instances where the relaxations converge at a finite order, i.e., $\exists d \in \mathbb{N}: \lambda_d = \lambda_{\max}$, where the occurrence is tied to a so-called \textit{flatness} or \textit{rank loop} condition on the moment matrix. For a more detailed discussion of the theory of non-commutative semidefinite relaxations, the reader is directed to \cite{burgdorf_optimization_2016}.


Now, many problems naturally occurring in quantum theory exhibit polynomial equality constraints.  One can make use of this fact to significantly reduce the size of the underlying SDP by working over the associated quotient space, e.g. comparing equivalence classes up to the constraining equality relations: 
Let $\mathcal{I} = \left\{ \sum_i v_i h_i w_i\, |\, v_i, w_i \in \mathbb{K}\langle \underline{X} \rangle; h_i \in H \right\}$ denote the polynomial \textit{two-sided ideal} generated by $H$. Given a monomial order, any polynomial can be reduced to its unique equivalence class modulo the generators of the ideal. The resulting polynomial is called the \textit{normal form} $\mathcal{N}(\cdot)$ of the original one, where the set of normal forms of monomials in $\mathbb{K}\langle \underline{X}\rangle$ forms a canonical basis of the quotient space $\mathbb{K}\langle \underline{X}\rangle / \mathcal{I}$.
It is then sufficient to optimize only over said equivalence classes in the degree-$d$ truncation of the quotient algebra, $\mathbb{K}\langle \underline{X} \rangle_d / \mathcal{I}$:
\begin{align}
    &\lambda_d = \inf_{\lambda \in \mathbb{R}, f_i, u_i \in \mathbb{K}\langle\underline{X} \rangle/\mathcal{I} } 
    \qquad \lambda \label{eq:relaxation} \\
    & \, \, \, \, \,\mathrm{s.t.}  \quad \lambda - f  = \sum_i f_i^*f_i  + \sum_{i,j} u_{ij}^* g_i u_{ij}\quad \mathrm{mod}\,\mathcal{I}, \nonumber \\
    & \,  \quad \qquad \mathrm{deg}(f_i) \leq d, \,\, \mathrm{deg}(u_{ij}) \leq \left\lfloor \frac{d -\mathrm{deg}(g_i)}{2} \right\rfloor := d_i.\nonumber
\end{align}
 As many problems native to quantum theory exclusively feature equality constraints, we are going to restrict the following illustration to the case of no inequalities, $G = \emptyset$, for now.

To make the SDP underlying \eqref{eq:relaxation} more apparent, define a monomial basis $v_d$ of $\mathbb{K}\langle \underline{X} \rangle_d / \mathcal{I}$, with cardinality denoted by $s_{n,d}^{\mathcal{I}}$.  The problem can then be phrased as an optimization over PSD \textit{Gram matrices} $G_0 \in \mathbb{K}^{s_{n,d}^{\mathcal{I}}\times s_{n,d}^{\mathcal{I}}}$:
\begin{align}
    &\lambda_d = \inf_{\lambda \in \mathbb{R},G_0 \in \mathbb{K}^{s_{n,d}^{\mathcal{I}} \times s_{n,d}^{\mathcal{I}}}} 
    \qquad \lambda \label{eq:gramcompletion} \\
    &\qquad \qquad \, \, \, \,\mathrm{s.t.}  \qquad \lambda - f = v_d^* G_0 v_d \quad \mathrm{mod}\,\mathcal{I},  \nonumber \\
    & \, \qquad \qquad \qquad \qquad G_0 \succeq 0, \nonumber
\end{align}
where the unconstrained case is recovered by working over the trivial ideal $\mathcal{I}=\emptyset$. The optimization can be interpreted as follows: Some of the monomial coefficients of $f$ predetermine entries of $G_0$, while others are undetermined. When running the numerical SDP solver, it attempts to find the best pair $(\lambda, G_0)$ in the intersection of the PSD cone with the affine subspace spanned by the coefficients of $\lambda - f(\underline{X})$. 

The size of PSD matrices in the hierarchy grows exponentially in $d$, where in the unconstrained case it reads as
\begin{equation*}
    s_{n,d} = \mathrm{dim}\left(\mathbb{K}\langle \underline{X}\rangle_d\right)=\frac{n^{d+1 - 1}}{n -1},
\end{equation*}
which in general leads to only lower relaxation levels being numerically addressable. For constrained problems, working in the quotient algebra of dimension $\mathrm{dim}\left(\mathbb{K}\langle \underline{X}\rangle_d\right/\mathcal{I}) =s_{n,d}^{\mathcal{I}}<s_{n,d}$ often reduces the matrix sizes drastically. As mentioned earlier, improving the scalability of the NPA hierarchy, and SDPs in general is an active field of research, where the interested reader is directed to \cite{magron2023sparse} for a survey on scalability improvements by sparsity exploitation. 

 It is also possible to study this problem on its dual side, optimizing over the set of \textit{quantum moments} rather than non-negative polynomials~\cite{navascues_convergent,npa}. In that setting, the \textit{moment matrix} replaces the Gram matrix as the central PSD variable. Although the dual viewpoint is often used in the literature, we restrict our discussion to the primal side, as it is the more natural setting for algebraic certification. Note also that the works~\cite{navascues_convergent,npa} follow the opposite convention when defining the primal and dual problem.

While the above framework can be used to address generic non-commutative optimization problems of the form \eqref{eq:originalprob}, polynomial optimization does not constitute the exclusive use of SDPs in quantum theory. SDPs naturally occur, not exhaustively, in applications such as entanglement detection through the Doherty-Parrilo-Spedalieri hierarchy \cite{dps}, the quantum marginal problem \cite{marginal}, tomography problems \cite{meas_tomography}, or quantum network correlations \cite{qinflation}. In many cases, the SDP structure of the problem arises from the PSD nature of the underlying quantum states or measurements.

\section{Extracting rational bounds} \label{sec:rational_bounds}
As anticipated in Section \ref{sec:outline}, SDP solvers  aiming to certify non-negativity of $\lambda - f$ via the numerical tuple $(\lambda, G_0)$ might issue inaccurate solutions due to several reasons, such as inexact termination, failure of strict feasibility, or poor conditioning of the problem. Although in theory SDPs are exactly solvable by quantifier elimination \cite{sdp_exact}, this method has doubly exponential complexity in the number of variables \cite{quant_elim}. In contrast, if only approximate solutions are required, SDPs can be solved in polynomial time using convex optimization methods such as the ellipsoid or interior-point algorithms \cite{sdp_poly}. Since scalability is crucial for many problems in quantum theory, this work focuses on post-processing of the numerical data rather than aiming for exact solutions of the underlying SDPs.

Various works discuss the appropriate handling of obtaining rational Sum-of-Squares (SOS) decompositions from inexact data. 
For univariate polynomials with real or complex coefficients, there are dedicated algorithms approximating complex roots of perturbed positive polynomials \cite{magron2019algorithms,magron2022exact}, or solving norm equations together with using Euler's identity \cite{koprowski2023pourchet}. 
In the multivariate case, one can either rely on the so-called \textit{rounding and projection} \cite{peyrl2008computing} (see also \cite{kaltofen_exact_2012}) or perturbation-compensation algorithms \cite{magron2021exact}, implemented in the RealCertify package \cite{magron2018realcertify}. 
For constrained compact problems, one can apply rigorous interval arithmetic techniques to obtain a certified upper bound \cite{jfr14}. 
Similarly to that setting, tailored methods to obtain certified bounds to SDP relaxations to the optimal power-flow problems have been presented in \cite{certified_bounds_acopf}. 
Other algorithms have been designed for specific problems with finitely many minimizers \cite{magron2023sum,baldi2024effective}. 
Such methods provide rational SOS decompositions, together with a rational upper bound of the problem solution. 
If one is interested in certifying global optimality, one can rely  on the recently developed high-precision solver \cite{leijenhorst2024solving} that provides exact SOS decompositions, possibly with non-rational coefficients. On the other hand, if one primarily seeks highly accurate numerical bounds rather than exact certificates, one may use arbitrary-precision SDP solvers (e.g., SDPA-GMP \cite{sdpa-gmp}) at the expense of a larger computational cost. However, the output remains a numerical approximation and does not constitute a proof unless further validated by an adequate certification method.
The rounding and projection algorithm also has been extended to unconstrained non-commutative problems in \cite{rational_sohs} to obtain rational SOHS decompositions. 
Our main algorithmic contribution is based on this rounding and projection scheme, in order to handle constraints for maximal eigenvalue problems, as well as sparse/symmetric input data. 

Our method offers a complementary guarantee in situations where rigorous bounds to a given problem are of primary interest, independently of the convergence of the original problem. 

We now present our constructive procedure that converts a floating-point SDP solution of Problem \eqref{eq:gramcompletion} into a formally exact rational certificate. 
Starting with the case of an unconstrained NPO $(G = H = \mathcal{I}=\emptyset)$ let $(\lambda_d, G_0)$ be the numerical tuple of upper bound $\lambda_d \in \mathbb{R}$ and Gram matrix $G_0 \in \mathbb{K}^{s_{n,d} \times s_{n,d}}$, as issued by the SDP solver of one's choice. The numerical certificate, suggesting $\lambda_d > \lambda_{\max}$ then reads
\begin{equation}
\label{eq:inexactcertificate}
    \mathrm{LHS} := \lambda_d - f \approx v_d^*G_0 v_d.
\end{equation}
The method to obtain a rational certificate from (\ref{eq:inexactcertificate}) consists of three algebraic steps: (1) rounding, (2) projection, and (3) lifting,  each preserving or restoring feasibility within the truncated quadratic module $\mathcal{K}_{2d}$. A geometric picture of the procedure is displayed in Figure \ref{fig:round-project}. After the first two steps, rounding and projection, the solver’s tuple $(\lambda_d, G_0)$  has been mapped  onto the affine subspace $\mathcal{L}$ consistent with the coefficients of $\lambda_d - f$, thereby enforcing exact equality in Eq.~\eqref{eq:inexactcertificate}. We call the obtained tuple $(\mathcal{P}(\tilde{G}_0),\tilde{\lambda}_d)$ a \textit{pre-certificate}. In the case that $\mathcal{P}(\tilde{G}_0) \succeq 0$, we established a proper SOHS certificate, proving $\tilde{\lambda}_d$ as an exact upper bound. If the resulting rational matrix $\mathcal{P}(\tilde{G}_0)$ is not PSD, in step (3) we lift it back into the PSD cone by adding a rational correction justified by the constraint structure. This lifted matrix $\mathcal{P}_+(\tilde{G}_0)$ then certifies a more agnostic bound $\lambda_d^{\mathrm{rat}} > \tilde{\lambda}_d$, which holds exactly. That is, step (3)is only necessary if $\mathcal{P}(\tilde{G}_0)$ is not PSD. However, in some situations, one can still apply step (3) when $\mathcal{P}(\tilde{G}_0) \succeq 0$ to further tighten the bound $\tilde{\lambda}_d$, as it is shown later.

\begin{figure}[ht]
  \centering
  \includegraphics[width=0.7\columnwidth]{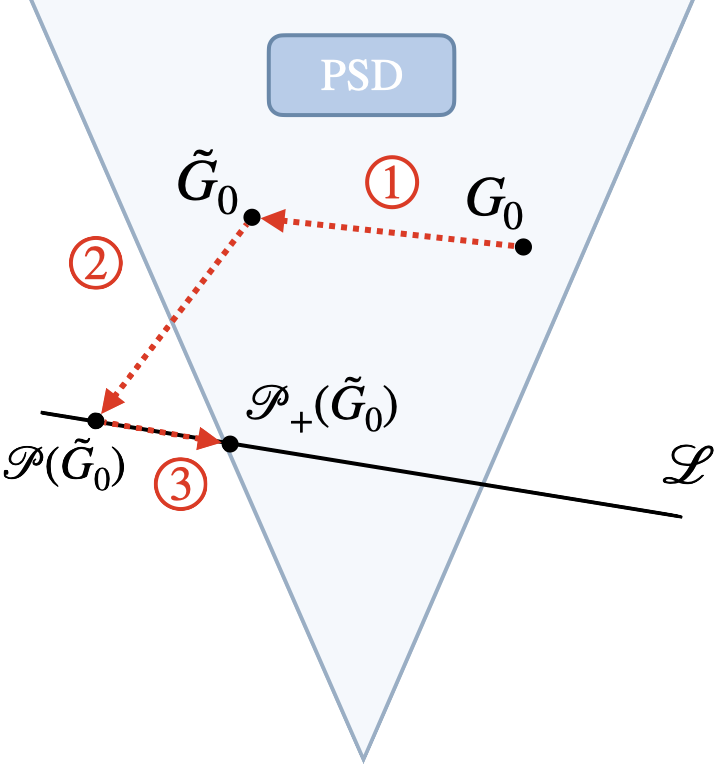}
  \caption{The geometric picture of our procedure. (1) $G_0$ is rounded to a rational  $\tilde{G}_0$. (2) $\tilde{G}_0$ is projected onto $\mathcal{L}$, possibly leaving the PSD cone. (3) $\mathcal{P}(\tilde{G}_0)$ is lifted back into the PSD cone, certifying $\lambda_d^{\mathrm{rat}}$ as an upper bound.}
  \label{fig:round-project}
\end{figure}

In the rest of the section, we provide a rigorous formulation of the three steps in the process, both for problems with and without equalities.

\subsection{Round and Project}

To obtain the rational tuple $(\mathcal{P}(\tilde{G}_0),\tilde{\lambda}_d)$ in the unconstrained case, one can make use of the \textit{Round and Project} procedure from \cite{rational_sohs}: Fix a given rounding precision $\eta$ and start by rounding the numerical data $(\lambda_d, G_0)$ entry-wise to rational numbers $(\tilde{\lambda}_d, \tilde{G}_0)$, where established methods such as \textit{continued fractions} can be used. In the rest of the work, $\tilde{A}$ represents the entrywise rationalized form of the object $A$, where rationalization on polynomials is understood as acting on their coefficient vector in the monomial basis. We denote by $\mathbb{Q}$ the rational subfield of $\mathbb{K}$ and by $\mathbb{Q}\langle \underline{X}\rangle$ the associated rational polynomial algebra (real or complex).

Now, as mentioned earlier, since $f$ is assumed to have rational coefficients, it constrains the entries of $G_0$ corresponding to its support. To recover this relationship after rounding, one has to project the entries of $\tilde{G}_0$ back into the affine space $\mathcal{L}$ defined through the coefficients of $\tilde{\mathrm{LHS}}:=\tilde{\lambda}_d - f$ via \cite{rational_sohs}:
{\small
\begin{align}
\label{eq:projection}
    \mathcal{P}(\tilde{G}_0)_{\alpha, \beta} = (\tilde{G}_0)_{\alpha, \beta} - \frac{1}{n(\alpha, \beta)}\sum_{\gamma^*\delta \sim \alpha^* \beta} (\tilde{G}_0)_{\gamma, \delta} - \mathrm{\tilde{LHS}}_{\alpha^*\beta},
\end{align}}
which can be expressed compactly with a rational correction matrix $\Delta$ as $\mathcal{P}(\tilde{G}_0) = \tilde{G}_0 + \Delta$.
The tuple $(\alpha,\beta)$ refers to the indexing with respect to monomials $\alpha, \beta \in \mathbb{K}\langle \underline{X} \rangle_d$, and $p_{\alpha^*\beta}$ refers to the monomial coefficient of $\alpha^*\beta$ in $p$. Intuitively, this projection enforces exact algebraic consistency between the rounded Gram matrix and the target polynomial by averaging over equivalent monomial pairs.
Here, $n(\alpha,\beta)$ accounts for the number of such pairs:
\begin{equation*}
    n(\alpha,\beta) = \{(\gamma,\delta) \,|\, \gamma^*\delta = \alpha^*\beta\}.
\end{equation*}

When considering problems with equalilty constraints, this projection can be adapted by working over the quotient algebra $\mathbb{K}\langle \underline{X}\rangle /\mathcal{I}$ by enforcing coefficient equality of \textit{normal forms} $\mathcal{N}(\alpha^*\beta)$ of monomials $\alpha^*\beta$, i.e., their equivalence classes in the quotient algebra. The goal is again to find a rational correction matrix $\Delta$ such that the corrected certificate holds exactly up to normal form reductions
\begin{equation*}
    \mathcal{N}(\mathrm{LHS}) = \mathcal{N}\left(v_d^*(\tilde{G}_0 + \Delta)v_d \right).
\end{equation*}
As several entries $(\alpha, \beta)$ map to the same normal form, there are in general infinitely many ways of constructing $\Delta$. The most natural choice is the correction with the least Frobenius deviation from the original rationalized Gram matrix: As the numerical Gram matrices are PSD up to the solver precision, deviating the least possible from them will minimize the distance to the PSD cone after projection. The next Lemma provides the formal construction of this projection.

\begin{lemma}[Frobenius optimal Gram projection]
\label{thm:frobopt}
For a given optimization problem (\ref{eq:gramcompletion}), let $\alpha^*\beta$
be the monomials corresponding to the basis elements $(\alpha,\beta)$ of the Gram matrix $G_0$. By construction, they form a basis of $\mathbb{K}\langle \underline{X}\rangle_{2d}$.
Let $t$ index the basis of equivalence classes in $\mathbb{K}\langle \underline{X}\rangle_{2d}/\mathcal{I}$. The associated normal form operation can then be expressed as
\begin{align*}
    \mathcal{N}: \mathbb{K}\langle \underline{X} \rangle_{2d} &\rightarrow \mathbb{K}\langle \underline{X}\rangle_{2d} / \mathcal{I} \\
    \alpha^*\beta &\mapsto \sum_t n_{\alpha, \beta}^t t
\end{align*}
where $n_{\alpha, \beta}^t$ is the coefficient with which the word $\alpha^*\beta$ reduces to $t$ under $\mathcal{N}$. Define the rational numerical certificate residual coefficients $r_t$ per reduced word $t$ as
\begin{align*}
    r_t &= \mathcal{N}(\tilde{\mathrm{LHS}})_t - \mathcal{N}\left(\sum_{\alpha,\beta}v_{d,\alpha}^*(\tilde{G}_0)_{\alpha, \beta}v_{d,\beta}\right)_t \\
    &= \mathcal{N}(\tilde{\mathrm{LHS}})_t - \sum_{\alpha,\beta}n_{\alpha,\beta}^t(\tilde{G}_0)_{\alpha, \beta}.
\end{align*}
Let $\Xi$ be the matrix with entries
\begin{equation*}
    (\Xi)_{t,s} := \sum_{\alpha, \beta}n_{\alpha,\beta}^tn_{\alpha,\beta}^s.
\end{equation*}
If $\Xi$ is non-singular, the following projection $\mathcal{P}(\tilde{G})$ yields exact equality between the LHS and RHS:
\begin{align}
\label{eq:froboptproj}
    \mathcal{P}(\tilde{G}_0)_{\alpha,\beta} & = (\tilde{G}_0)_{\alpha,\beta} + \Delta_{\alpha,\beta} \nonumber\\&= (\tilde{G}_0)_{\alpha,\beta} + \sum_t (\Xi^{-1}r)_t n_{\alpha,\beta}^t.
\end{align}
Furthermore, it is the optimal projection $\mathcal{P}$ in terms of Frobenius deviation 
\begin{equation*}
    \|\mathcal{P}(\tilde{G}_0) - \tilde{G}_0 \|^2_F = \sum_{\alpha,\beta} |\left(\mathcal{P}(\tilde{G}_0) - \tilde{G}_0\right)_{\alpha,\beta} |^2
\end{equation*}
from the rounded numerical Gram matrix $\tilde{G}_0$.
\end{lemma}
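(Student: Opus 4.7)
The plan is to recognize this as a classical minimum-Frobenius-norm problem subject to linear equality constraints. First, I would re-express the target equality $\mathcal{N}(\tilde{\mathrm{LHS}}) = \mathcal{N}(v_d^*(\tilde G_0 + \Delta)v_d)$ as a system of linear constraints on $\Delta$. Distributing the normal-form map through the quadratic form $v_d^*(\tilde G_0+\Delta)v_d = \sum_{\alpha,\beta}v_{d,\alpha}^*(\tilde G_0+\Delta)_{\alpha,\beta}v_{d,\beta}$ and using $\mathcal{N}(\alpha^*\beta)=\sum_t n_{\alpha,\beta}^t t$, the requirement that the $t$-th reduced coefficient agree with $\mathcal{N}(\tilde{\mathrm{LHS}})_t$ becomes the linear equation $\sum_{\alpha,\beta} n_{\alpha,\beta}^t \Delta_{\alpha,\beta} = r_t$, one per basis element $t$ of $\mathbb{K}\langle \underline{X}\rangle_{2d}/\mathcal{I}$, with $r_t$ as defined in the statement. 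Thus the feasible set for $\Delta$ is an affine subspace whose defining linear map has Gramian exactly $\Xi$.

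Next, I would apply the standard Lagrange-multiplier argument to the minimum-norm problem. Stationarity of the Lagrangian
$$L(\Delta,\mu) = \tfrac{1}{2}\|\Delta\|_F^2 - \sum_t \mu_t\Bigl(\sum_{\alpha,\beta} n_{\alpha,\beta}^t \Delta_{\alpha,\beta} - r_t\Bigr)$$
yields the Ansatz $\Delta_{\alpha,\beta} = \sum_t \mu_t\, n_{\alpha,\beta}^t$ (using Wirtinger calculus if $\mathbb{K}=\mathbb{C}$, or equivalently splitting real and imaginary parts, since the $n_{\alpha,\beta}^t$ are rational). Substituting into the constraints produces the linear system $\Xi\mu = r$, which by the non-singularity hypothesis admits the unique solution $\mu=\Xi^{-1}r$. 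Back-substitution gives exactly the formula \eqref{eq:froboptproj}.

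I would then verify the two claims separately. For the exact-equality claim a direct computation suffices:
$$\sum_{\alpha,\beta} n_{\alpha,\beta}^s \Delta_{\alpha,\beta} = \sum_t (\Xi^{-1}r)_t \sum_{\alpha,\beta} n_{\alpha,\beta}^s n_{\alpha,\beta}^t = \sum_t \Xi_{s,t}(\Xi^{-1}r)_t = r_s,$$
so the residual at every reduced word $t$ is cleared and $\mathcal{N}(\tilde{\mathrm{LHS}}) = \mathcal{N}(v_d^*\mathcal{P}(\tilde G_0)v_d)$. For Frobenius optimality, convexity of $\|\Delta\|_F^2$ and linearity of the constraint set make the first-order conditions sufficient, so the stationary point is the unique global minimiser on the affine feasible set.

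The main obstacle I expect is bookkeeping rather than a deep difficulty: tracking how normal-form reduction distributes across $v_d^*\Delta v_d$, and making sure that Hermiticity of the corrected Gram matrix is preserved. The latter follows because $\alpha^*\beta$ and $\beta^*\alpha$ are related by the involution $*$, so the coefficients $n_{\alpha,\beta}^t$ pair up consistently and the Ansatz for $\Delta$ is automatically Hermitian once $\mu$ is chosen real (or Hermitian-symmetric in the complex case). Finally, I would comment that non-singularity of $\Xi$ is equivalent to linear independence of the constraint vectors $(n_{\alpha,\beta}^t)_{\alpha,\beta}$ across $t$, i.e., to the usual well-posedness condition for the least-squares problem; generically this holds because the normal forms $t$ are chosen as a basis of $\mathbb{K}\langle \underline{X}\rangle_{2d}/\mathcal{I}$.
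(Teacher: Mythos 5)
Your proposal is correct and follows essentially the same route as the paper's proof: set up the minimum-Frobenius-norm problem with linear equality constraints indexed by the reduced words $t$, invoke stationarity of the Lagrangian to get the ansatz $\Delta_{\alpha,\beta}=\sum_t \mu_t\,n_{\alpha,\beta}^t$, and solve $\Xi\mu=r$ under the non-singularity hypothesis. The added remarks on Hermiticity, Wirtinger calculus, and the meaning of $\Xi$'s non-singularity are sound but not part of the paper's (more terse) argument.
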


\begin{proof}[Proof of Lemma \ref{thm:frobopt}]
Let the assumptions in Lemma \ref{thm:frobopt} hold. We phrase the problem of finding the correction $\Delta$ with least Frobenius norm as a convex quadratic optimization problem:
\begin{align*}
    &\min_{\Delta \in \mathbb{Q}^{s_{n,d}^{\mathcal{I}}\times s_{n,d}^{\mathcal{I}}}} \frac{1}{2} \|\Delta\|_F^2 \\
    \mathrm{s.t.} \qquad  & \quad r_t = \sum_{\alpha,\beta} n_{\alpha, \beta}^t\Delta_{\alpha,\beta}, \quad \forall t,
\end{align*}
which can be solved using Lagrangian multipliers $\lambda_t$. Stationarity of the Lagrangian yields
\begin{equation}
\label{eq:proof2eq2}
    \Delta_{\alpha,\beta} = \sum_t \lambda_t n_{\alpha, \beta}^t.
\end{equation}
Plugging back into the constraints yields
\begin{equation}
\label{eq:proof2eq3}
    r_t = \Xi\lambda_t.
\end{equation}
If $\Xi$ is non-singular, using \eqref{eq:proof2eq2} and \eqref{eq:proof2eq3} one finds the global optimizer
\begin{equation}
    \Delta_{\alpha,\beta} = \sum_t (\Xi^{-1}r)_t n_{\alpha, \beta}^t.
\end{equation}
\end{proof}

Now, in many settings, such as Bell inequalities or ground states of quantum many body systems, the equality constraints defining the problem relate two monomials to each other. Hence, the associated normal forms map monomials onto monomials instead of polynomials. We call such normal forms and the associated ideals \textit{binomial.} In such cases, one finds a diagonal $\Xi \succ 0$ and the Frobenius optimal solution always exists:

\begin{corollary} \label{thm:binomial_proj}
    If $\mathcal{N}$ is binomial, the Frobenius ideal projection reduces to
    \begin{equation}
    \label{eq:binomialproj}
        \mathcal{P}(\tilde{G})_{\alpha,\beta} = \tilde{G}_{\alpha,\beta} + \frac{r_{\mathcal{N}(\alpha^*\beta)}}{n_\mathcal{I}(\alpha,\beta)}
    \end{equation}
    with 
    \begin{equation*}
        n_{\mathcal{I}}(\alpha,\beta) = \#\{(\gamma,\delta): \mathcal{N}(\alpha^*\beta) = \mathcal{N}(\gamma^*\delta) \}.
    \end{equation*}
\end{corollary}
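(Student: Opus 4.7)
The plan is to specialize the general Frobenius optimal projection of Lemma \ref{thm:frobopt} to the binomial setting. When $\mathcal{N}$ is binomial, every monomial $\alpha^*\beta$ reduces to a single basis monomial of the quotient with unit coefficient, so the reduction coefficients take the indicator form $n_{\alpha,\beta}^t = 1$ if $\mathcal{N}(\alpha^*\beta) = t$ and $n_{\alpha,\beta}^t = 0$ otherwise. First I would substitute this observation into the definition of $\Xi$ and into the general expression for $\Delta$.

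With the indicator structure, the product $n_{\alpha,\beta}^t \, n_{\alpha,\beta}^s$ vanishes identically for $t \neq s$, since a given pair $(\alpha,\beta)$ cannot simultaneously reduce to two distinct representatives. Hence $\Xi$ is diagonal, and its $t$-th diagonal entry counts the number of pairs $(\alpha,\beta)$ satisfying $\mathcal{N}(\alpha^*\beta) = t$, which by definition is exactly $n_\mathcal{I}(\alpha_0,\beta_0)$ for any representative pair in the class $t$. As long as each reduced word is hit by at least one pair, $\Xi \succ 0$, so the non-singularity hypothesis of Lemma \ref{thm:frobopt} is automatic in this regime and one may invert $\Xi$ entrywise.

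Finally I would evaluate $\Delta_{\alpha,\beta} = \sum_t (\Xi^{-1} r)_t \, n_{\alpha,\beta}^t$. The sum collapses to the single contribution $t = \mathcal{N}(\alpha^*\beta)$, and with $(\Xi^{-1})_{t,t} = 1/n_\mathcal{I}(\alpha,\beta)$ one obtains
\begin{equation*}
\Delta_{\alpha,\beta} = \frac{r_{\mathcal{N}(\alpha^*\beta)}}{n_\mathcal{I}(\alpha,\beta)},
\end{equation*}
which inserted into $\mathcal{P}(\tilde{G})_{\alpha,\beta} = \tilde{G}_{\alpha,\beta} + \Delta_{\alpha,\beta}$ is precisely \eqref{eq:binomialproj}.

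The only delicate point is purely bookkeeping: one must verify that $n_\mathcal{I}(\alpha,\beta)$ defined via equivalence of normal forms genuinely equals the diagonal entry $(\Xi)_{t,t}$ at $t = \mathcal{N}(\alpha^*\beta)$, and that the binomial generators relevant to quantum problems (projector-type relations such as $X_i^2 = X_i$, commutation rearrangements, or Pauli-algebra identities) indeed induce a normal form whose reduction coefficients are all equal to $1$, so that no stray signs or phases perturb the indicator structure. Once this correspondence is checked, the corollary follows immediately from Lemma \ref{thm:frobopt} without further analysis.
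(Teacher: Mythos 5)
Your proof is correct and arrives at the same result, but by a cleaner route than the paper's own proof. The paper re-runs the Lagrangian‐multiplier argument of Lemma \ref{thm:frobopt} from scratch in the binomial case, while you instead specialize the already-derived closed form $\Delta_{\alpha,\beta}=\sum_t(\Xi^{-1}r)_t\,n_{\alpha,\beta}^t$ by observing that the indicator structure $n_{\alpha,\beta}^t=\delta_{\mathcal{N}(\alpha^*\beta),t}$ makes $\Xi$ diagonal with $(\Xi)_{t,t}=n_\mathcal{I}(\alpha,\beta)$. This has a small bonus the paper leaves implicit: you verify that the non-singularity hypothesis of Lemma \ref{thm:frobopt} holds automatically here, since every $t$ in the quotient basis is hit by at least one pair (e.g.\ a representative with $\mathcal{N}(\alpha^*\beta)=t$), so $\Xi\succ 0$ without further assumption. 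Your closing caveat about unit reduction coefficients is the right thing to flag: the paper's notion of ``binomial'' tacitly presumes $\mathcal{N}(\alpha^*\beta)$ is a monomial with coefficient exactly $1$, which holds for the unipotency, projector, and commutation relations used in the paper's applications but would need re-examination for relations carrying phases.
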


\begin{proof}[Proof of Corollary \ref{thm:binomial_proj}]
Binomial normal forms can be expressed as 
\begin{equation*}
    \mathcal{N}: \alpha^*\beta \mapsto \delta_{\mathcal{N}(\alpha^*\beta),t} t
\end{equation*}
which allows us to rewrite the residuals $r_t$ as
\begin{equation*}
    r_t = \mathcal{N}(\tilde{\mathrm{LHS}})_t - \sum_{\alpha,\beta:\mathcal{N}(\alpha^*\beta) = t}\tilde{G}_{\alpha,\beta}.
\end{equation*}
Applying the method of Lagrangian multipliers yields
\begin{equation}
\label{eq:coreq2}
    \Delta_{\alpha,\beta} = \lambda_t \delta_{\mathcal{N}(\alpha^*\beta),t}.
\end{equation}
Plugging this into the constraints yields
\begin{equation}
    r_t = \sum_{\alpha,\beta:\mathcal{N}(\alpha^*\beta) = t} \lambda_t \Rightarrow \lambda_t = \frac{r_t}{n_{\mathcal{I}}(\alpha,\beta)},
\end{equation}
which, when used with \ref{eq:coreq2}, implies the claim.
\end{proof}

As (\ref{eq:projection}) is implied by (\ref{eq:binomialproj}), in what follows we are going to restrict our considerations to the quotient case without loss of generality.

The projection formula is also readily adapted to problems featuring inequality constraints. Recall that in that case, the numerical certificate reads
\begin{equation*}
    \lambda - f =\sum f_i^*f_i  + \sum_{i,j}u_{ij}^*g_iu_{ij} \quad \mathrm{mod}\,\mathcal{I},
\end{equation*}
which, when expressed through Gram matrices, features one matrix $G_0$ corresponding to the SOHS $\sum_i f_i^*f_i$, and $m_G$ matrices $\{G_i\}_{i=1}^{m_G}$ of sizes $s_{n,d_i}^{\mathcal{I}}$ corresponding to the so-called localizing terms $\sum_{i,j} u_{ij}^* g_i u_{ij}$. To deal with such constraints, one adapts the LHS from Equation \eqref{eq:inexactcertificate} to feature the inequality terms, and projects the rounded principal Gram matrix $\tilde{G}_0$ onto the rationalized LHS as before to match coefficients.


To do so, start by constructing the rational LHS featuring the inequality terms:
\begin{equation}
\label{eq:ineqratcert}
    \tilde{\mathrm{LHS}} := \tilde{\lambda}_d - f - \sum_i \sum_j \tilde{u}_{ij}^* g_i \tilde{u}_{i,j}.
\end{equation}
To obtain the rational polynomials $\tilde{u}_{ij}$ in the decomposition, one starts by rounding the numerical localizing Gram matrices $\{G_i\}_{i=1}^{m_G}$ to rationals $\{\tilde{G}_i\}_{i=1}^{m_G}$. Afterwards, to ensure a valid SOHS decomposition, project the $\tilde{G}_i$ into the PSD cone via an eigenvalue decomposition:
\begin{equation*}
    \mathcal{P}(\tilde{G}_i) = \tilde{U}_i\tilde{U}_i^T; \qquad \tilde{U}_i = \tilde{V}_i \mathrm{diag}\left (\max (\tilde{\sqrt{\lambda_{ij}}},0) \right)
\end{equation*}
with $\tilde{V}$ being the matrix of rationalized eigenvectors of $\tilde{G_i}$ and $\{\lambda_{ij}\}_{j=1,..,s_{n,d_i}}$ the corresponding eigenvalues. From these, the polynomials $\tilde{u}_{ij}$ are obtained as 
\begin{equation}
    \tilde{u}_{ij} = (\tilde{U}_i^T {{v}_d}_i)_j,
\end{equation}
where $ {{v}_d}_i$ denotes the monomial basis appropriate for the localizing matrix of $g_i$. Using Lemma \ref{thm:frobopt} to project $\tilde{G}_0$ onto $\tilde{\mathrm{LHS}}$ from Equation \eqref{eq:ineqratcert} yields a pre-certificate $\tilde{\mathrm{LHS}} = v_d^* \mathcal{P}(\tilde{G}_0)v_d$, and with $\mathcal{P}(\tilde{G}_0) \succeq 0$ it would certify membership of $\tilde{\lambda}_d -f $ in $\mathcal{K}_{2d}$ and hence $\tilde{\lambda}_d$ as an upper bound to $f$. This procedure also applies to moment inequality constraints of the form $\langle m_i\rangle \geq 0$, where the associated multiplier polynomials $\tilde{u}_{ij}$ are scalar valued (degree $0$).

Performing the Round and Project procedure, in the unconstrained or constrained case, results in an exact pre-certificate
\begin{equation}
\label{eq:exact_cert}
    \tilde{\mathrm{LHS}} = v_d^*\mathcal{P}(\tilde{G_0})v_d \quad \mathrm{mod}\,\mathcal{I},
\end{equation}
where this expression constitutes an exact algebraic identity, ensuring that any remaining infeasibility stems solely from the non-positivity of $\mathcal{P}(\tilde G_0)$.

As anticipated, in the best case, the resulting projected Gram matrix $\mathcal{P}(\tilde{G_0})$ is already PSD. This can only be the case if the rounded bound $\tilde{\lambda}_d$ is a valid upper bound to the global optimum, $\tilde{\lambda}_d \geq \lambda_{\max}$, and hence leaves us with the certified bound $\tilde{\lambda}_d$.

\subsection{Lifting positivity}

The remaining question is how to repair a pre-certificate \eqref{eq:exact_cert} whose Gram matrix was projected out of the PSD cone, rendering it vacuous. This particularly happens if $\tilde{\lambda}_d \ngeq \lambda_{\max}$. The next Theorem provides, under reasonably mild assumptions, an explicit, constraint-dependent lifting that guarantees a valid upper bound.

\begin{theorem}[Certified bounds to NPOs]
\label{thm:theorem2}
Let $\mathcal{P}(\tilde{G}_0)$ be the rounded and projected Gram matrix obtained from Lemma \ref{thm:frobopt} with  $\mu_{\mathrm{min}} \in \mathbb{Q}$ a lower bound to its spectrum, and let $s^{\mathcal{I}}_{n,d}$ denote the number of monomials in $n$ variables up to degree $d$ after normal form reduction. Assume that the underlying variables $\underline{X}$ are subject to one of the following polynomial constraints:
\begin{itemize}
    \item \label{itm:ball} Ball constraints: $1 \succeq \sum_i X_i^2$, 
    \item \label{itm:box} Box constraints: $1-X_i^2 \succeq 0\, \, \forall i$, 
    \item \label{itm:uni}Unipotency: $1-X_i^2 = 0\, \, \forall i$, 
    \item \label{itm:proj} Projection constraints: $X_i^2 = X_i\, \, \forall i$, 
\end{itemize}
then, the rational bound $\lambda^{\mathrm{rat}}_d \in \mathbb{Q}$ with
\begin{equation*}
    \lambda^{\mathrm{rat}}_d := \tilde{\lambda}_d + \delta_d = \tilde{\lambda}_d - \min\{\mu_{\mathrm{min}} ,0\}\cdot s^{\mathcal{I}}_{n,d}
\end{equation*}
constitutes a certified upper bound to the optimization problem, i.e.,  $\lambda^{\mathrm{rat}}_d \geq \lambda_{\max}$. 
\end{theorem}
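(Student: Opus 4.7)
The plan is to combine three ingredients: the exact pre-certificate identity delivered by the Round-and-Project step (Lemma~\ref{thm:frobopt}), the spectral shift $\mathcal{P}(\tilde G_0)\succeq\mu_{\min} I$, and a uniform operator-norm bound on the elements of the basis $v_d$ supplied by each listed constraint. After Round-and-Project, and after peeling off the localizing terms $\sum_{i,j}\tilde u_{ij}^* g_i\tilde u_{ij}$ in the constrained case (whose expectation on the feasible set is non-negative by construction), we obtain the exact algebraic identity
\[
\tilde{\lambda}_d - f \;-\;\sum_{i,j}\tilde u_{ij}^* g_i\tilde u_{ij} \;\equiv\; v_d^* \mathcal{P}(\tilde G_0)\, v_d \pmod{\mathcal{I}}.
\]
Evaluating on any feasible pair $(\ket{\psi},\underline X)$ therefore gives $\bra{\psi}(\tilde\lambda_d - f)\ket{\psi}\ge\bra{\psi}v_d^* \mathcal{P}(\tilde G_0)v_d\ket{\psi}$, since the stripped localizing terms are non-negative in expectation.

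Next I would perform the spectral shift $\mathcal{P}(\tilde G_0)=M+\mu_{\min}I$, where $M\succeq 0$ by hypothesis on $\mu_{\min}$. This splits the right-hand side as
\[
\bra{\psi}v_d^* M v_d\ket{\psi} + \mu_{\min}\,\bra{\psi}v_d^* v_d\ket{\psi}.
\]
The first summand is non-negative, because $M\succeq 0$ implies $v_d^* M v_d$ admits a (possibly irrational) SOHS decomposition and therefore has non-negative expectation on every state. Hence the whole argument reduces to an a priori upper bound on $\bra{\psi}v_d^* v_d\ket{\psi}=\sum_{w\in v_d}\|w\ket{\psi}\|^2$.

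The key step is a short case analysis showing that each of the four listed constraint families forces $\|X_i\|\le 1$ as an operator norm, and hence $\|w\ket{\psi}\|\le 1$ for every reduced monomial $w$ in $v_d$. For the ball constraint, $X_i^2\preceq\sum_j X_j^2\preceq I$; the box constraint is immediate; unipotency $X_i^2=I$ confines the spectrum of $X_i$ to $\{\pm 1\}$; and projectors $X_i^2=X_i$ confine it to $\{0,1\}$. In each case the normal-form reduction is binomial (monomials map to monomials), so the elements of $v_d$ remain genuine words in the $X_i$ and submultiplicativity of the operator norm yields $\bra{\psi}v_d^* v_d\ket{\psi}\le s^{\mathcal{I}}_{n,d}$. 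Consequently $\mu_{\min}\bra{\psi}v_d^* v_d\ket{\psi}\ge\min\{\mu_{\min},0\}\cdot s^{\mathcal{I}}_{n,d}$ (the two cases $\mu_{\min}\ge 0$ and $\mu_{\min}<0$ being handled uniformly by the $\min$), so
\[
\bra{\psi}(\tilde\lambda_d - f)\ket{\psi}\ \ge\ \min\{\mu_{\min},0\}\cdot s^{\mathcal{I}}_{n,d}.
\]
Rearranging and taking the supremum over feasible pairs yields $\lambda_{\max}\le\lambda^{\mathrm{rat}}_d$.

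I expect the principal obstacle to be verifying, constraint family by constraint family, that the quotient basis $v_d$ indeed consists of genuine words in $\underline X$, so that the submultiplicative bound $\|w\ket{\psi}\|\le 1$ is legitimate; this is precisely where the restriction to the four listed constraint types enters. Once this is in place, the rest of the proof is a chain of non-negativity estimates rather than an algebraic membership argument, which is why one can afford $\mu_{\min}\in\mathbb Q$ to be merely a rational lower bound on the spectrum of $\mathcal{P}(\tilde G_0)$ (computable, e.g., by Gershgorin disks or interval arithmetic) instead of the exact minimal eigenvalue.
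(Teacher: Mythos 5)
Your proof is correct, but it follows a genuinely different route from the paper's. The paper's argument is \emph{syntactic and constructive}: it invokes Lemma~\ref{thm:boundlowering}, which by an induction on monomial length produces an explicit rational constant-SOHS decomposition $\varepsilon\, s^{\mathcal{I}}_{n,d}=\varepsilon\, v_d^* v_d + \sum_k r_k^* r_k + \sum_i q_{ki}^* c_i q_{ki}\ (\mathrm{mod}\,\mathcal{I})$; adding this with $\varepsilon=-\mu_{\min}$ to the pre-certificate yields a new PSD Gram matrix $\mathcal{P}_+(\tilde G_0)=\mathcal{P}(\tilde G_0)-\mu_{\min}\mathds{1}+R_0$ and therefore an explicit membership $\lambda^{\mathrm{rat}}_d-f\in\mathcal{K}_{2d}$, i.e., a concrete rational certificate. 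Your argument is \emph{semantic}: you evaluate the pre-certificate on feasible $(\ket\psi,\underline X)$, drop the nonnegative localizing terms, split $\mathcal{P}(\tilde G_0)=M+\mu_{\min}I$ with $M\succeq 0$, and control the scalar $\bra\psi v_d^* v_d\ket\psi$ via $\|X_i\|\le 1$ and submultiplicativity. This proves the stated inequality $\lambda^{\mathrm{rat}}_d\ge\lambda_{\max}$ more directly and avoids the induction entirely, and your observation that the listed constraints are exactly those forcing uniform norm bounds on $\underline X$ (and that the corresponding ideal reductions are binomial, so the quotient basis consists of genuine words) is the correct reason those four families appear. What your route does \emph{not} deliver is the explicit rational SOHS decomposition for $\lambda^{\mathrm{rat}}_d-f$, which is the object the paper's ``certification'' framework actually exports (the rational Gram data $\mathcal{P}_+(\tilde G_0)$ and multiplier polynomials); in the context of the paper, proving the bound holds and producing a checkable certificate are not the same deliverable. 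A minor point worth flagging: your step dropping $\sum_{i,j}\tilde u_{ij}^* g_i\tilde u_{ij}$ requires that these multipliers were already projected into the PSD cone so that each $\tilde u_{ij}^*g_i\tilde u_{ij}$ genuinely has nonnegative expectation, which the paper ensures in its construction of the $\tilde u_{ij}$ via the rounded eigenvalue factorization; this should be stated explicitly rather than assumed.
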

Importantly, the condition on the existence of constraints is rather weak: In the non-locality scenario, Bell type expressions can be defined in terms of projectors, while in the quantum many-body setting, the underlying local spin operators naturally fulfill unipotency.

The geometric intuition behind the proof of Theorem \ref{thm:theorem2} is that the non-PSD Gram matrix $\mathcal{P}(\tilde G_0)$ is lifted by adding a multiple of the identity to the certificate, leading to a more agnostic bound $\lambda^{\mathrm{rat}}$. The allowed correction magnitude is guaranteed to be feasible within the quadratic module due to the following Lemma.

\begin{lemma}[Constant SOHS decomposition]
\label{thm:boundlowering}
    Let $\varepsilon >0$. If the variables $\underline{X}$ are subject to constraints as in Theorem \ref{thm:theorem2}, one can find the following rational, constant SOHS decomposition
    \begin{equation}
       \varepsilon s^{\mathcal{I}}_{n,d} = \varepsilon  v_d^* v_d + \sum_{k} r_{k}^* r_k + \sum_i q_{ki}^* c_i q_{ki} \quad \mathrm{mod}\,\mathcal{I} \label{eq:ballconstr}
    \end{equation}
    for some $r_k, q_{ki} \in \mathbb{Q}\langle \underline{X} \rangle_d$ and $v_d$ the monomial basis of
    $\mathbb{K}\langle \underline{X} \rangle_d/\mathcal{I}$.
\end{lemma}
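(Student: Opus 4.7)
The plan is to reduce the statement to a per-monomial claim. Since $v_d^* v_d = \sum_{\alpha \in v_d} \alpha^* \alpha$ with $|v_d| = s_{n,d}^{\mathcal{I}}$, one has
\[
\varepsilon s_{n,d}^{\mathcal{I}} - \varepsilon\, v_d^* v_d \;=\; \varepsilon \sum_{\alpha \in v_d} \bigl(1 - \alpha^*\alpha\bigr),
\]
so it suffices to show that for each basis monomial $\alpha = X_{i_1}\cdots X_{i_k}$ of degree $k \leq d$, the element $1 - \alpha^*\alpha$ lies in the truncated quadratic module $\mathcal{K}_{2d}$ modulo $\mathcal{I}$ with multipliers of degree at most $d$. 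The key algebraic tool is a universal telescoping identity:
\[
1 - \alpha^*\alpha \;=\; \sum_{j=1}^{k}\bigl(X_{i_{j+1}}\cdots X_{i_k}\bigr)^{*} \bigl(1 - X_{i_j}^2\bigr)\bigl(X_{i_{j+1}}\cdots X_{i_k}\bigr),
\]
which I would verify by a one-line induction on $k$ after writing $1 = (1 - X_{i_k}^2) + X_{i_k}\cdot 1 \cdot X_{i_k}$ and unfolding. Denoting $q_j := X_{i_{j+1}}\cdots X_{i_k}$, each term has factor degree $k-j \leq d-1$, so all resulting multipliers respect the degree budget.

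With this identity in hand, I would dispatch the four constraint regimes in turn. For box constraints, each summand is already of the form $q_j^*\, c_{i_j}\, q_j$ with $c_{i_j} = 1 - X_{i_j}^2$, a valid multiplier term. For ball constraints, I would insert $1 - X_{i_j}^2 = c + \sum_{l \neq i_j} X_l^2$, with $c = 1 - \sum_l X_l^2$ the single ball generator, so each summand splits into one ball-multiplier term plus a sum of Hermitian squares $(X_l q_j)^*(X_l q_j)$. For unipotency, $1 - X_{i_j}^2 \in \mathcal{I}$, so every telescoping summand vanishes mod $\mathcal{I}$ and $1 - \alpha^*\alpha \equiv 0$ trivially. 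For projection constraints, I would invoke $1 - X_{i_j}^2 \equiv 1 - X_{i_j}\pmod{\mathcal{I}}$ together with the idempotent identity $(1 - X_{i_j})^2 \equiv 1 - X_{i_j} \pmod{\mathcal{I}}$, rewriting each summand as a pure Hermitian square $\bigl((1 - X_{i_j})\, q_j\bigr)^{*}\bigl((1 - X_{i_j})\, q_j\bigr)$ modulo the ideal. Summing over $\alpha$ with scalar weight $\varepsilon \in \mathbb{Q}$ then yields the decomposition stated in the lemma, with all $r_k, q_{ki} \in \mathbb{Q}\langle \underline{X}\rangle_d$.

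The main obstacle is the projection case: unlike unipotency, the reduction $X_i^2 \equiv X_i$ does not send $1 - X_i^2$ to zero, and unlike box/ball the residual $1 - X_i$ is not a declared constraint generator but an equality consequence that must be re-expressed as a Hermitian square through the idempotent identity $(1-P)^2 \equiv 1 - P$. Once this manoeuvre is recognised, the rest is bookkeeping: verifying the telescoping identity, tracking monomial degrees in each $q_j$, and checking that all introduced coefficients remain rational, which they do since only the scalar $\varepsilon$ and integer combinations of the $X_i$'s enter.
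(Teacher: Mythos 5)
Your proof is correct and rests on the same core mechanism as the paper's — an inductive/telescoping factorization of $1 - \alpha^*\alpha$ with multipliers of the tail words — but is organized more economically. The paper runs a separate induction per constraint family, each with its own tailored base identity (e.g.\ $1 = X_j^2 + (X_j-1)^2 + 2(X_j - X_j^2)$ for projections, or $1 = X_j^2 + \sum_{i\neq j}X_i^2 + (1-\sum_i X_i^2)$ for balls). You instead prove the single box-type telescoping identity once and then convert each summand $q_j^*(1-X_{i_j}^2)q_j$ to the required form constraint class by constraint class: left as is for box, via the exact split $1-X_{i_j}^2 = (1-\sum_l X_l^2) + \sum_{l\neq i_j}X_l^2$ for ball, by vanishing modulo $\mathcal{I}$ for unipotency, and via the congruence $(1-X_{i_j})^2 \equiv 1-X_{i_j}^2 \pmod{\mathcal{I}}$ to obtain a pure Hermitian square in the projection case. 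This unification shortens the argument and exposes the shared combinatorial skeleton. The one trade-off is in the projection case: your reduction is intrinsically a congruence modulo $\mathcal{I}$, whereas the paper's exact polynomial identity also covers the variant where the projection relations are posed as inequalities $X_i - X_i^2 \succeq 0$ rather than equalities, a point the paper remarks on explicitly. Your degree bookkeeping checks out: each multiplier $q_j$ has length $k-j \leq d-1$ and each Hermitian-square factor has degree at most $d$, matching the lemma's requirement that $r_k, q_{ki} \in \mathbb{Q}\langle\underline{X}\rangle_d$.
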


Lemma~\ref{thm:boundlowering} can be proven by induction, as shown in Appendix \ref{sec:appendix}. Using this, we can prove Theorem \ref{thm:theorem2}:

\begin{proof}[Proof of Theorem \ref{thm:theorem2}]
Let the assumptions in Theorem \ref{thm:theorem2} hold. We are going to show that for $\lambda^{\mathrm{rat}}$ there exists some PSD gram matrix $\mathcal{P}_+(\tilde{G}_0)$ and polynomials $p_{ij}$ such that
\begin{align*}
    \lambda^{\mathrm{rat}} - f &= v_d^* \left( \mathcal{P}_+(\tilde{G}_0) \right) v_d  + \sum_{i,j}p_{ij}^*g_ip_{ij}\qquad \mathrm{mod}\,\mathcal{I} \\
    &\Leftrightarrow \lambda^{\mathrm{rat}} - f \in \mathcal{K}_{2d},
\end{align*}
which implies the claim.

Consider the exact pre-certificate \eqref{eq:exact_cert}. If $\mu_{\mathrm{min}}<0$ then adding \eqref{eq:ballconstr} with $\varepsilon = -\mu_{\min}$ to both sides of the certificate lifts $\mathcal{P}(\tilde{G}_0)$ back into the PSD cone as follows:
\begin{align*}
    &\tilde{\lambda}_d - \mu_{\min}s^{\mathcal{I}}_{n,d} - f - \sum_{i,j}\tilde{u}^*_{ij}g_i \tilde{u}_{ij} \\
    &= v_d^*\left(\mathcal{P}(\tilde{G}_0) -\mu_{\min}\mathds{1} \right)v_d + \sum_k r_k^*r_k + \sum_{i,k}q_{ki}^*c_iq_{ki}.
\end{align*}
Intuitively, this addition corresponds to lifting the spectrum of $\mathcal{P}(\tilde G_0)$ by $-\mu_{\min}>0$ while staying within the quadratic module, thereby restoring positivity without changing the algebraic identity of the certificate. 


Expressing the $r_k$ contributions with a PSD matrix $R_0$ of the same size, and joining terms corresponding to constraints $c_i$ yields
\begin{align*}
        \tilde{\lambda}_d - \mu_{\min}s^{\mathcal{I}}_{n,d} - f &= v_d^*\left(\mathcal{P}(\tilde{G}_0) - \mu_{\min}\mathds{1} + R_0 \right)v_d \\ &+ \sum_{i,j}p_{ij}^*g_i p_{ij},
\end{align*}
which implies the claim with
\begin{equation*}
   \mathcal{P}_+(\tilde{G}_0) =\left(\mathcal{P}(\tilde{G}_0) - \mu_{\min}\mathds{1} + R_0 \right)
\end{equation*}
depending on the constraints involved.
\end{proof}
It should be noted that a numerical bound may have been correct even if the corresponding Gram matrix $G_0$ is projected outside the PSD cone. Consequently, our method allows one to (a) recover valid bounds from possibly incorrect ones, and (b) validate correct bounds, but it does not provide a necessary certificate of invalidity for numerical bounds. 

The above argument only holds if $\mu_{\min}$ actually constitutes a rigorous lower bound to the spectrum of $\mathcal{P}(\tilde{G}_0)$. Unfortunately, the Abel-Ruffini theorem implies no closed form solution for the eigenvalues of generic square matrices of dimension larger than four, as they are defined as the roots of the associated characteristic polynomial. However, one can obtain a rational and tight bound on the lowest eigenvalue using interval arithmetic and arbitrary precision fields, and use that bound instead of the numerically computed minimal eigenvalue. For ease of notation, we are always assuming such rigorous bounds when dealing with minimal eigenvalues in this work. These computations can be performed with available software implementations such as the Julia libraries \texttt{Arblib.jl} \cite{arblib} or \texttt{Nemo.jl} \cite{Nemo2017}.

Theorem \ref{thm:theorem2} establishes that any numerical non-negativity certificate returned by an SDP solver can be transformed into an exact rational bound by a finite algebraic correction depending on the minimal eigenvalue defect and the problem size.  An important observation is that the bound correction term $\delta_d = -\min\{\mu_{\min},0\}\cdot s^{\mathcal{I}}_{n,d}$ is affected by two factors, (i) the distance from the projected $\mathcal{P}(\tilde{G}_0)$ to the PSD cone, and (ii) the size of the problem in terms of variables and relaxation order. While working with high precision solvers and rounding arithmetic can lead to smaller distances to the PSD cone, the term $s^{\mathcal{I}}_{n,d}$ ultimately dominates the bound correction, as it will be illustrated in later sections.

\subsection{Bound tightening of interior pre-certificates}

Now, in the case of the projected Gram matrix being positive-definite (PD), e.g. $\mathcal{P}(\tilde{G}_0)\succ0$, it is not necessary to perform the lifting step to restore feasibility. However, it is possible to leverage the positive eigenspectrum to obtain a \textit{tighter} bound $\lambda_d^{\mathrm{rat}}<\tilde{\lambda}_d$ by shifting the PSD matrix closer to the boundary of its cone. This can be done in two ways, (1) in the presence of unipotency constraints $X_i^2=1$, one can use a modification of Lemma \ref{thm:boundlowering} with $\varepsilon<0$, while in the other cases one can make use of a rank-1 PSD downgrade, as explained in the next Lemma.
\begin{theorem}[Certified tightening for interior pre-certificates]\label{thm:boundtightening}
    Assume that the Frobenius optimal projection \ref{thm:frobopt} leads to $\mathcal{P}(\tilde{G}_0)\succ 0$ lying in the interior of the PSD cone. Let $\mu_{\min}>0$ be a rational lower bound to the spectrum, and let 
    \begin{equation}
    \tau = \frac{1}{e_0^*\mathcal{P}(\tilde{G}_0)^{-1}e_0}
    \end{equation}
    where $e_0\in \mathbb{K}^{s_{n,d}^{\mathcal{I}}}$ denotes the coordinate vector selecting the constant monomial s.t. $e_0^*v_d = 1$. Then, a tighter bound $\lambda_d^{\mathrm{rat}}<\tilde{\lambda}_d$ can be obtained depending on the constraints involved: \\
    \item (i) If the variables are subject to unipotency constraints, $\{1-X_i^2 \}\subseteq H$, then $\lambda_d^{\mathrm{rat}} = \tilde{\lambda}_d - \mu_{\min}\cdot s_{n,d}^{\mathcal{I}}$ constitutes a rigorous upper bound to the optimization problem \\
    \item (ii) Independently of the constraints, $\lambda^{\mathrm{rat}}_d = \tilde{\lambda}_d - \tau$ constitutes a rigorous upper bound to the optimization problem, where $\tau\in[\mu_{\min},\mu_{\max}]$ with $\mu_{\max}$ an upper bound to the spectrum of $\mathcal{P}(\tilde{G}_0).$
\end{theorem}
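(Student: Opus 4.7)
The plan is to treat both parts by modifying the exact pre-certificate $\tilde{\lambda}_d - f = v_d^*\mathcal{P}(\tilde{G}_0) v_d \mod \mathcal{I}$ (and its inequality-augmented version) through a PSD downgrade of $\mathcal{P}(\tilde{G}_0)$ that preserves the algebraic identity while reducing the constant term on the LHS.

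For part (ii), which is independent of the constraints, I would exploit the fact that $e_0$ is defined so that $e_0^* v_d = 1$, hence the rank-one matrix $e_0 e_0^*$ satisfies $v_d^*(e_0 e_0^*) v_d = 1$ identically. Thus for any scalar $\tau$ one has the exact identity
\begin{equation*}
(\tilde{\lambda}_d - \tau) - f = v_d^*\bigl(\mathcal{P}(\tilde{G}_0) - \tau\, e_0 e_0^*\bigr) v_d \mod \mathcal{I}.
\end{equation*}
The task reduces to finding the largest admissible $\tau$ for which $\mathcal{P}(\tilde{G}_0) - \tau\, e_0 e_0^*\succeq 0$. Since $\mathcal{P}(\tilde{G}_0)\succ 0$ by assumption, a standard Schur-complement (or Sherman–Morrison) argument gives the sharp threshold $\tau_{\max} = 1/(e_0^*\mathcal{P}(\tilde{G}_0)^{-1} e_0)$, which is precisely the value in the statement. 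Rationality of $\tau$ follows from $\mathcal{P}(\tilde{G}_0)\in\mathbb{Q}^{s\times s}$ implying $\mathcal{P}(\tilde{G}_0)^{-1}\in\mathbb{Q}^{s\times s}$ (by Cramer's rule). The enclosure $\tau\in[\mu_{\min},\mu_{\max}]$ is then immediate from the Rayleigh–Ritz inequalities applied to $\mathcal{P}(\tilde{G}_0)^{-1}$ at the unit vector $e_0$, since the spectrum of $\mathcal{P}(\tilde{G}_0)^{-1}$ lies in $[1/\mu_{\max},1/\mu_{\min}]$.

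For part (i), under unipotency $X_i^2 = 1$ each basis monomial $\alpha$ in $v_d$ is a word in the $X_i$'s and therefore $\alpha^*\alpha$ reduces to the identity modulo $\mathcal{I}$ by telescoping from the middle outward. Summing over the $s_{n,d}^{\mathcal{I}}$ basis elements yields the exact identity $v_d^* v_d = s_{n,d}^{\mathcal{I}} \mod \mathcal{I}$ (a special instance of Lemma~\ref{thm:boundlowering} in which the auxiliary SOHS and constraint terms vanish). Subtracting $\mu_{\min} v_d^* v_d$ from the pre-certificate then gives
\begin{equation*}
(\tilde{\lambda}_d - \mu_{\min}\,s_{n,d}^{\mathcal{I}}) - f = v_d^*\bigl(\mathcal{P}(\tilde{G}_0) - \mu_{\min}\mathds{1}\bigr) v_d \mod \mathcal{I},
\end{equation*}
and since $\mu_{\min}>0$ is a rigorous lower bound to the spectrum, the bracketed matrix remains PSD, yielding the claimed certified upper bound. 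The inequality-constrained case follows by carrying the unchanged localizing terms through both sides of this manipulation, exactly as in the proof of Theorem~\ref{thm:theorem2}.

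The main subtlety I anticipate is not algebraic but arithmetic: in part (ii) one must actually compute $e_0^*\mathcal{P}(\tilde{G}_0)^{-1} e_0$ in exact rational arithmetic and control potential ill-conditioning of $\mathcal{P}(\tilde{G}_0)$, while in part (i) one must verify the unipotency reduction $\alpha^*\alpha \equiv 1$ for whichever normal-form basis is actually implemented, which can be delicate if the quotient is taken with respect to additional relations (e.g.\ commutation within parties in a Bell scenario). Neither obstacle affects the correctness of the argument, but both influence whether the tightening is numerically meaningful in practice.
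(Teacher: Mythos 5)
Your proposal is correct and follows essentially the same route as the paper's proof in Appendix~\ref{sec:tighteningproof}: for part (ii) the paper likewise uses the rank-one downdate $\mathcal{P}(\tilde{G}_0) - \tau\, e_0 e_0^*$ and characterizes the sharp threshold $\tau = 1/(e_0^*\mathcal{P}(\tilde{G}_0)^{-1}e_0)$ via the Schur complement of the block matrix $\bigl(\begin{smallmatrix}\mathcal{P}(\tilde{G}_0) & e_0\\ e_0^* & \tau^{-1}\end{smallmatrix}\bigr)$, and for part (i) it invokes the constant SOHS decomposition of Lemma~\ref{thm:boundlowering} with $\varepsilon = -\mu_{\min}$, noting that the $(1-X_i^2)$ terms lie in the ideal so that $\varepsilon\,s_{n,d}^{\mathcal{I}} = \varepsilon\,v_d^* v_d \bmod\mathcal{I}$ for any sign of $\varepsilon$, which is exactly the identity your telescoping observation $\alpha^*\alpha \equiv 1$ yields.
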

The proof for this Theorem can be found in Appendix \ref{sec:tighteningproof}. Applying this Theorem in the case of $\mathcal{P}(\tilde{G}_0)\succ 0$ lets one further tighten the numerical rounded bounds while certifying them. Note that as the numerical optima of the SDP lie on the boundary of the PSD cone, in the generic case the eigenspectrum of $\mathcal{P}(\tilde{G}_0)$ is not positive as the Frobenius distance between $\mathcal{P}(\tilde{G}_0)$ and $\tilde{G}_0$ is minimized. However, in numerically delicate instances where PSD solvers terminate in the interior of the cone, the above scenario does occur, as it is observed and interpreted in Section \ref{sec:qmb}. Also note that in the case of unipotency constraints one can apply either of the two tightening strategies, while procedure (i) carries an extra factor $s_{n,d}^{\mathcal{I}}$ which, for large problems, will typically result in a tighter bound than procedure (ii).

Concludingly, we established a rigorous method to obtain mathematical non-negativity proof from the solver's data and presented an adapted method for interior solutions where the algorithm provides certified bounds tighter than their numerical counterpart. 

The same above reasoning of rounding, projecting, and lifting, can be extended to structured relaxations such as sparsity or symmetry-adapted hierarchies, as presented in the next section.
\section{Rational certificates under sparsity and symmetry}
\label{sec:sparse}
There exist problem for which it is necessary to go to higher orders of the hierarchy to get tight bounds to the solution of the optimisation problem. In those cases, the exponential growth of the PSD matrices with the relaxation order often makes higher intractable in practice, especially if a large number of variables is involved. This bottleneck of semidefinite relaxations is of particular importance, for instance, in quantum many-body physics, as one is often interested in the behaviour of systems of large sizes. One way to be able to address such problems, is to employ a sparse version of the hierarchy, which relies on a sparse analogue of Theorem \ref{thm:heltonmc}. Applying the sparse instead of the standard, dense hierarchy has led to improvements in scalability by orders of magnitude, if the problem admits certain structure \citep{wang_sparse, groundstate}. In other cases, the symmetry of the problem can be exploited to achieve a block-diagonal form of the Gram matrix and thus reducing computational complexity. In this section, we adapt the procedure described above to get rigorous bounds for systems that are sparse or symmetric. First, we will focus on the sparse case.

\subsection{Hierarchies for sparse systems}

We start by giving a brief introduction to correlative sparsity and the associated SOHS hierarchy. The idea is that one can exploit a sparse \textit{correlation pattern} of the variables, to then construct several, smaller Gram (or moment) matrices, indexed by \textit{cliques} of variables that correlate among each other. This is in stark contrast to the dense hierarchy, where a single PSD block is indexed by monomials between all variables in the problem. Employing the sparse hierarchy leads to more PSD conditions of smaller sizes instead of one larger one. As the dominant computational cost for SDP solvers arises from the largest block in the PSD conditions, employing methods to reduce its dimension is crucial to improve the efficiency of SDP hierarchies.

To start, build the correlation graph $C = \{V,E \}$ with vertices $V = \underline{X}$ given by the operators in the problem, and edges $E$ between them whenever they occur together multiplicatively in the support of the objective $f$, or additively in the constraints in $G$, and $H$.

The next step to obtain the hierarchy consists in preprocessing the correlation graph $C$: A \textit{chordal graph} is a graph in which every cycle of length $k \ge 4$ contains a chord, i.e., an edge joining two non-adjacent vertices of the cycle. An important fact is that every non-chordal graph can be extended to be chordal by adding edges to $E$, which is called a \textit{chordal extension}. Note that there is often more than one way to render a given correlation graph chordal.

\begin{figure}[h]
  \centering
  \includegraphics[width=0.95\columnwidth]{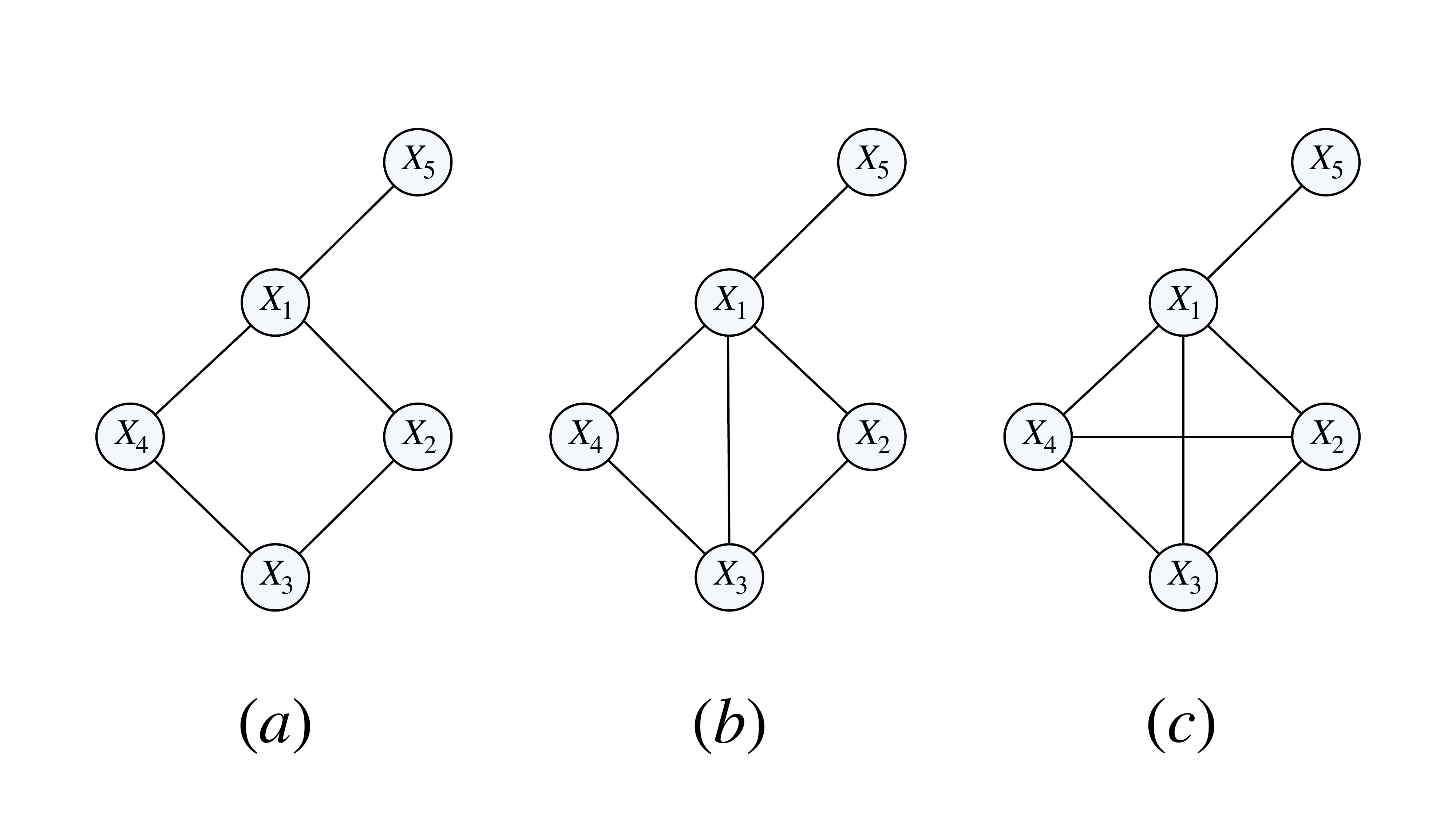}
  \caption{$(a)$ The correlation graph $C$ for $f(\underline{X}) = X_1X_5 + X_1 X_2 + X_2 X_3 + X_3 X_4 + X_4 X_1$. $(b)$ A minimal chordal extension of $C$. $(c)$ Another chordal extension of $C$.}
  \label{fig:chordal_ext}
\end{figure}

After performing the chordal extension, one identifies the so-called \textit{maximal cliques} in the graph: A clique is a subset $C_k \subset V$ such that every pair $X_i, X_j \in C_k$ satisfies $(i,j) \in E$, which is called maximal if no other clique $C_k'$ with $C_k \subsetneq C_k'$ exists. As an example, in Figure \ref{fig:chordal_ext} $(b)$, the maximal cliques can be identified as $C_1 = \{X_1, X_2, X_3 \}$, $C_2 = \{X_1, X_3, X_4\}$, and $C_3 = \{X_1, X_5\}$. When opting for the chordal extension $(c)$ instead, one can identify the two cliques $C_1 = \{ X_1, X_2, X_3, X_4\}$ and $C_2 = \{X_1, X_5\}$. We note here that, by construction of the correlation graph, each constraint features variables from a single clique. Hence, we define $J_k \subseteq [m_G]$ as the set of indices $i$ of inequality constraints contained in $C_k$. Furthermore, let $N_C$ denote the number of maximal cliques, $n_k = |C_k|$ the cardinality of the clique, and $\mathbb{K}\langle \underline{C_k} \rangle$ the $*$-algebra spanned by $C_k$.

Having constructed the chordal extension and identified the maximal cliques, one can define the $d$-th relaxation level of the sparse SOHS hierarchy:
\begin{align}
    &\lambda_d^{\mathrm{sp}} = \inf_{\lambda \in \mathbb{R},\, f_{k,\ell},\, u^{(k)}_{ij} \in \mathbb{K}\langle \underline{C_k} \rangle/\mathcal{I} } 
    \qquad \lambda \label{eq:relaxation_sparse} \\
     & \quad \, \, \, \, \mathrm{s.t.}  \, \, \, \, \, \, \, \, \,\lambda - f  = \nonumber \\
     & \sum_{k=1}^{N_c} \left( \sum_{\ell} f_{k,\ell}^{*} f_{k,\ell} \;+\; \sum_{i \in J_k} \sum_{j} (u^{(k)}_{ij})^*\, g_i \, u^{(k)}_{ij} \right) \, \, \mathrm{mod}\,\mathcal{I} \nonumber \\
    & \,  \, \qquad \mathrm{deg}(f_{k,\ell}) \leq d, \,\, \mathrm{deg}\!\left(u^{(k)}_{ij}\right) \leq \left\lfloor \frac{d - \mathrm{deg}(g_i)}{2} \right\rfloor. \nonumber
\end{align}
Similarly to the dense hierarchy \ref{eq:relaxation}, the first condition characterizes membership in the associated \textit{sparse truncated quadratic module}, 
which according to the sparse version \cite{klep_sparse} of Theorem \ref{thm:heltonmc} certifies positivity on $K$. The main difference to the dense case is that the SOHS splits over contributions per clique, i.e., summands $f_{k,l}$ and $u_{i,j}^{(k)}$ stemming from the clique-wise algebras $\mathbb{K}\langle \underline{C_k }\rangle/\mathcal{I}$. Hence, when expressing the SOHS in terms of PSD Gram matrices, one is left with $N_C$ principal Gram matrices $G_0^{(k)}$ indexed by the variables in $C_k$, as well as $m_G$ localizing-type Gram matrices associated to the $u_{ij}^{(k)}$, equivalently indexed by variables from $C_k$. This manifests the anticipated reduction in the maximal PSD block size by employing the sparse hierarchy. 

As mentioned, for a given correlations graph, there are different ways of constructing chordal extensions, which eventually introduces a tradeoff between computational complexity (smaller cliques) and tightness of the relaxation (larger cliques). Depending on the chosen chordal extension of the correlation graph, one obtains PSD constraints on larger or smaller blocks, where the dense hierarchy \ref{eq:relaxation} is recovered in the case of a single clique $C_1 = \underline{X}$. This introduces a second dimension to the hierarchy, i.e., the degree of the chordal extension, where more complete graphs are associated to larger computational complexity while providing tighter bounds. Intuitively, allowing for SOHS summands from a bigger space might let one find more elaborate expressions to certify a tighter bound. It has been shown in \cite{klep_sparse} that in the case of a chordal extension, one recovers the dense asymptotic convergence results, $\lambda_{\infty}^{\mathrm{sp}} = \lambda_{\max}$. However, at each finite relaxation order, the dense bounds lower bound the sparse bounds, $\lambda_d \leq \lambda_d^{\mathrm{sp}}$.

\subsection{Rational certificates for sparse systems}

Now that we have introduced the numerical sparse SOHS certificates, we are going to establish a rigorous way on how to extract a certified upper bound from the corresponding numerical data. The rounding–projection–lifting strategy developed for the dense hierarchy
extends naturally to such relaxations which exploit correlative sparsity. Let the sparse numerical certificate be denoted by the tuple $(\lambda_d^{\mathrm{sp}},\{G_0^{(k)}\}_{k=1}^{N_C},\{G_i^{(k)}\}_{i=1}^{m_G})$.

The numerical certificate (possibly over a quotient space) then reads as
\begin{equation}
\label{eq:numcertsparse}
    \mathrm{LHS} 
    \approx  \sum_{k} \left(v_d^{(k)}\right)^*G_0^{(k)}v_d^{(k)} \quad \mathrm{mod}\,\mathcal{I},
\end{equation}
where $v_d^{(k)}$ denotes the monomial basis vector of $\mathbb{K}\langle \underline{C_k}\rangle_d / \mathcal{I}$, and the LHS includes the inequality localizing contributions $G_i^{(k)}$ exactly as in the dense case. The normal form $\mathcal{N}$ can then be expressed on the monomial basis of each clique
\begin{equation*}
    \mathcal{N}: (\alpha,\beta)_k \mapsto \sum_t n_{\alpha\beta k}^t t,
\end{equation*}
where reductions occur in the global quotient $\mathbb{K}\langle \underline{X} \rangle/\mathcal{I}$, so reduced words $t$ can receive contributions from multiple cliques. We search again for rational correction matrices $\Delta^{(k)}$ such that the certificate holds exactly for all reduced word coefficients $t$:
\begin{equation}
    \mathcal{N}(\mathrm{LHS})_t
    = \mathcal{N}\left( \sum_{k} \left(v_d^{(k)}\right)^*(G_0^{(k)} + \Delta^{(k)})v_d^{(k)} \right)_t
\end{equation}
For that, one again defines the residuals per reduced words, now involving Gram matrices associated to all cliques:
\begin{equation}
\label{eq:sparseresiduals}
    r_t  =     \mathcal{N} \left(\mathrm{LHS}\right)_t
    - \sum_{\alpha,\beta,k} n_{\alpha \beta k}^t \left(G_0^{(k)} \right)_{\alpha,\beta},
\end{equation}
where $(\alpha,\beta,k)$ runs over monomial pairs $(\alpha,\beta)$ belonging to clique $k$. Following exactly the same argument as in Theorem \ref{thm:frobopt} but summing over tuples involving the clique index $(\alpha,\beta,k)$ let us derive the Frobenius optimal projection in the sparse case.

\begin{lemma}[Frobenius optimal sparse Gram projection]
\label{thm:sparsefrobopt}
 Let $(\alpha^*\beta)_k \in \mathbb{K}\langle \underline{X} \rangle_{2d}$ be the word corresponding to indices $(\alpha,\beta)$ of Gram matrix $G_0^{(k)}$.
Let
\begin{equation*}
    (\Xi)_{t,s} = \sum_{\alpha,\beta,k}n_{\alpha\beta k}^tn_{\alpha\beta k}^s.
\end{equation*}
If $\Xi$ is invertible, the Frobenius optimal correction reads
\begin{equation*}
    \Delta^{(k)}_{\alpha,\beta} = \sum_t (\Xi^{-1}r)_tn_{\alpha \beta k}^t,
\end{equation*}
where for binomial $\mathcal{N}$ the Frobenius optimal correction reads
\begin{equation}
    \Delta^{(k)}_{\alpha,\beta} = \frac{r_{\mathcal{N}(\alpha^*\beta)}}{n_\mathcal{I}(\alpha,\beta,k)}
\end{equation}
with
\begin{equation*}
    n_{\mathcal{I}}(\alpha,\beta,k) = \#\{(\gamma,\delta,l): \mathcal{N}((\alpha^*\beta)_k) = \mathcal{N}((\gamma^*\delta)_l) \}.
\end{equation*}
\end{lemma}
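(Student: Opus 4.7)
The plan is to mirror the proof of Lemma \ref{thm:frobopt} almost verbatim, only enlarging the optimization variable from a single correction matrix to the tuple $(\Delta^{(1)},\dots,\Delta^{(N_C)})$. I would phrase the problem as the convex quadratic program
\[
\min_{(\Delta^{(k)})} \tfrac{1}{2} \sum_{k=1}^{N_C} \|\Delta^{(k)}\|_F^2 \quad \text{s.t.} \quad r_t = \sum_{\alpha,\beta,k} n_{\alpha\beta k}^t\, \Delta^{(k)}_{\alpha,\beta} \quad \forall t,
\]
with $r_t$ defined as in \eqref{eq:sparseresiduals}. These residual equations precisely encode that, after applying $\mathcal{P}$, the sparse identity \eqref{eq:numcertsparse} holds exactly in $\mathbb{K}\langle \underline{X}\rangle/\mathcal{I}$, so any feasible $(\Delta^{(k)})$ is a genuine Frobenius correction and vice versa.

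Introducing one Lagrange multiplier $\lambda_t$ per reduced word and differentiating the Lagrangian with respect to each $\Delta^{(k)}_{\alpha,\beta}$, stationarity yields $\Delta^{(k)}_{\alpha,\beta} = \sum_t \lambda_t\, n_{\alpha\beta k}^t$, which already exhibits the announced functional form. Substituting this back into the residual constraint for reduced word $t$ recovers
\[
r_t = \sum_s \Bigl(\sum_{\alpha,\beta,k} n_{\alpha\beta k}^t\, n_{\alpha\beta k}^s\Bigr)\lambda_s = (\Xi\lambda)_t,
\]
so whenever $\Xi$ is non-singular the unique KKT solution is $\lambda = \Xi^{-1} r$, producing exactly the expression in the lemma. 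Strict convexity of the quadratic objective restricted to the affine feasible set guarantees that this stationary point is the global minimizer, not merely a critical one.

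For the binomial specialization, each product $(\alpha^*\beta)_k$ reduces to a single monomial, hence $n_{\alpha\beta k}^t = \delta_{\mathcal{N}((\alpha^*\beta)_k),t}$. Then $\Xi$ collapses to a diagonal matrix whose $t$-th entry counts the triples $(\gamma,\delta,\ell)$, across all cliques, with $\mathcal{N}((\gamma^*\delta)_\ell)=t$, which is exactly $n_{\mathcal{I}}(\alpha,\beta,k)$ evaluated at any representative. Inversion is elementary and returns the second displayed formula. The main subtlety I anticipate is not the calculation itself, which is a clique-indexed rerun of the dense argument, but rather circumscribing when $\Xi$ is actually non-singular: in the sparse setting a reduced word can be reached from several cliques at once, so $\Xi$ genuinely couples clique data through overlapping monomial support. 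The non-singularity hypothesis in the statement is precisely what excludes the pathological scenario in which corrections on different $\Delta^{(k)}$ mutually cancel without affecting any $r_t$, and in the binomial case it becomes automatic because $\Xi$ is diagonal with strictly positive integer entries.
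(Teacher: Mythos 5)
Your proposal is correct and mirrors the paper's intended proof: the paper itself states that the argument follows Lemma~\ref{thm:frobopt} verbatim with the clique index $k$ adjoined, which is exactly what you execute via the quadratic program, Lagrange multipliers, and the observation that binomial $\mathcal{N}$ makes $\Xi$ diagonal with entries $n_{\mathcal{I}}(\alpha,\beta,k)$. Your closing remark on why non-singularity of $\Xi$ is the right hypothesis in the sparse setting (overlapping clique support can couple the blocks) is a useful clarification that the paper leaves implicit.
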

The proof follows exactly the same argument as the proof of Lemma \ref{thm:frobopt} and is hence not presented here.

Intuitively, this result tells us how to update the entries of Gram matrices  associated to different cliques, such that the projected Gram matrices deviate the least from their numerical counterpart. Again, this is a natural choice because we expect the least deviation of the eigenvalues of the numerical Gram matrices, hence minimizing the bound loss associated to our scheme.

In the case of the projected Gram matrices leaving the PSD cone, we perform a similar lifting of $\lambda$ as in the dense case, where we again denote the rounded, projected, and lifted matrices with $\mathcal{P}_+(\tilde{G}_0^{(k)})$. 

\begin{theorem}[Sparse certified bounds]
\label{thm:sparsebounds}
    Assume an inaccurate numerical certificate $(\lambda_d^{\mathrm{sp}},\{G_0^{(k)}\}_{k=1}^{N_C},\{G_i^{(k)}\}_{i=1}^{m_G})$ with the Gram matrices $\{\mathcal{P}(G_0^{(k)})\}$ admitting lower spectral bounds $\mu^k_{\mathrm{min}}$ after rounding and projecting. Furthermore, assume that all of the variables are subject to one of the constraint sets $\{c_i\}$ from Theorem \ref{thm:theorem2}. Then, 
    \begin{equation*}
        \lambda^{\mathrm{sp,rat}}_d = \tilde{\lambda}_d^{\mathrm{sp}}+\delta_d^{\mathrm{sp}}:=\tilde{\lambda}_d^{\mathrm{sp}} - \sum_{k} \min\{\mu_{\mathrm{min}}^k,0\}\cdot s_{n_k,d}^{\mathcal{I}} 
    \end{equation*}
    where $s^{\mathcal{I}}_{n_k,d} = \dim(\mathbb{K}\langle C_k\rangle_d/\mathcal{I})$, constitutes an exact upper bound to the maximization problem, i.e., $\lambda^{\mathrm{sp,rat}} \geq \lambda_{\max}$.
\end{theorem}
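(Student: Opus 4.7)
The plan is to transcribe the proof of Theorem~\ref{thm:theorem2} clique by clique. After rationalizing the numerical bound to $\tilde{\lambda}_d^{\mathrm{sp}}$ and rounding each localizing Gram matrix $G_i^{(k)}$ to a PSD rational $\mathcal{P}(\tilde G_i^{(k)})$ with extracted multipliers $\tilde u_{ij}^{(k)}$ exactly as in the dense procedure around Eq.~\eqref{eq:ineqratcert}, Lemma~\ref{thm:sparsefrobopt} yields an exact sparse pre-certificate
\begin{equation*}
\tilde{\mathrm{LHS}} \;=\; \sum_{k=1}^{N_C} (v_d^{(k)})^{*}\, \mathcal{P}(\tilde G_0^{(k)})\, v_d^{(k)} \quad \mathrm{mod}\,\mathcal{I},
\end{equation*}
where $\tilde{\mathrm{LHS}} := \tilde\lambda_d^{\mathrm{sp}} - f - \sum_{k}\sum_{i \in J_k, j}(\tilde u_{ij}^{(k)})^{*} g_i\, \tilde u_{ij}^{(k)}$. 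The sole remaining obstruction to membership in the sparse truncated quadratic module is the possible indefiniteness of some principal blocks $\mathcal{P}(\tilde G_0^{(k)})$.

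To lift them back into the PSD cone, I would apply Lemma~\ref{thm:boundlowering} block-wise with $\varepsilon_k := -\min\{\mu_{\min}^k, 0\} \geq 0$, obtaining for each clique the rational constant identity
\begin{equation*}
\varepsilon_k\, s^{\mathcal{I}}_{n_k,d} \;=\; \varepsilon_k\,(v_d^{(k)})^{*} v_d^{(k)} + \sum_\ell r_{k\ell}^{*} r_{k\ell} + \sum_{i,\ell} q_{ki\ell}^{*} c_i\, q_{ki\ell} \quad \mathrm{mod}\,\mathcal{I},
\end{equation*}
with all polynomials supported inside $\mathbb{K}\langle \underline{C_k}\rangle_d / \mathcal{I}$. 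Summing these identities over $k$ and adding them to the pre-certificate shifts each block $\mathcal{P}(\tilde G_0^{(k)}) \mapsto \mathcal{P}(\tilde G_0^{(k)}) + \varepsilon_k \mathds{1}$ (now PSD) and enlarges the constant on the left by $\delta_d^{\mathrm{sp}} = \sum_k \varepsilon_k\, s^{\mathcal{I}}_{n_k,d}$. Absorbing the extra SOHS terms per clique into a PSD block $R_0^{(k)}$ and merging the constraint contributions into fresh multipliers $p_{ij}^{(k)}$, the certificate becomes
\begin{equation*}
\lambda^{\mathrm{sp,rat}}_d - f \;=\; \sum_k (v_d^{(k)})^{*} \mathcal{P}_+(\tilde G_0^{(k)}) v_d^{(k)} + \sum_{k, i \in J_k, j} (p_{ij}^{(k)})^{*} g_i\, p_{ij}^{(k)} \quad \mathrm{mod}\,\mathcal{I},
\end{equation*}
with $\mathcal{P}_+(\tilde G_0^{(k)}) = \mathcal{P}(\tilde G_0^{(k)}) + \varepsilon_k \mathds{1} + R_0^{(k)} \succeq 0$ by construction, which is precisely a sparse SOHS certificate establishing $\lambda^{\mathrm{sp,rat}}_d \geq \lambda_{\max}$.

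The main obstacle is verifying that Lemma~\ref{thm:boundlowering} admits a clique-local formulation, namely that its inductive construction produces polynomials $r_{k\ell}, q_{ki\ell}$ supported entirely within $\mathbb{K}\langle \underline{C_k}\rangle_d / \mathcal{I}$, so that the shift of block $k$ does not leak monomials outside its clique support. For the box, unipotency and projection families this is automatic because each generator $c_i$ involves a single variable $X_i$ lying in the clique that uses it. For the ball constraint $1 \succeq \sum_i X_i^2$ all variables are mutually correlated through the constraint itself, so the chordal correlation graph is complete, the sparse hierarchy collapses to the dense one, and the claim reduces directly to Theorem~\ref{thm:theorem2}.
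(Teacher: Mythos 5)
Your proof is correct and follows essentially the same strategy as the paper: form the exact sparse pre-certificate, lift each indefinite principal block via Lemma~\ref{thm:boundlowering} with $\varepsilon_k = -\min\{\mu_{\min}^k,0\}$ restricted to the clique's variables, and aggregate. Your closing paragraph addresses, more carefully than the paper does (the paper handles it with a single remark at the end of Appendix~\ref{sec:appendix}), the implicit requirement that the inductive construction of Lemma~\ref{thm:boundlowering} stays supported inside $\mathbb{K}\langle \underline{C_k}\rangle_d/\mathcal{I}$, including the degenerate case of a global ball constraint where the correlation graph is complete and sparsity is vacuous.
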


\begin{proof}[Proof of Theorem \ref{thm:sparsebounds}]
Let $\{C_k\}_{k=1}^{N_C}$ be the maximal cliques of a chordal extension of $C$. 
For each $k$, let $v_d^{(k)}$ be the monomial basis of $\mathbb{K}\langle C_k\rangle_d/\mathcal{I}$ and $J_k$ the indices of constraints supported on $C_k$.

Assume an exact pre-certificate
\begin{equation}
\label{eq:numprecert}
    \mathrm{LHS} 
    =  \sum_{k} \left(v_d^{(k)}\right)^*\mathcal{P}(\tilde{G}_0^{(k)})v_d^{(k)} \quad \mathrm{mod}\,\mathcal{I}.
\end{equation}
For blocks $\mathcal{P}(\tilde{G}_0^{(k)})\nsucceq0$, Lemma \ref{thm:boundlowering} with $\varepsilon = - \mu_{\mathrm{min}}^k$ on variables of $C_k$ gives
\begin{align}
-\mu_{\min}^k s_{n_k,d}^{\mathcal{I}}&=-\mu_{\min}^k(v_d^{(k)})^\ast v_d^{(k)}+\sum_i r^{(k)\ast}_ir_i^{(k)}+ \nonumber \\ 
&\sum_{i,j} q_{ij}^{(k)\ast} c_i\,q_{ij}^{(k)} \quad \mathrm{mod}\,\mathcal{I}. \nonumber
\end{align}
Fixing $k$ and adding one of those expressions to \ref{eq:numprecert} yields a PSD update as in the dense case
\begin{equation*}
\mathcal{P}_+(\tilde G_0^{(k)})= \mathcal{P}(\tilde G_0^{(k)})-\mu_{\min}^k \mathds{1}+R_0^{(k)}\succeq 0,
\end{equation*}
and raises $\tilde{\lambda}_d$ on the LHS by $-\mu_{\min}^k s_{n_k,d}^{\mathcal{I}}$.

Aggregating over all $k$ with $\mu_{\min}^k<0$ gives
\begin{align}
\lambda_d^{\mathrm{sp,rat}}-f=&\sum_{k=1}^{N_C}(v_d^{(k)})^\ast \mathcal{P}_+(\tilde G_0^{(k)})v_d^{(k)}+ \\&\sum_{k}\sum_{i\in J_k}\sum_j (p_{ij}^{(k)})^\ast g_i p_{ij}^{(k)} \in \mathcal{K}^{\mathrm{sp}}_{2d}, \nonumber
\end{align}
which implies the claim.
\end{proof}
We note that if $\exists k:\mathcal{P}(\tilde{G}_0^{(k)})\succ 0$, the bounds can also be tightened clique-wise as in Theorem \ref{thm:boundtightening}, which can partly compensate negative eigenvalues from other cliques.

\subsection{Rational certificates for symmetric systems}

The structure of the numerical certificate \eqref{eq:numcertsparse} is not exclusive to sparse relaxations, but also occurs in the case of block diagonal Gram matrices $G_0$. As block-diagonalizable Gram matrices correspond to associated symmetries, one can adapt Theorem \ref{thm:sparsebounds} to also be able to obtain certified bounds on symmetry-reduced problems of the corresponding form.

\begin{lemma}[Frobenius optimal Gram projection over polynomial bases]
\label{thm:froboptsym} Assume that due to the symmetry of a problem, the underlying Gram matrix $G_0$ can be block-diagonalized by a transformation $U$ into $K$ blocks $G_0^{(k)}$ of maximal size $N$. Depending on $U$, each block will be indexed by a \textit{non-monomial} basis, e.g., polynomials $\{b_i^{(k)} \}_{i=1}^{N}$, which can be expressed in the monomial basis of the full space $\{v_{\alpha} \}_{\alpha=1}^{K N}$as
\begin{equation*}
    b_i^{(k)} = \sum_{\alpha}b_{i,\alpha}^{(k)}v_\alpha,
\end{equation*}
where $b_{i,\alpha}^{(k)}$ are the matrix elements of $U$.
Assuming an ideal $\mathcal{I}$ generating a normal form expressed on the monomial basis as in \eqref{thm:sparsebounds}, the corresponding symmetry-adapted non-negativity certificate reads as
\begin{align*}
    \mathcal{N}(f - \lambda) &= \mathcal{N}\left( \sum_{i,j,k} (b_i^{(k)})^*G_{ij}^{(k)}b_j^{(k)}\right) \\
    &= \sum_{i,j,k,\alpha,\beta} (b_{i,\alpha}^{(k)})^*G_{ij}^{(k)}b_{j,\beta}^{(k)} \,\mathcal{N}(v_{\alpha}^*v_{\beta}), \\
    \mathcal{N}(f - \lambda)_t &= \sum_{i,j,k,\alpha,\beta} (b_{i,\alpha}^{(k)})^*G_{ij}^{(k)}b_{j,\beta}^{(k)} \,n_{\alpha \beta k }^t := \sum_{i,j,k} \tilde{n}_{ijk}^t G_{ij}^{(k)},
\end{align*}
which has the same form as in Theorem \ref{thm:sparsebounds}. Defining the matrix $\Xi$ accordingly, the Frobenius optimal projection for the Gram blocks indexed by polynomial bases is given by
\begin{equation*}
    \Delta^{(k)}_{i,j} = \sum_t (\Xi^{-1}r)_t\tilde{n}_{ijk}^t.
\end{equation*}
\end{lemma}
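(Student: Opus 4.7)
The plan is to reduce the statement to Lemma \ref{thm:sparsefrobopt} by exhibiting the symmetric case as a formally identical linear system of constraints on the block entries. First I would observe that the lemma statement has already done the bulk of the algebraic preparation: expanding each polynomial basis element $b_i^{(k)}$ in the underlying monomial basis $\{v_\alpha\}$ and collecting normal-form coefficients produces the constants $\tilde n_{ijk}^t := \sum_{\alpha,\beta}(b_{i,\alpha}^{(k)})^*b_{j,\beta}^{(k)}\, n_{\alpha\beta k}^t$, so the identity $\mathcal{N}(\tilde{\mathrm{LHS}})_t=\sum_{i,j,k}\tilde n_{ijk}^t\,G_{ij}^{(k)}$ is linear in the block entries with rational coefficients. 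The triple $(i,j,k)$ now plays the role of the pair $(\alpha,\beta)$ in Lemma \ref{thm:frobopt} and of $(\alpha,\beta,k)$ in Lemma \ref{thm:sparsefrobopt}; the rest of the argument is mechanical.

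Second, I would set up the Frobenius-optimal projection by minimizing $\tfrac12\sum_{i,j,k}|\Delta_{ij}^{(k)}|^2$ subject to the reduced-word residual constraints $r_t=\sum_{i,j,k}\tilde n_{ijk}^t\,\Delta_{ij}^{(k)}$, following the exact same Lagrangian scheme as in the proof of Lemma \ref{thm:frobopt}. Introducing multipliers $\lambda_t$ and imposing stationarity yields $\Delta_{ij}^{(k)}=\sum_t \lambda_t\,\tilde n_{ijk}^t$, and substituting back into the constraints gives the normal equations $r=\Xi\lambda$ with $\Xi_{t,s}=\sum_{i,j,k}\tilde n_{ijk}^t\,\tilde n_{ijk}^s$. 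When $\Xi$ is nonsingular, $\lambda=\Xi^{-1}r$ and the claimed closed form for $\Delta^{(k)}$ follows immediately.

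Third, I would justify that the minimization performed in the block-diagonalized (polynomial) basis is genuinely equivalent to minimizing Frobenius deviation with respect to the original numerical Gram matrix $\tilde G_0$. Since the symmetry-adapted change of basis $U$ is unitary by construction, the Frobenius norm is invariant under the transformation $G\mapsto U G U^*$; hence $\sum_k\|\Delta^{(k)}\|_F^2=\|\Delta\|_F^2$, where $\Delta$ denotes the full correction in the monomial representation. This shows that the block-wise optimum coincides with the globally optimal rational correction, mirroring the same invariance argument implicit in the sparse case.

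The main obstacle I expect is ensuring that $\Xi$ is well defined and invertible: unlike in the binomial setting of Corollary \ref{thm:binomial_proj}, the non-monomial basis elements $b_i^{(k)}$ mix several monomials, so different $(i,j,k)$ triples can contribute to the same reduced word $t$, making $\Xi$ a genuinely non-diagonal, possibly rank-deficient matrix. I would address this by restricting the index $t$ to those reduced words that actually appear with nonzero coefficient among the $\tilde n_{ijk}^t$, which is the minimal algebraic hypothesis needed for $\Xi$ to be invertible on its effective support and for the closed-form expression for $\Delta^{(k)}$ to be unambiguous. With that caveat in place, the proof is identical in spirit to Lemmas \ref{thm:frobopt} and \ref{thm:sparsefrobopt} and need not be written out in full.
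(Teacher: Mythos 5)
Your proof follows the same Lagrangian route the paper takes — the paper's proof is essentially the remark that the argument of Lemmas~\ref{thm:frobopt} and~\ref{thm:sparsefrobopt} carries over verbatim once the constraints are written as a linear system in the block entries $G_{ij}^{(k)}$ with coefficients $\tilde n_{ijk}^t$, which is exactly your first two steps (stationarity $\Delta_{ij}^{(k)}=\sum_t\lambda_t\tilde n_{ijk}^t$ and normal equations $r=\Xi\lambda$). One small caution about your third step: unitary invariance of the Frobenius norm shows $\sum_k\|\Delta^{(k)}\|_F^2=\|\Delta\|_F^2$ for any block-diagonal correction, but it does not by itself show that the \emph{unrestricted} ambient minimizer is block-diagonal; that would require the additional observation that the quadratic objective and the affine residual constraints are invariant under the symmetry group, so that the optimum can be taken symmetry-adapted. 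Since the lemma only asserts optimality over corrections of the form $\Delta^{(k)}$, this stronger claim is not needed and the core derivation is complete. Your remark on the possible rank-deficiency of $\Xi$ is also well taken — the paper, as in Lemma~\ref{thm:frobopt}, implicitly assumes non-singularity (or restriction to the effective support of the $\tilde n_{ijk}^t$).
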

The proof follows the same arguments as the proof of Lemmas \ref{thm:frobopt} and \ref{thm:sparsefrobopt}.

Note that to establish a certified bound in the case of $\mathcal{P}(\tilde{G}_0^{(k)})\nsucceq0$ , one cannot simply apply Lemma \ref{thm:boundlowering} and lift the Gram blocks independently as in the sparse case, as they are not indexed by the monomial basis in which constraints $\{c_i\}$ are expressed. Instead, one has to lift the full block-diagonal ambient Gram matrix $\mathcal{P}(\tilde{G}_0):=\mathrm{diag}\left(\mathcal{P}(\tilde{G}_0^{(1)}),..,\mathcal{P}(\tilde{G}^{(K)}_0)\right)$, where eigenvalue computations can still be performed block-wise due to the unitary invariance of the matrices' spectra.

\begin{theorem}[Certified bounds  in symmetry-adapted bases] \label{thm:symbounds}
Assume an inaccurate numerical certificate $(\lambda_d,\{G_0^{(k)}\}_{k=1}^{K},\{G_i^{(k)}\}_{i=1}^{m_G})$ to an NPO where $G_0^{(k)}$ are blocks indexed by polynomials $\{b_i^{(k)}\}_{i=1}^{N}$ that stem from block-diagonalization of a Gram matrix in the monomial basis, and assume variables are subject to one of the constraint sets $\{c_i \}$ from Theorem \ref{thm:theorem2}. Let 
\begin{equation}
    \mathcal{P}(\tilde{G}_0^{(k)}) = \tilde{G}_0^{(k)} + \Delta^{(k)} \quad k\in[K]
\end{equation}
be the set of blocks after rounding and projecting according to Lemma \ref{thm:froboptsym} with minimal eigenvalues $\mu_{\mathrm{min}}^k$. Then, 
\begin{equation}
    \lambda_d^{\mathrm{rat}} = \tilde{\lambda}_d + \delta_d  = \tilde{\lambda}_d -  \min_k(\{\mu_{\mathrm{min}}^k,0 \} )\cdot s_{n,d}^{\mathcal{I}}
\end{equation}
constitutes a certified upper bound to the maximization problem, i.e., $\lambda_d^{\mathrm{rat}}\geq \lambda_{\max}$.
\end{theorem}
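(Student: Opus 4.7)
The plan is to mirror the proof of Theorem~\ref{thm:theorem2} for the dense case, but carefully route the constant SOHS lifting through the unitary change-of-basis $U$ that produces the symmetry blocks. The key ingredient is again Lemma~\ref{thm:boundlowering}, which is naturally stated in the monomial basis $v_d$, whereas the block-diagonal Gram data lives in the polynomial basis $\{b_i^{(k)}\}$ with $b = U v_d$. Because $U$ is unitary, the ambient identity is also block-diagonal with identity blocks, and one has the crucial consistency
\begin{equation*}
v_d^\ast v_d \;=\; (U v_d)^\ast (U v_d) \;=\; \sum_{k=1}^{K} (b^{(k)})^\ast b^{(k)} \quad \mathrm{mod}\,\mathcal{I},
\end{equation*}
which guarantees that a scalar shift of the ambient matrix translates faithfully into a uniform block-wise shift.

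First, I would invoke Lemma~\ref{thm:froboptsym} to obtain the exact pre-certificate
\begin{equation*}
\mathcal{N}(\tilde{\lambda}_d - f) \;=\; \sum_{i,j,k} (b_i^{(k)})^\ast \mathcal{P}(\tilde{G}_0^{(k)})_{ij}\, b_j^{(k)} \;+\; \sum_{i,j} \tilde{u}_{ij}^\ast g_i \tilde{u}_{ij} \quad \mathrm{mod}\,\mathcal{I},
\end{equation*}
which is an algebraic identity whose only obstruction to feasibility is the possible indefiniteness of the individual blocks $\mathcal{P}(\tilde{G}_0^{(k)})$.

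Next, set $\mu := \min_k \{\mu_{\min}^k, 0\}$. If $\mu = 0$ there is nothing to prove. Otherwise, since $-\mu > 0$, I would apply Lemma~\ref{thm:boundlowering} in the monomial basis with $\varepsilon = -\mu$:
\begin{equation*}
-\mu\, s_{n,d}^{\mathcal{I}} \;=\; -\mu\, v_d^\ast v_d \;+\; \sum_{\ell} r_\ell^\ast r_\ell \;+\; \sum_{\ell,i} q_{\ell i}^\ast c_i q_{\ell i} \quad \mathrm{mod}\,\mathcal{I},
\end{equation*}
and add this equality to the pre-certificate. Using the identity displayed above, the $-\mu\, v_d^\ast v_d$ contribution distributes block-wise as $-\mu\, \mathds{1}_{N_k}$ on each symmetry block, producing the lifted ambient Gram matrix
\begin{equation*}
\mathcal{P}_+(\tilde{G}_0) \;=\; \mathrm{diag}\bigl(\mathcal{P}(\tilde{G}_0^{(1)}) - \mu \mathds{1},\, \ldots,\, \mathcal{P}(\tilde{G}_0^{(K)}) - \mu \mathds{1}\bigr) + R_0,
\end{equation*}
where $R_0 \succeq 0$ absorbs the $\sum_\ell r_\ell^\ast r_\ell$ term (after re-expressing the $r_\ell$ in the polynomial basis) and the $q_{\ell i}$ contributions merge into the existing localizing part. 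By construction each block now has minimal eigenvalue at least $\mu_{\min}^k - \mu \geq 0$, hence $\mathcal{P}_+(\tilde{G}_0) \succeq 0$, exhibiting $\lambda_d^{\mathrm{rat}} - f$ as an element of the symmetry-adapted truncated quadratic module.

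The principal obstacle will be the basis-change bookkeeping: one must verify that the lifting acts uniformly on all blocks without breaking block-diagonality and without altering the algebraic identity on the scalar polynomial side. Unitarity of the symmetry adapter $U$ makes this transparent and, importantly, also explains the scaling of the correction. In contrast to the sparse Theorem~\ref{thm:sparsebounds}, where cliques are supported on disjoint variable subsets and can therefore be lifted independently at local cost $\mu_{\min}^k s_{n_k,d}^{\mathcal{I}}$, here all blocks share the same underlying $n$-variable space; one must lift every block by the same amount, dictated by the worst-case $\min_k \mu_{\min}^k$, paying a single global cost proportional to the ambient dimension $s_{n,d}^{\mathcal{I}}$.
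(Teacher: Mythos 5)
Your proposal is correct and follows essentially the same route as the paper: apply Lemma~\ref{thm:boundlowering} in the monomial basis with $\varepsilon = -\mu$, and use the unitarity of the block-diagonalizing transformation $U$ (so that $v_d^\ast v_d = (Uv_d)^\ast(Uv_d)$ and the spectra of $\mathcal{P}(\tilde{G}_0)$ and of $\mathrm{diag}(\mathcal{P}(\tilde{G}_0^{(k)}))$ coincide) to see that the scalar lift raises every block by the same amount and restores positivity, with the global cost $s_{n,d}^{\mathcal{I}}$ rather than a clique-local one. One small remark: the parenthetical ``after re-expressing the $r_\ell$ in the polynomial basis'' is unnecessary — the $r_\ell^\ast r_\ell$ contributions from Lemma~\ref{thm:boundlowering} can simply be retained as standalone SOHS summands of the final certificate without forcing them into the block structure of $\mathcal{P}_+(\tilde{G}_0)$, since membership in $\mathcal{K}_{2d}$ requires only \emph{some} valid SOHS plus localizing decomposition, not a single block-diagonal Gram matrix.
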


The proof follows from the invariance of the eigenspectrum under unitary transformations and applying Lemma \ref{thm:boundlowering} to $\mathcal{P}(\tilde{G}_0)$ in the monomial basis. Again, if all blocks fulfill $\mathcal{P}(\tilde{G}_0^{(k)})\succ 0$, the same argument as in Theorem \ref{thm:boundtightening} holds, which lets one further tighten the numerical symmetry-adapted bound.

One has to note here that even though the projection scheme in the sparse and symmetric case exhibit a similar structure, they also strongly differ in terms of computational complexity. In the sparse (and trivially, dense) case, one can show that the rationalization scheme carries \textit{less} computational complexity with respect to the matrix sizes than solving the underlying SDP. However, in the symmetry-adapted case, computing the Frobenius optimal projection may exceed the cost of SDP solving, depending on the block partitioning of the initial Gram matrix $G_0$. A discussion of the complexity in both sparsity and symmetry-adapted cases is given in Appendix \ref{sec:complexity}.

As a second remark we note that in the case of sparse cliques with sufficiently small overlap, one finds
\begin{equation*}
    s^{\mathcal{I}}_{n,d} > \sum_k s_{n_{C_k},d}^{\mathcal{I}}
\end{equation*}
in which case the loss in bounds through rationalization in the dense case, $\delta_d$, often exceeds the sparse loss, $\delta_d^{\mathrm{sp}}$. This fact can lead to the counterintuitive phenomenon that rationalized sparse bounds become tighter than the corresponding rationalized dense bounds, $\lambda_d^{\mathrm{sp,rat}} < \lambda_d^{\mathrm{rat}}$, as it is illustrated in the next section.

\section{Maximal quantum violation of Bell inequalities}
\label{sec:belllineq}

As a first benchmark for our certification method, as well to demonstrate its necessity, we apply it to the problem of determining the maximal quantum violation of Bell inequalities. All of the numerical experiments in this work were modeled in Julia using the libraries NCTSSOS \cite{wang_sparse} and QMBCertify \cite{groundstate}, and the resulting SDPs were solved with the commercial solver MOSEK \cite{mosek}. The solver choice is critical for numerical reliability, where MOSEK was selected for its state-of-the-art interior-point implementation and its widespread adoption in semidefinite optimization. Rigorous bounds on eigenvalues have been computed using \texttt{Arblib.jl} \cite{arblib}. 
The library to reproduce our results is available online both on GitHub \cite{CertifiedQuantumBounds} and in the NCTSSOS package\footnote{\url{https://wangjie212.github.io/NCTSSOS/dev/rationalize/}}. 

As an introduction, consider the famous Clauser-Horne-Shimony-Holt (CHSH) inequality~\cite{CHSH}
    $\langle A_0 B_0 \rangle + \langle A_0 B_1 \rangle + \langle A_1 B_0 \rangle - \langle A_1 B_1 \rangle \leq 2. $
Here, $\langle A_i B_j \rangle$ denote expectation values of products of dichotomic observables $A_i$ and $B_j$. Intuitively one can think of a game involving two parties, Alice and Bob, that perform either of the two measurements randomly, trying to maximize the correlations constituting the inequality. Such inequalities characterize the boundaries between \textit{local behaviors}, admitting a description in terms of local hidden variables, and \textit{quantum behaviors}, e.g., correlations achievable by measurements on a joint bipartite quantum system. If a given strategy exceeds the local bound, equal to 2 in the case of the CHSH inequality, one can conclude that no local hidden-variable model can reproduce the corresponding correlations, and hence that nonlocal (quantum or supraquantum) resources were used. A central question along these lines is \textit{how} different the emergent correlations can be for quantum states with respect to local strategies, which is captured in the \textit{maximal quantum violation} of a given Bell expression, known to be equal to $2\sqrt2$ in the CHSH case. 

Finding maximal violations of Bell expressions to bound the set of quantum correlations constitutes the first family of problems for which non-commutative SDP relaxations were exploited~\cite{bounding_quantum}. Such scenarios can be extended to feature an arbitrary number of parties, measurements per parties, and outcomes per measurement, depending on the setting of interest. In this section we focus on two-party Bell expressions featuring $m_A$ and $m_B$ dichotomic measurements on Alice's and Bob's side, respectively. In general, an optimization problem of this class reads
\begin{align}
\lambda_{\max} &=\max_{\mathcal{H},\ket{\psi}\in\mathcal{H},A_i,B_j\in \mathcal{B}(\mathcal{H})} 
    \quad  \sum_{i=1}^{m_A} \sum_{j=1}^{m_B} c_{ij}\langle A_i  B_j \rangle \nonumber\\
    &+\sum_{i=1}^{m_A} a_{i}\langle A_i  \rangle +\sum_{j=1}^{m_B} b_{j}\langle B_j \rangle\label{eq:bellprob}  \\
    &\qquad\mathrm{s.t.}\qquad \quad \quad \, \, A_i^2 = 1, \qquad i= 1,...,m_A, \nonumber \\
    &\qquad \qquad \qquad \quad  \, \, \, \,B_j^2 = 1, \qquad j= 1,...,m_B, \nonumber \\
    &\qquad \qquad \qquad \quad \, \, \, \,[A_i, B_j] = 0, \nonumber
\end{align}
where the coefficients $\{c_{ij},a_i,b_j\}$ encrypt the type of Bell inequality, and commutation of $A_i$ and $B_j$ is enforced to model the bipartite Hilbert space structure. This problem exhibits the same structure as \eqref{eq:originalprob}, allowing us to make use of the NPA hierarchy and our certification scheme. 

The need for certification already becomes clear in the simplest case of the CHSH inequality: Running hierarchy level $d=2$ for CHSH yields the following (wrong) upper bound:
\begin{equation*}
    \lambda_2 =2.828427124717\ngeq 2\sqrt{2} = \lambda_{\max} ,
\end{equation*}
smaller than the known quantum violation. However, the rationalized bound fulfills $\lambda_2^{\mathrm{rat}} > \lambda_{\max}$. 

\subsection{Certified bounds to maximum quantum violations of Bell inequalities \texorpdfstring{$A_2-A_{89}$}{}}

To further investigate this behavior, we ran our certification methods on a catalog of 88 inequivalent two-party, dichotomic Bell inequalities, called $A_2-A_{89}$, first introduced in \cite{a89}, for relaxation orders $d \in \{1,2,3 \}$. We start with the analysis of the dense certification scheme.

As a measure of inaccuracy of the numerical certificate, we introduce the 1-norm of the difference between the numerical LHS and RHS certificate coefficient vectors on reduced basis words $t$:

\begin{equation}
    \mathcal{D}_d := \sum_t |\mathrm{LHS-RHS}|_t
\end{equation}
which lets one investigate how the rationalization loss $\delta_d =\lambda_d^{\mathrm{rat}} - \lambda_d$ depends on the inaccuracy of the numerical certificate.

Figure \ref{fig:bounds-1} shows the behaviour of the bound loss $\delta_d$ depending on the numerical certificate error $\mathcal{D}_d$. One can see empirically that large certificate errors $\mathcal{D}_d$ lead to higher bound losses, which is what we intuitively expect: As established in Section \ref{sec:rational_bounds}, the bound loss depends on two variables, the Gram matrix size $s_{n,d}^{\mathcal{I}}$, and the minimal eigenvalue of the projected matrix $\mu_{\min}$, which is reflected in the following figures. 

\begin{figure}[ht] 
  \centering
  \begin{subfigure}{\columnwidth}
    \centering
    \includegraphics[width=\columnwidth]{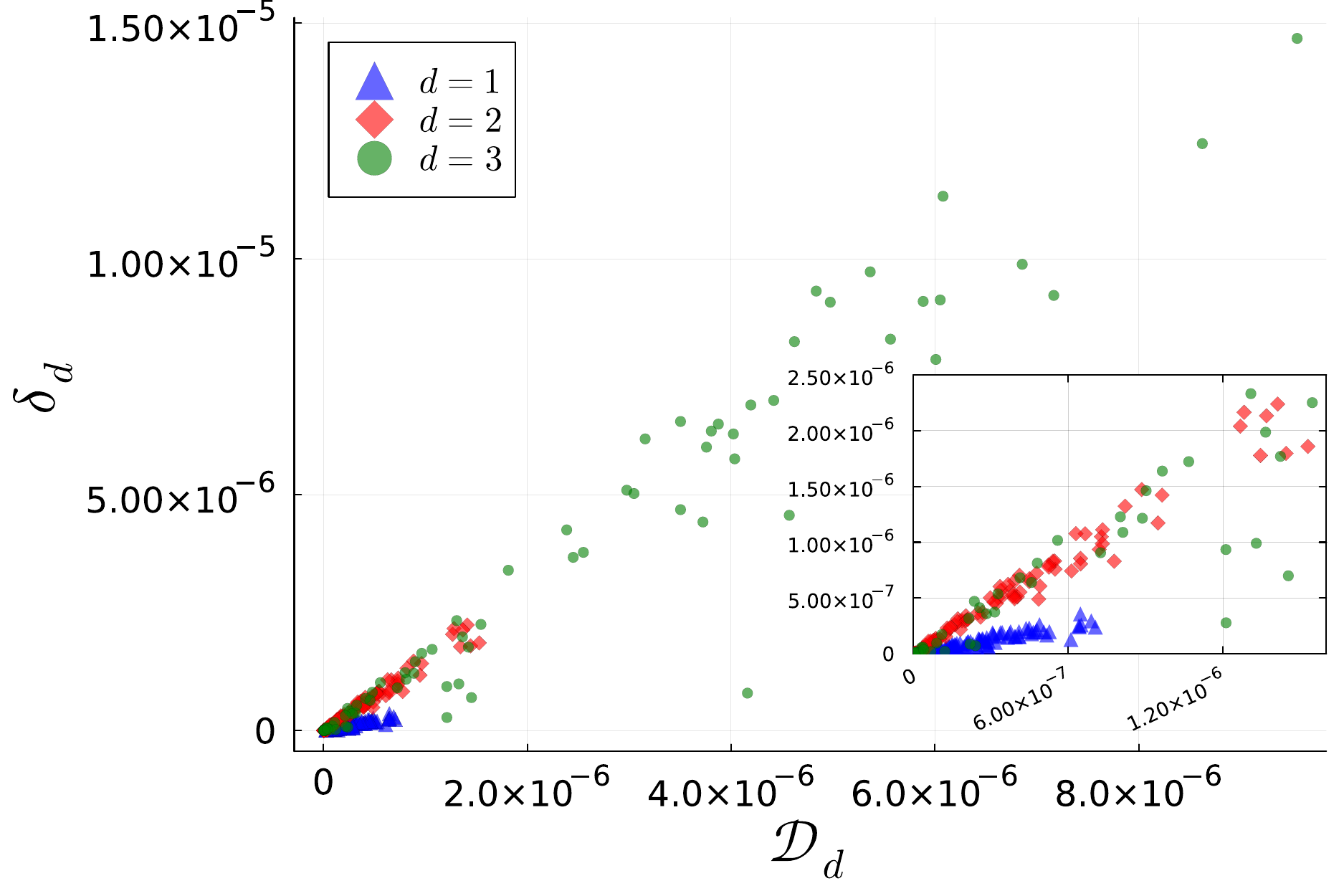}
    \caption{Numerical certificate error $\mathcal{D}_d$} vs. bound loss $\delta_d$
    \label{fig:bounds-1}
  \end{subfigure}

  \medskip

    \begin{subfigure}{\columnwidth}
    \centering
    \includegraphics[width=\columnwidth]{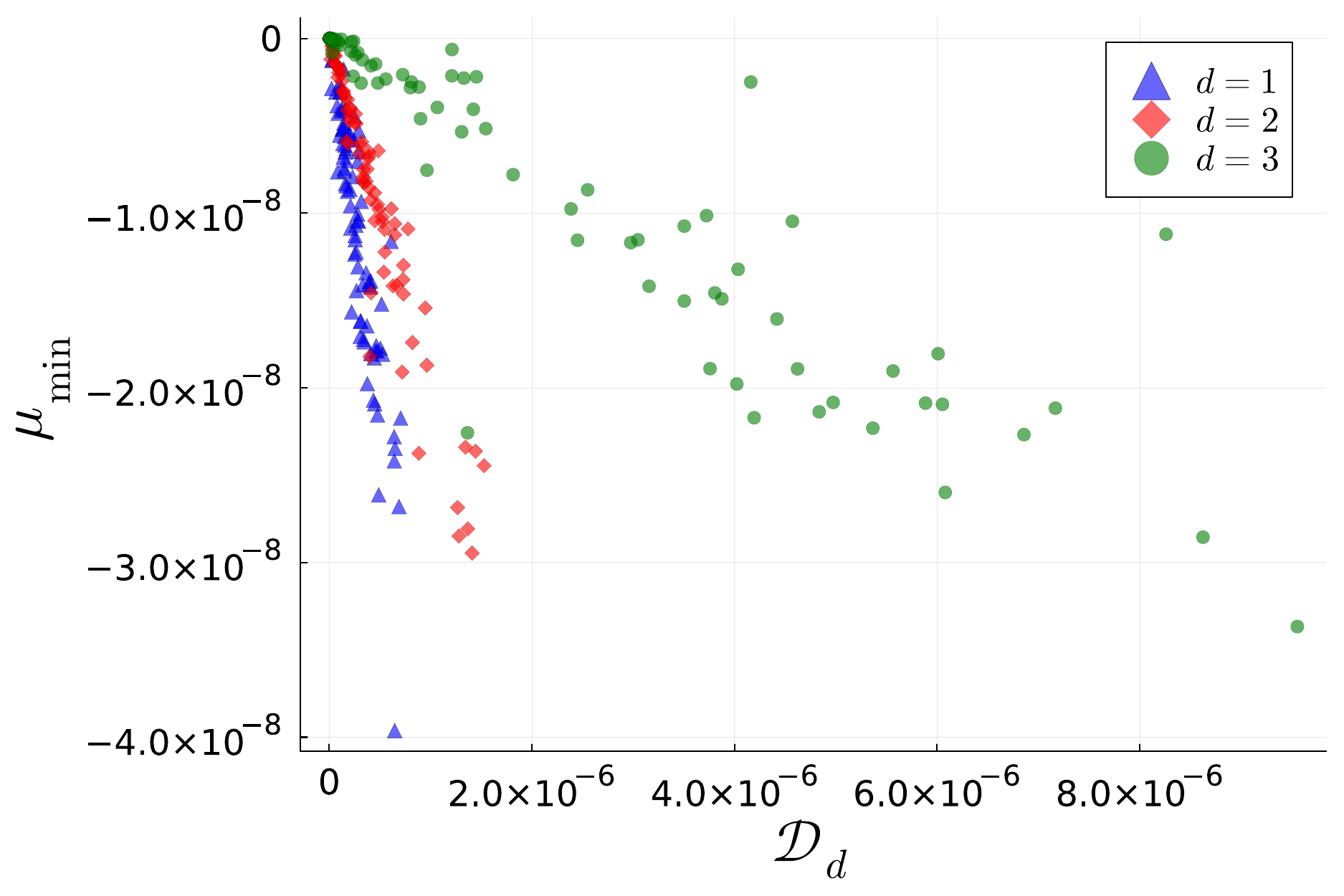}
    \caption{Numerical certificate error $\mathcal{D}_d$} vs. minimal eigenvalue after projection $\mu_{\min}$
    \label{fig:bounds-b}
  \end{subfigure}

  \medskip

  \begin{subfigure}{\columnwidth}
    \centering
    \includegraphics[width=\columnwidth]{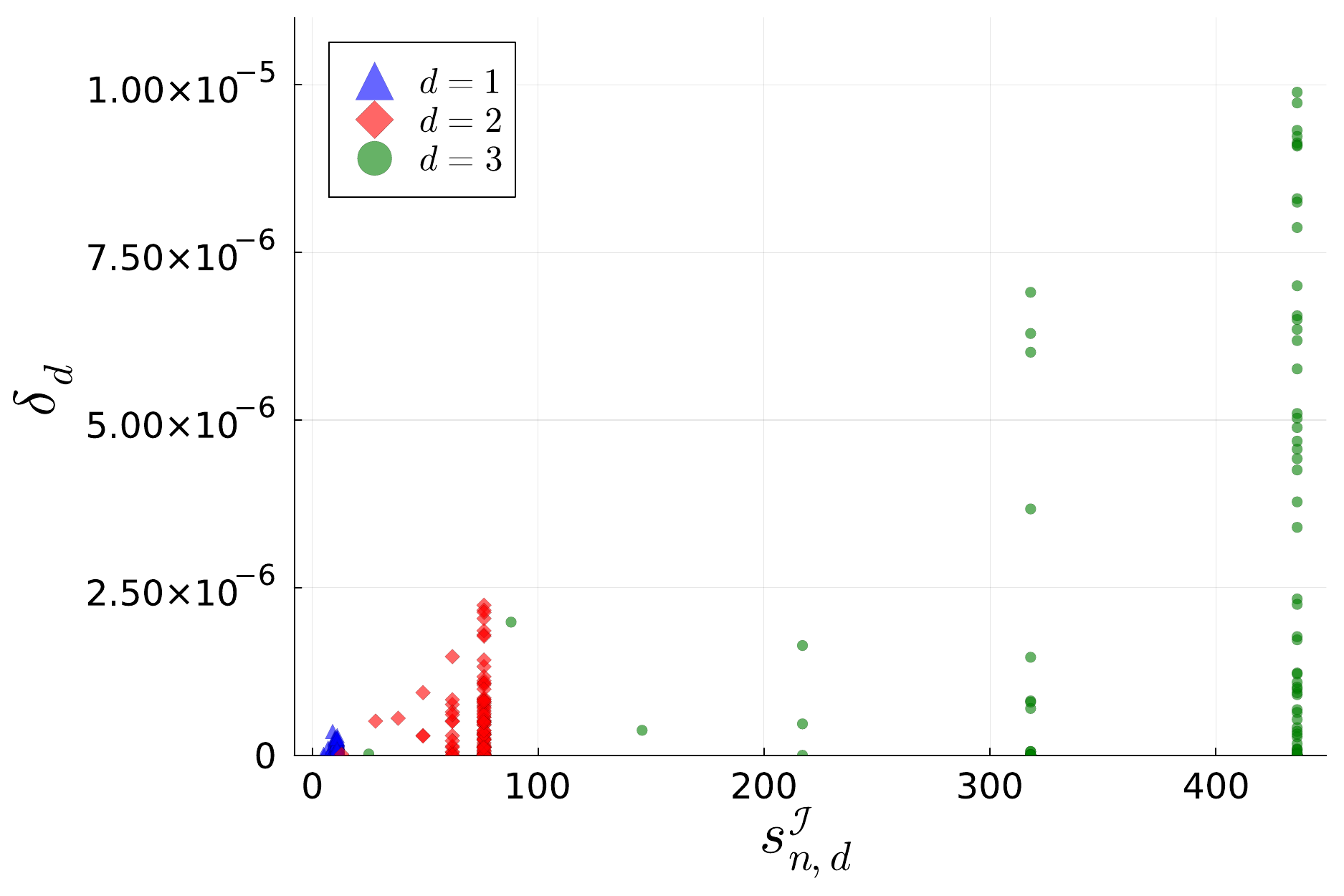}
    \caption{Gram-matrix size $s_{n,d}^{\mathcal{I}}$ vs. bound loss $\delta_d$}
    \label{fig:bounds-c}
  \end{subfigure}

  \caption{Numerical results for the certification scheme on Bell inequalities A2-A89}
  \label{fig:bounds-all}
\end{figure}

Figure \ref{fig:bounds-b} shows the trend that larger numerical certificate errors produce larger distances of projected matrices $\mathcal{P}(\tilde{G}_0)$ to the PSD cone. This is expected, as large certificate errors lead to large Frobenius distances between $\mathcal{P}(\tilde{G}_0)$ and $G_0$, possibly projecting the matrices further out of the PSD cone to achieve coefficient equalities. 

Figure \ref{fig:bounds-c} shows the behaviour of the bound loss $\delta_d$ with respect to the Gram matrix size $s_{n,d}^{\mathcal{I}}$. As the two are multiplicatively correlated, we see that larger $s_{n,d}^{\mathcal{I}}$ can result in larger $\delta_d$. However, one can also see that at a fixed size $s_{n,d}^{\mathcal{I}}$, the bound loss has a high variance, which is reflected by the variance in certificate errors $\mathcal{D}_d$ and hence minimal eigenvalues $\mu_{\min}$. This indicates that the quality of the numerical certificate is at least as important as the combinatorial growth for the tightness of the certified bounds.

In \cite{a89_optimality}, the authors established that relaxations for the majority of Bell inequalities $A_2-A_{89}$ converge at latest at order $d=3$, where their exact maximal violations $\lambda_{\max}$ are computed in \citep{a88_exact, a89_exact}. Splitting the inequalities into three sets, $\mathcal{B}_i,i\in \{1,2,3\}$ according to their hierarchy level of convergence (sorting inequalities with convergence level $d>3$ to $\mathcal{B}_3$) allows us to investigate the behaviour of the obtained numerical bounds, as well as the rational bounds, with respect to these exact maximal violations. As $|\mathcal{B}_1|=2$, we restrict the analysis to the cases of $\mathcal{B}_2$ and $\mathcal{B}_3$.

\begin{figure}[ht] 
  \centering
  \begin{subfigure}{\columnwidth}
    \centering
    \includegraphics[width=\columnwidth]{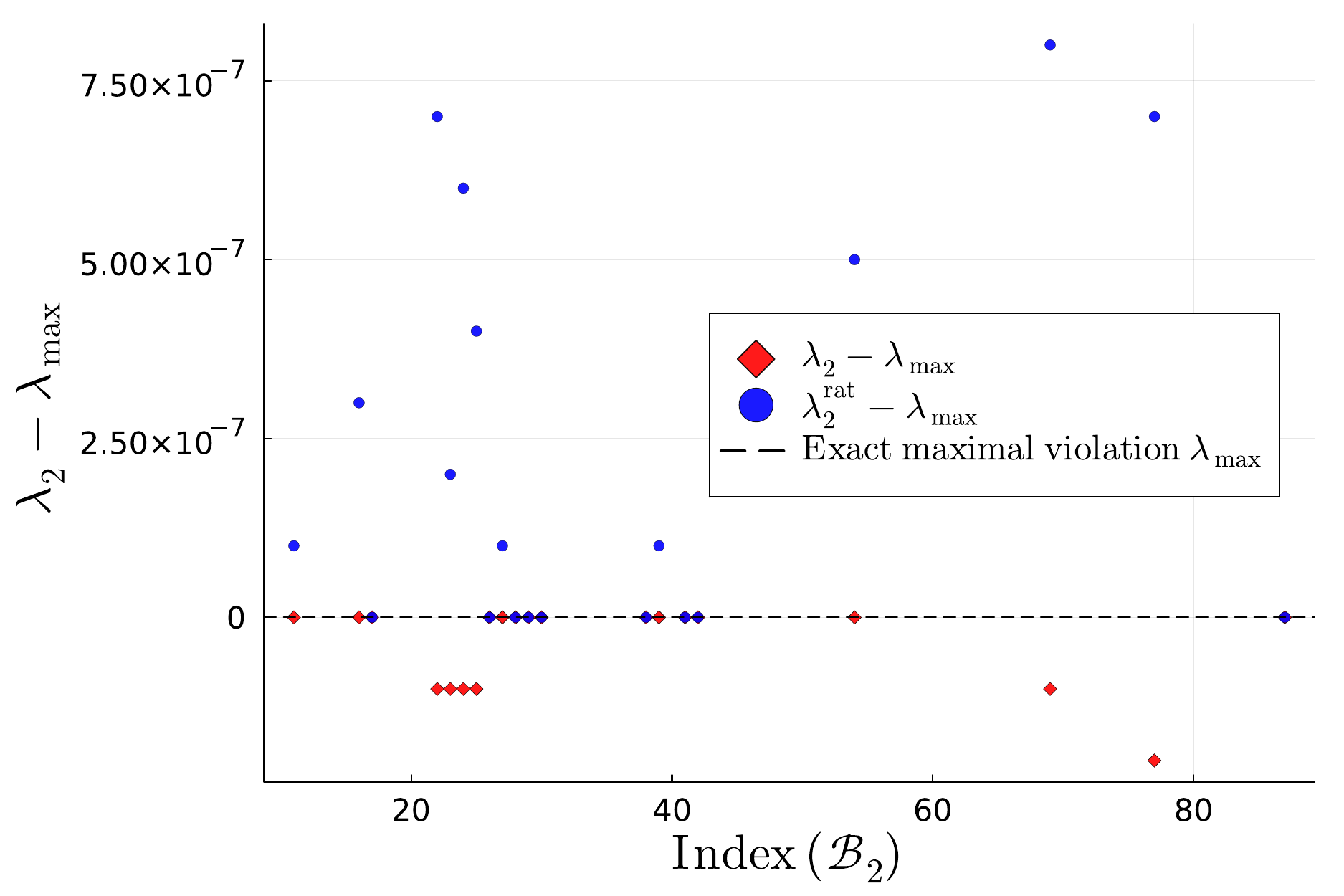}
    \caption{}
    \label{fig:bounddiff_b2}
  \end{subfigure}

  \medskip

    \begin{subfigure}{\columnwidth}
    \centering
    \includegraphics[width=\columnwidth]{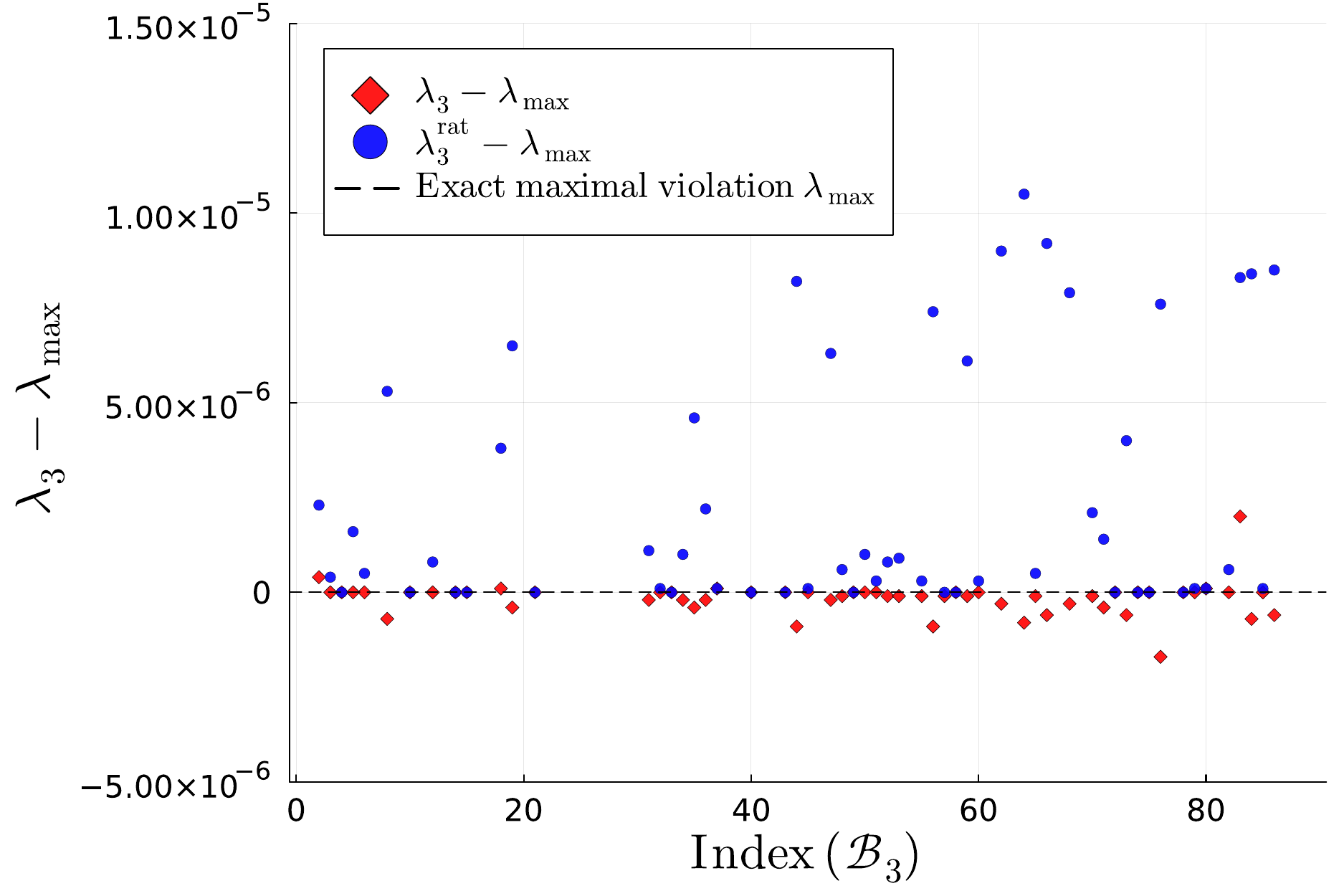}
    \caption{}
    \label{fig:bounddiff_b3}
  \end{subfigure}

  \caption{Numerical bounds $\lambda_d$ and rationalized bounds $\lambda_d^{\mathrm{rat}}$ relative to exact maximal violations $\lambda_{\max}$ for $\mathcal{B}_2$ (a) and $\mathcal{B}_3$ (b). }
  \label{fig:boundsdiffs_exact}
\end{figure}

Figure \ref{fig:boundsdiffs_exact} shows how the numerical and rational bounds relate to the exact maximal violation\footnote{Bell inequalities with indices $i\in \{7, 9,
 10,
 15,
 28,
 42,
 44,
 48,
 66 \}$ are not displayed because their numerical bounds hold by a significant margin.}. Note that a point in the positive halfspace denotes a valid upper bound, while points in the negative halfspace are invalid. We see that while a significant fraction of the numerical bounds $\lambda_d$ at both relaxation orders are invalid, all produced rational bounds are proper upper bounds to $\lambda_{\max}$. 
In fact, we find that 6/20 numerical bounds in $\mathcal{B}_2$ are incorrect, while 26/66 bounds in $\mathcal{B}_3$ are incorrect. This demonstration highlights the necessity of certified bounds as reliable alternatives to their numerical counterparts. 
\subsection{Comparison between the dense and sparse scheme}

While the minimal projected eigenvalue $\mu_{\min}$ inherently depends on the certificate error $\mathcal{D}_d$, and hence on the numerical precision of the solver, one can mitigate the dependence on the Gram matrix size by making use of the sparse hierarchy and associated certification scheme. To do so, we applied the sparse scheme to the same set of Bell inequalities and compared the sparse rationalization bound loss $\delta_d^{\mathrm{sp}}$ to the dense one.

\begin{figure}[h!] 
  \centering
  \begin{subfigure}{\columnwidth}
    \centering
    \includegraphics[width=\columnwidth]{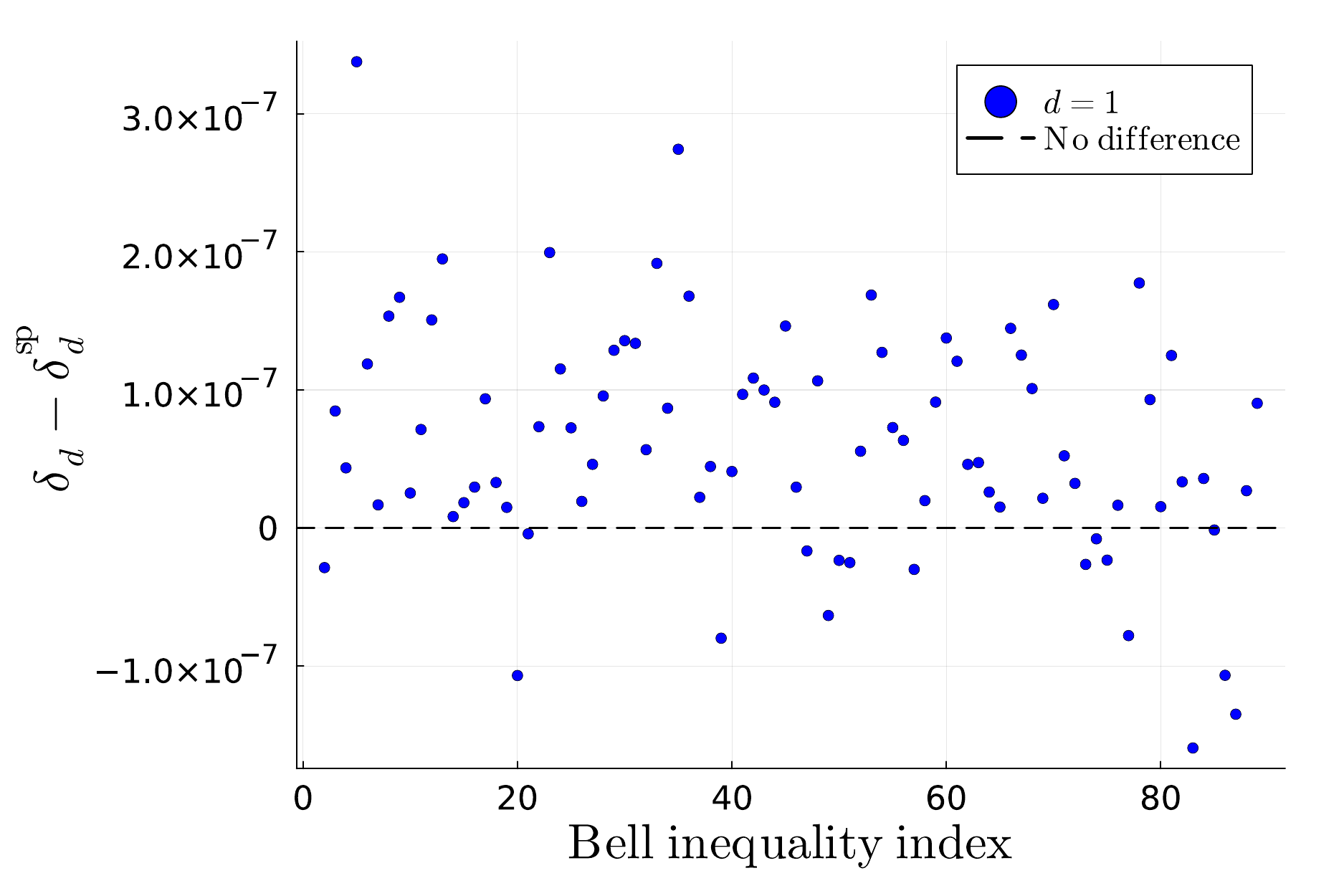}
    \caption{}
    \label{fig:dense_sparse_loss_r1}
  \end{subfigure}

  \medskip

  \begin{subfigure}{\columnwidth}
    \centering
    \includegraphics[width=\columnwidth]{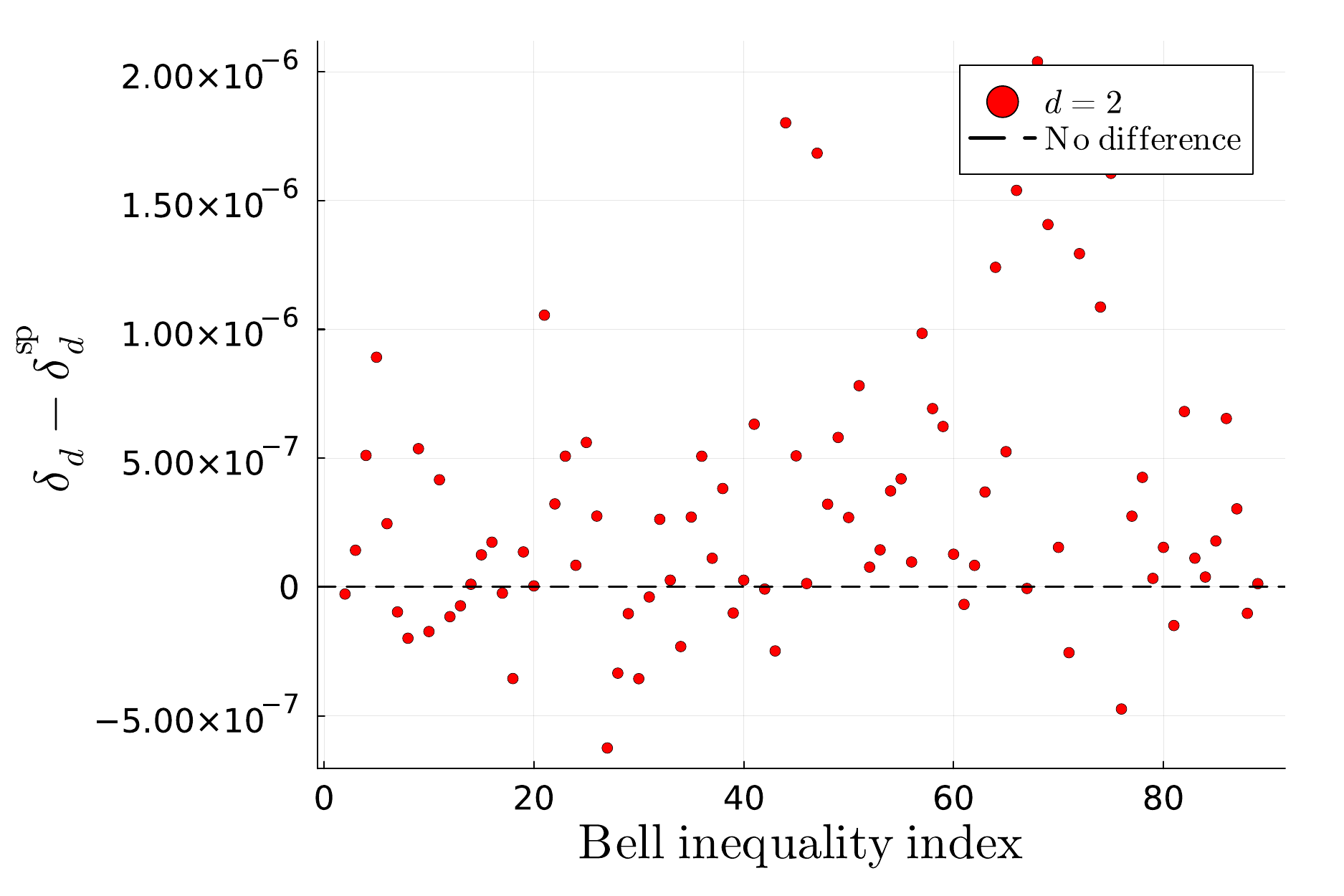}
    \caption{}
    \label{fig:dense_sparse_loss_r2}
  \end{subfigure}

  \medskip

  \begin{subfigure}{\columnwidth}
    \centering
    \includegraphics[width=\columnwidth]{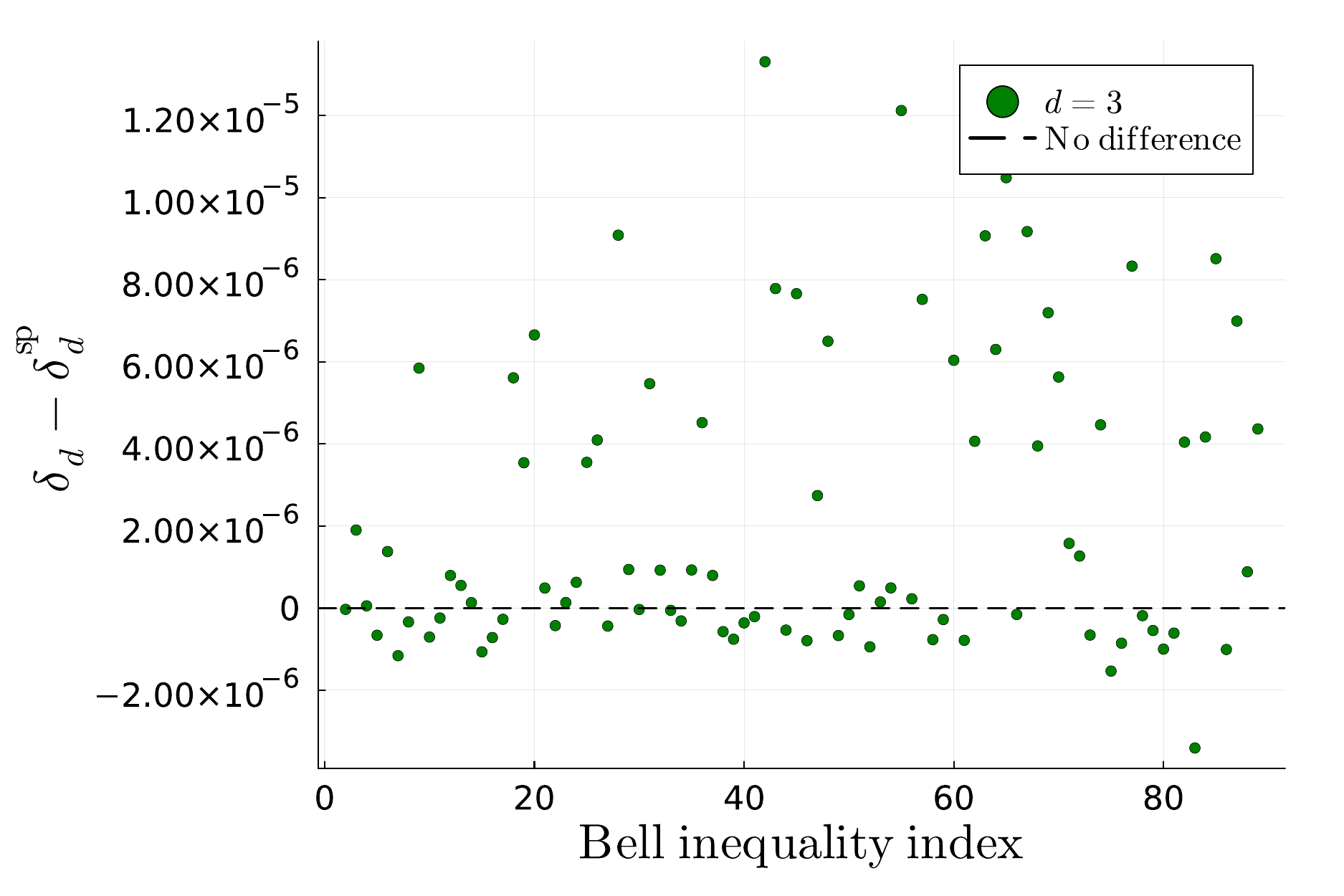}
    \caption{}
    \label{fig:dense_sparse_loss_r3}
  \end{subfigure}

  \caption{Difference in bound losses $\delta_d - \delta_d^{\mathrm{sp}}$ between dense and sparse certification schemes for Bell inequalities $A_2$–$A_{89}$ at relaxation orders $d=1$ (a), $d=2$ (b), and $d=3$ (c).}
  \label{fig:dense_sparse_loss_all}
\end{figure}

Figure \ref{fig:dense_sparse_loss_all} shows the difference between the dense and the sparse bound loss, $\delta_d - \delta_d^{\mathrm{sp}}$ for different relaxation orders $d \in \{1,2,3 \}$. Points in the positive half space hence correspond to experiments where $\delta_d > \delta_d^{\mathrm{sp}}$, i.e., where the sparse certification scheme lead to a smaller bound correction after projection.

As one can see, often times the dense loss $\delta_d$ exceeds the sparse loss $\delta_d^{\mathrm{sp}}$, which, when neglecting the distribution of minimal eigenvalues, can be explained by the often smaller total problem size in the sparse case with respect to the dense case. As discussed in Section \ref{sec:sparse}, by exploiting sparsity, some monomials, 
i.e., those involving variables from different cliques, do not appear in any of the sparse gram matrices, while they do in the dense description. On the other hand, a large variable overlap of cliques could cause the sum of sparse block sizes ($\sum_k s_{n_k,d}^{\mathcal{I}}$) to become larger than the dense block size ($s_{n,d}^{\mathcal{I}}$). However, if the overlap is small enough, the total size will be smaller than in the dense case, which is reflected in the figures. 

This behaviour can lead to the counterintuitive phenomenon that the certified sparse bounds become \textit{tighter} than their rational dense counterparts, in stark contrast to results from the numerical relaxations. This happens exactly when the excess in $\delta_d$ compensates for the numerical difference $\lambda_d < \lambda_d^{\mathrm{sp}}$. In our experiments we find that at $d=1$, 70/89 rational sparse bounds were tighter than the dense counterparts. At $d=3$ we find that 6/88 sparse bounds were tighter, while at $d=2$ all dense bounds were tighter. This dependency on the relaxation order can be explained by two competing mechanisms, assuming cliques with sufficiently small overlap: On the one hand, the numerical bound difference $\lambda_d - \lambda_d^{\mathrm{sp}}$ is expected to increase with increasing $d$, as the sparse hierarchy fails to capture the increasing amount of clique overlap monomials. On the other hand, the dense Gram matrix size grows more rapidly with $d$, leading to a larger multiplicative factor when computing the certified bounds $\lambda_d^{\mathrm{rat}}$.

\subsection{Tilted CHSH inequality}
As a last benchmark in the non-locality scenario we investigated the performance of our scheme to estimate the maximal quantum violation of a variant of the so called \textit{tilted} CHSH inequality~\cite{tilted_bell}, introduced in \cite{tilted_selftesting}:
\begin{align*}
\mathcal{B}_t(\theta_A,\theta_B) &= \sum_{x,y} (-1)^{xy}\langle A_x B_y\rangle + \theta_A\langle A_0\rangle + \theta_B \langle B_0\rangle \\&\le 2 + \theta_A + \theta_B.
\end{align*}

The study of said inequality poses a computationally relevant problem for our framework: On the one hand, this tilted CHSH inequality is self-testing, and the authors provide an analytical solution for its maximal quantum violation. On the other hand, the authors show that this tilted CHSH inequality is \textit{intractable} using the NPA hierarchy as $\theta_A + \theta_B \rightarrow 2$, as increasingly high relaxation levels are needed for convergence. In fact, the authors showed, using the arbitrary-precision solver SDPA-GMP \cite{sdpa-gmp}, that for parameters $\theta_A = \theta_B = 0.999$, even NPA level $d=10$ does not achieve the analytical optimal violation. 
This provides an interesting application for the certification scheme, as it allows us (a) to investigate its performance at high relaxation orders and (b) compare computed bounds to the exact maximum quantum violations. 

\begin{figure}[ht]
  \centering
  \includegraphics[width=\columnwidth]{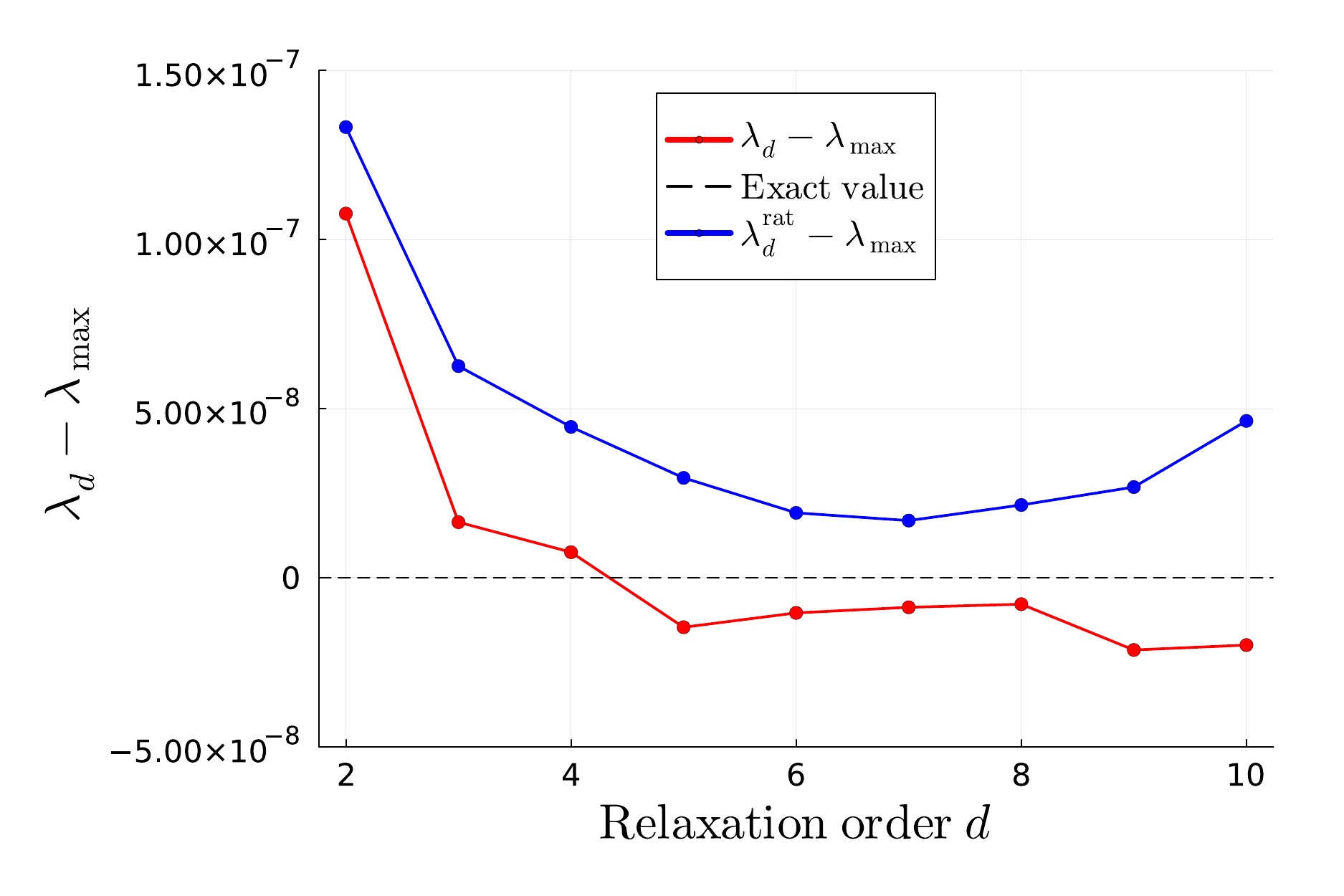}
  \caption{Numerical bounds $\lambda_d$ and certified bounds $\lambda_d^{\mathrm{rat}}$ with respect to the analytical maximal violation $\lambda_{\max}$ of the tilted CHSH inequality with parameters $\theta_A = \theta_B = 0.999$}
  \label{fig:tilted_CHSH}
\end{figure}
As one can see in Figure \ref{fig:tilted_CHSH}, even though high-precision calculations showed that the hierarchy does not converge at the displayed relaxation orders, the numerical upper bounds $\lambda_d$ computed with MOSEK are below the analytical maximum for relaxation orders $d>4$, resulting in wrong bounds, which furthermore do not monotonically decrease. On the other hand, our certification scheme produces valid bounds at all relaxation orders, with the tightest bound achieved at order $d=7$. In this scenario where numerical bounds are stagnating, the certified bounds start to worsen due to the rapid growth of $s_{n,d}^{\mathcal{I}}$ with respect to $d$.

\section{Ground-state observables in quantum many-body systems}
\label{sec:qmb}
A primary objective of quantum many-body physics is to understand the properties of the state of minimal energy, or ground state of a physical system. A conceptually simple method to do so is \textit{exact diagonalization}, where one represents the Hamiltonian describing a system as a matrix in a chosen basis and computes its full eigenspectrum numerically \cite{exact_diagonalization}. While this method provides numerically exact solutions, the Hilbert space dimension typically grows exponentially with the particle number $N$ (for instance, $\dim\mathcal{H} = 2^N$ for $N$ spin-$1/2$ particles), rendering exact diagonalization infeasible rapidly. To mitigate this, one often restricts the variational ground-state search to physically motivated manifolds \cite{variatonal_benchmarks}. Notably, tensor network methods~\cite{tensor_rmp}, e.g. Matrix-Product States or Projected Entangled Pair States, have been shown to significantly scale up the feasible problem sizes. However, a variational Ansatz, in general, does not provide any guarantee for how well the Ansatz space captures the actual ground state of the problem, and can just provide upper bounds to the true ground-state energy
\begin{equation*}
    E_0^{var} \geq E_0.
\end{equation*}
For any other observable, variational methods just offer an estimation, with no guarantee of how close it is to the true value or whether it bounds it from above or below. The NPA hierarchy constitutes a counterpart to these variational Ansätze by virtue of relaxation, hence allowing one to (numerically) certify an interval for $E_0$:
\begin{equation*}
    E_0^{var} \geq E_0 \geq E_0^{\mathrm{SDP}}.
\end{equation*}

Now, in \cite{groundstate} the authors make use of such inequality constraints on the ground-state energy to obtain bounds on ground-state observables beyond the energy (e.g. magnetization and local correlators). This constitutes a particularly interesting application for our method, as it constitutes one of the aforementioned iterative algorithms, reusing SDP bounds of former runs in consecutive programs. If wrong numerical lower bounds are issued, $E_0^{\mathrm{SDP}} \nleq E_0$, imposing such a wrong inequality in the computation of other ground state observables might cause an accumulation of errors. Our certification method precisely eliminates this failure mode by turning the SDP output into a rigorously valid lower bound.

We start by applying our certification scheme to the computation of ground state energies of the translationally invariant $J_1-J_2$ Heisenberg chain with nearest- and second-nearest neighbour interactions and periodic boundary conditions with Hamiltonian
\begin{equation*}
H = \frac14 
\sum_{i=1}^{N} \sum_{a \in \{x,y,z\}}
\left( \sigma_i^{a}\,\sigma_{i+1}^{a} + J_{2}\,\sigma_i^{a}\,\sigma_{i+2}^{a} \right),
\end{equation*}
where $\sigma_i^a$ denotes the Pauli operator of type $a$ acting on site $i$. We adapt the convention $J_1 = 1$ from the underlying work.

In \cite{groundstate}, the authors exploit sparsity and various symmetries to significantly improve the scalability of the hierarchy, allowing them to study large system sizes. Exploiting the sign symmetry of the model lets one block-diagonalize the dense Gram matrix into four smaller blocks $G^{(j)},j\in \{1,2,3,4\}$. Due to the model's symmetry under permutation of the Pauli operators $\{\sigma_i^x,\sigma_i^y,\sigma_i^z\}$ , one can identify three of the blocks with each other, leaving one with two independent PSD blocks $G^{(j)},j\in \{1,2\}$ which can be further block diagonalized to blocks $G_i^{(j)},i\in\{1,\ldots,N/2+1\}$ due to the translational symmetry of the model. The associated transformation is the discrete Fourier transform  $P\in\mathbb{C}^{N\times N}$:
\begin{equation}
    P_{i,j}=\frac{1}{\sqrt{N}}\mathrm{e}^{-2\pi\mathbf{i}(i-1)(j-1)/N},\quad i,j=1,\ldots,N.
\end{equation} 
For details on these symmetry exploitations the reader is directed to the Appendix B of~\cite{groundstate}. 

After exploiting the symmetries, the numerical SOHS certificate for lower bounds $\lambda_d$ to said ground state problems can be obtained as
\begin{align*}
    \frac{H}{N}-&\lambda_d\simeq \left[1, \left(\mathds{1}_{s_1}\otimes P_1\right)\vec{b}_1\right]^*G^{(1)}_{1}\left[1, \left(\mathds{1}_{s_1}\otimes P_1\right)\vec{b}_1\right]\\
    &+\sum_{i=2}^{\frac{N}{2}+1}\left(\left(\mathds{1}_{s_1}\otimes P_i\right)\vec{b}_1\right)^*G^{(1)}_{i}\left((\mathds{1}_{s_1}\otimes P_i\vec{b}_1\right)\\
    &+\sum_{i=1}^{\frac{N}{2}+1}\left(\left(\mathds{1}_{s_2}\otimes P_i\right)\vec{b}_2\right)^*G^{(2)}_{i}\left(\left(\mathds{1}_{s_2}\otimes P_i\right)\vec{b}_2\right),
\end{align*}
where $\vec{b}_j$ with $|\vec{b_j}| = Ns_j$ are monomial bases of the two sign symmetry sectors, $\mathds{1}_{s_j}$ denotes the identity matrix of size $s_j$, and $P_i$ the $i$-th row of $P$.
Note that one can interpret the discrete Fourier transformation as acting on the bases or on the Gram matrix blocks. Hence, one can either perform the projection on the polynomially indexed blocks $G_i^{(j)}$ according to Lemma \ref{thm:froboptsym}, or inflate them back to the monomial basis and perform the projection according to Lemma \ref{thm:sparsefrobopt}. As certification in the monomial basis becomes significantly more expensive for large $N$ (see Appendix \ref{sec:complexity}), all computations have been performed in the symmetry-adapted basis. In this scenario, the larger blocks $G^{(j)}$ are indexed by independent monomial bases $\vec{b_j}$, while their subblocks $G_i^{(j)}$ have monomial overlap in their indexing polynomials $(\mathds{1}_{s_j} \otimes P_i)\vec{b}_j$. This hence represents an instance featuring both sparsity and symmetry, where an intermediate lifting between Theorems \ref{thm:sparsebounds} and \ref{thm:symbounds} can be applied:
\begin{equation} \label{eq:qmblifting}
    \lambda_d^{\mathrm{rat}} = \tilde{\lambda}_d + \min_{i}\{\mu_{\min}^{i1}\}\cdot N s_1 + \min_{i} \{ \mu_{\min}^{i2} \}\cdot Ns_2,
\end{equation}
where $\{ \mu_{\min}^{ij} \}$ denotes the minimal eigenvalue of block $G_i^{(j)}$. We denote the arithmetic mean of the minima involved by $\overline{\mu_{\min}} = \sum_j\min_{i} \{\mu_{\min}^{ij}\}/2$.

Figure \ref{fig:j1j2_gs_energy} shows the results for applying our certification scheme to a $J_1-J_2$ Heisenberg chain up to length $N=50$. One can see from Figure \ref{fig:j1j2_ratbounds} that the certified bounds are consistent with the upper bounds computed by DMRG, recovering the results from \cite{groundstate}. Figure \ref{fig:j1j2_boundshifts} shows the bound shift by going from numerical to rational bounds depending on $N$. What is remarkable here is that in this scenario the bound shift \textit{decreases} with an increasing problem size. This can be explained by the negative minimal $\overline{\mu_{\mathrm{min}}}$ of the projected blocks $\mathcal{P}(\tilde{G}_i^{(j)})$ decreasing in magnitude with increasing $N$ (see Figure \ref{fig:j1j2_mineigs}), which indicates that the numerical certificates carry a smaller error. This is most likely due to the fact that at larger $N$ more matrix elements $(G_i^{(j)})_{\alpha,\beta}$ are assigned to the same equivalence class under constraints and symmetries, possibly leading to a more effective cancellation of errors. The second factor preventing a rapid increase in $\delta_d$ is the total size of all Gram matrices $s_{n,d}^{\mathcal{I}} = N(s_1 + s_2)$, growing linearly with $N$ (Figure \ref{fig:j1j2_gramsizes}) due to the translational invariance of the model. In conclusion, this first experiment suggests that the numerical results of the certification method seem to improve when symmetries are exploited. However, they also come with an increased computational cost, as it is shown in Appendix \ref{sec:complexity}.

\begin{figure}[hp] 
  \centering
  \begin{subfigure}{\columnwidth}
    \centering
    \includegraphics[width=\columnwidth]{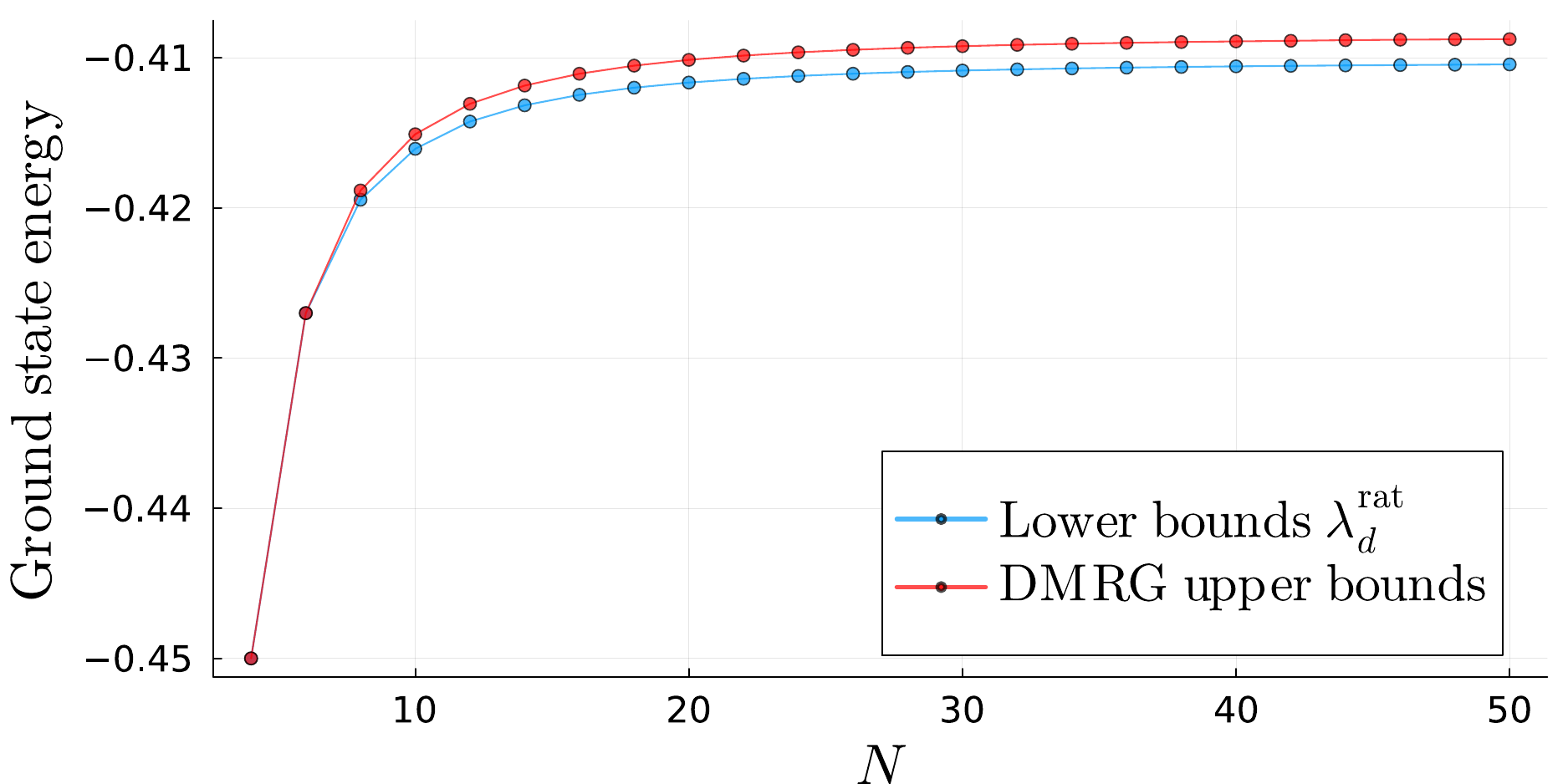}
    \caption{Certified lower bounds $\lambda_d^{\mathrm{rat}}$ and DMRG upper bounds to $E_0$}
    \label{fig:j1j2_ratbounds}
  \end{subfigure}

  \medskip

  \begin{subfigure}{\columnwidth}
    \centering
    \includegraphics[width=\columnwidth]{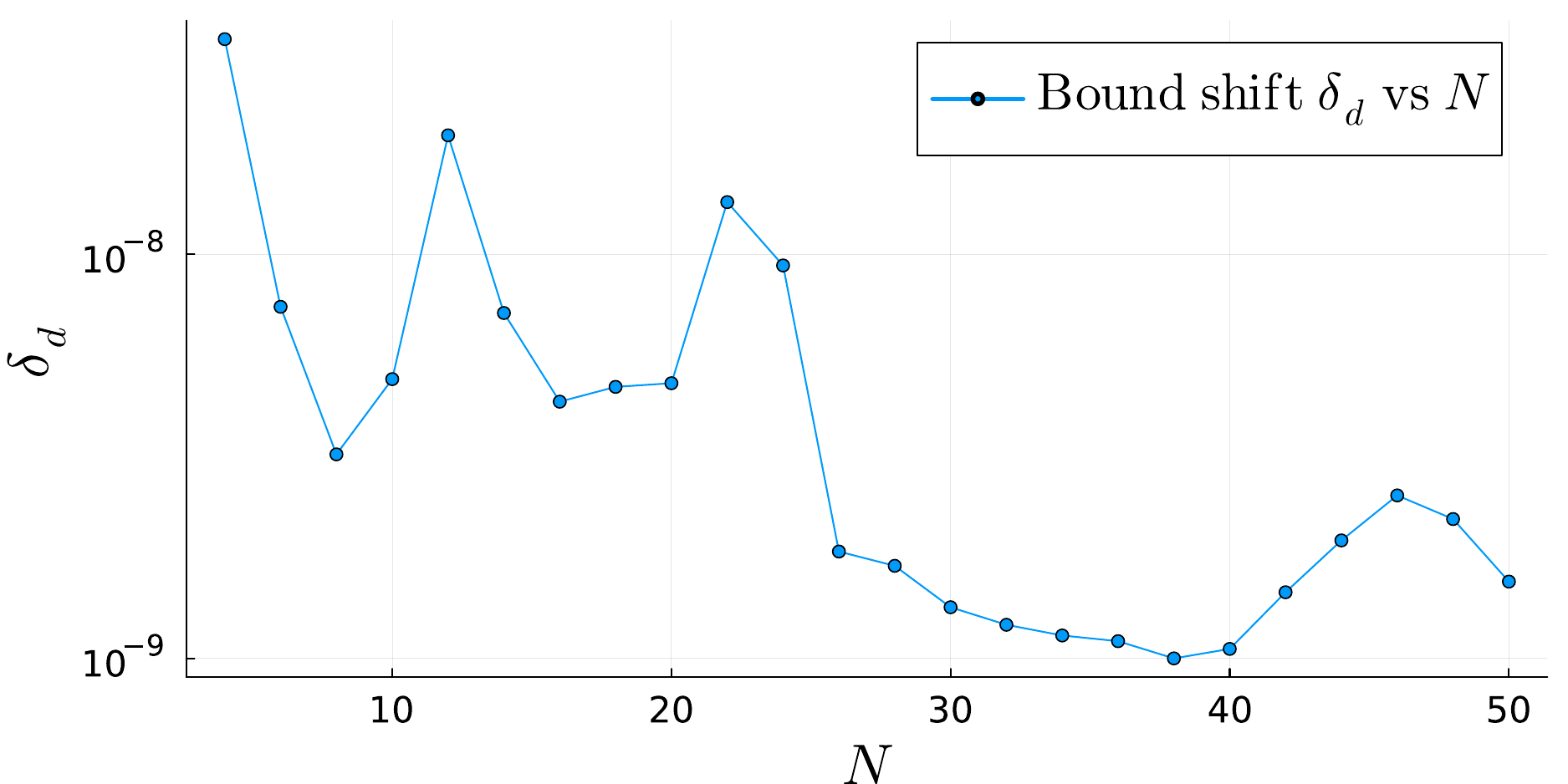}
    \caption{Rationalization bound shift $\delta_d = \lambda_d - \lambda_d^{\mathrm{rat}}$}
    \label{fig:j1j2_boundshifts}
  \end{subfigure}

  \medskip

  \begin{subfigure}{\columnwidth}
    \centering
    \includegraphics[width=\columnwidth]{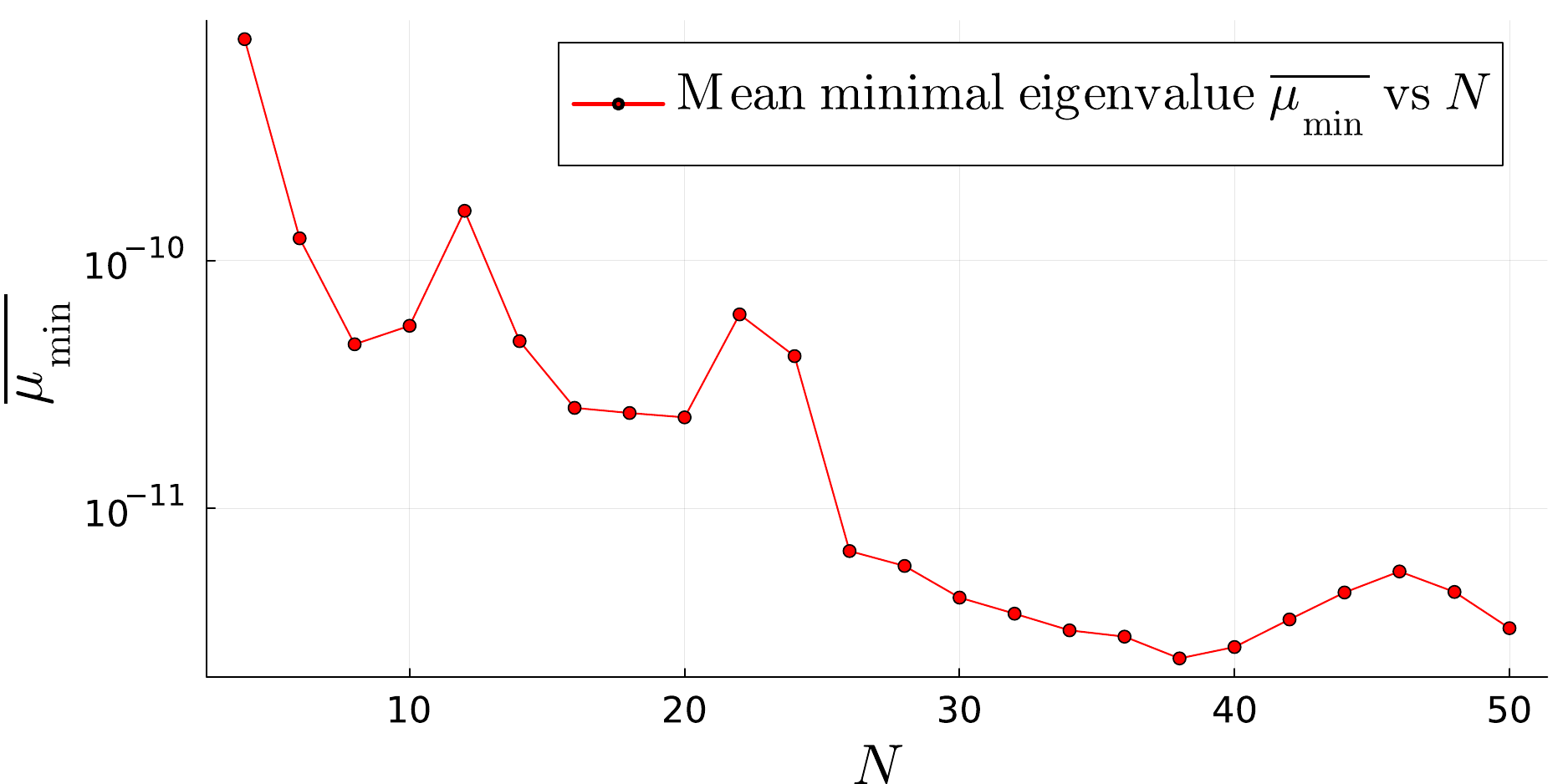}
    \caption{Mean minimal eigenvalue $\overline{\mu_{\mathrm{min}}}$ of the projected Gram blocks $\mathcal{P}(\tilde{G}_i^{(j)})$}
    \label{fig:j1j2_mineigs}
  \end{subfigure}

    \medskip

  \begin{subfigure}{\columnwidth}
    \centering
    \includegraphics[width=\columnwidth]{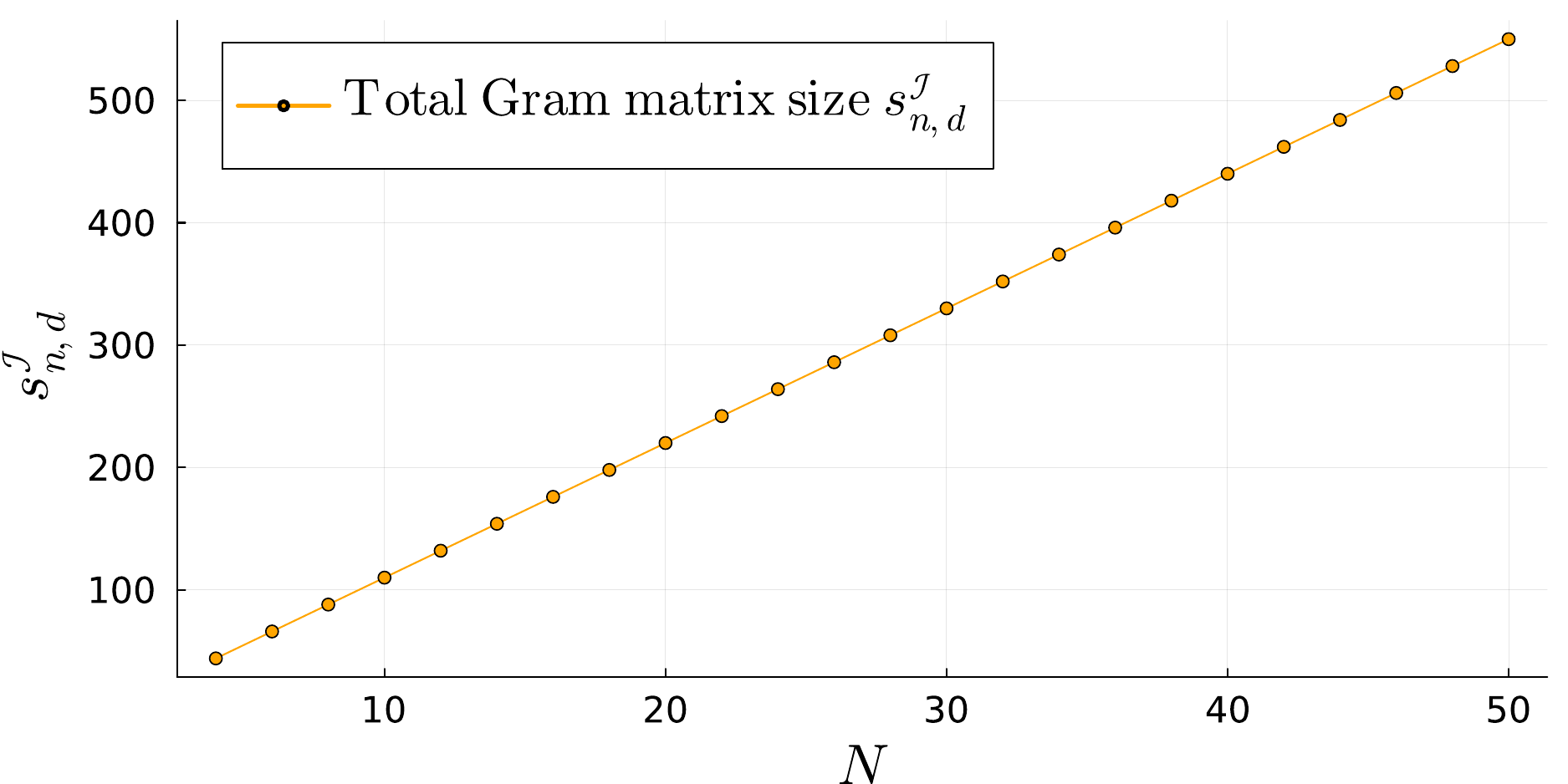}
    \caption{Total size of Gram matrices $\mathcal{P}(\tilde{G}_i^{(j)})$}
    \label{fig:j1j2_gramsizes}
  \end{subfigure}

  \caption{Numerical results for obtaining certified bounds to the $J_1-J_2$ Heisenberg chain ground state energy depending on the problem size $N$. Computations were performed at an intermediate relaxation order between $d=2$ and $d=3$. }
  \label{fig:j1j2_gs_energy}
\end{figure}

One can extend the rationalization scheme to observables beyond the ground state energy. The numerical SOHS certificate then involves scalar Gram multipliers $\kappa_1, \kappa_2>0$ that correspond to the inequalities in $E_0$. The corresponding numerical certificate for lower (or upper) bounding an observable $O$ under previously established energy constraints reads

\begin{align*}
    \frac{\pm O}{N}\mp&\lambda_d\simeq \left[1, \left(\mathds{1}_{s_1}\otimes P_1\right)\vec{b}_1\right]^*G^{(1)}_{1}\left[1, \left(\mathds{1}_{s_1}\otimes P_1\right)\vec{b}_1\right]\\
    &+\sum_{i=2}^{\frac{N}{2}+1}\left(\left(\mathds{1}_{s_1}\otimes P_i\right)\vec{b}_1\right)^*G^{(1)}_{i}\left((\mathds{1}_{s_1}\otimes P_i\vec{b}_1\right)\\
    &+\sum_{i=1}^{\frac{N}{2}+1}\left(\left(\mathds{1}_{s_2}\otimes P_i\right)\vec{b}_2\right)^*G^{(2)}_{i}\left(\left(\mathds{1}_{s_2}\otimes P(i)\right)\vec{b}_2\right) \\&+\kappa_1 (H - E_0^{\mathrm{SDP}}) +  \kappa_2 (E_0^{var} - H).
\end{align*}

To obtain a rational certificate from the numerical one, one first has to establish rational, certified bounds on $E_0$ for the multiplier terms. For the SDP bound we use the previously established certified lower bounds on $H$. As establishing certified variational bounds is not the scope of this work, we use $E_0^{var,rat} = \widetilde{(E_0^{var} + \delta E_0)}$ as a rational upper bound, where $\delta E_0$ is chosen larger than the DMRG precision, and the rounding precision $\eta$ of lower magnitude than $\delta E_0$ to ensure a valid bound.
Next, one proceeds to construct a rational LHS featuring the rational inequality terms and associated rationalized multipliers as described in Section \ref{sec:rational_bounds}. As the aforementioned symmetry considerations also hold in this scenario, one can then proceed to perform the projection according to Lemma \ref{thm:froboptsym} and obtain certified bounds as in (\ref{eq:qmblifting}).

\begin{figure}[ht]
  \centering
  \includegraphics[width=\columnwidth]{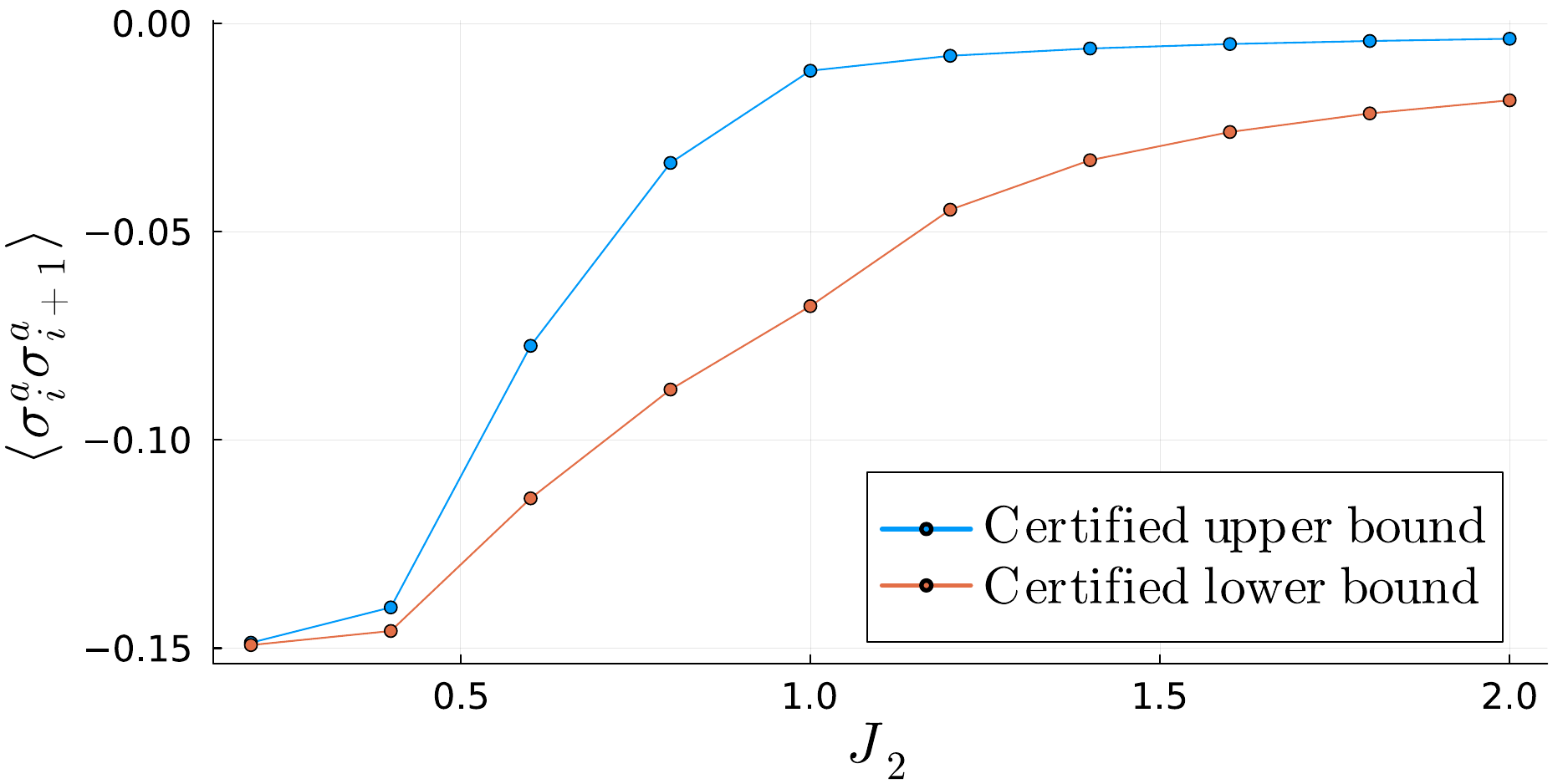}
  \caption{Certified bounds to  nearest-neighbour correlators 
  $\langle \sigma^a_i \sigma^a_{i+1} \rangle$ in a $J_1-J_2$ Heisenberg chain 
  of length $N=12$. Computations were performed at an intermediate relaxation 
  order between $d=3$ and $d=4$.}
  \label{fig:antiferro_bounds}
\end{figure}

\begin{figure}[ht]
  \centering
  \includegraphics[width=\columnwidth]{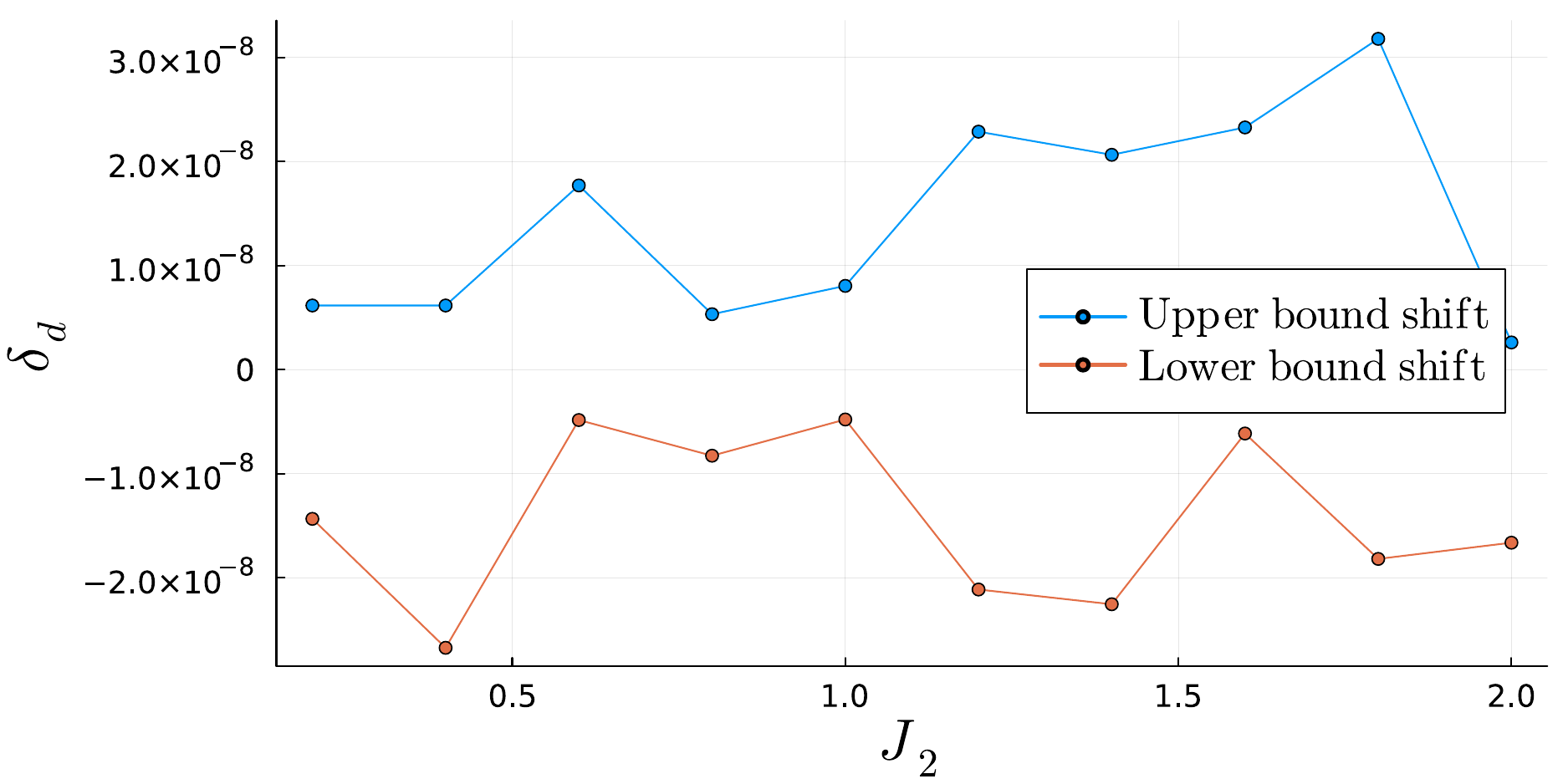}
  \caption{Bound shift $\delta_d$ for certified bounds on $\langle \sigma_i^a \sigma_{i+1}^a\rangle$ from Figure \ref{fig:antiferro_bounds}}
  \label{fig:antiferro_shifts}
\end{figure}

We start by investigating nearest-neighbour correlations $\langle \sigma^a_i \sigma^a_{i+1} \rangle$ in the $J_1-J_2$ Heisenberg chain of length $N=12$ dependent on the second-nearest neighbour couplings $J_2$. The results are shown in Figure \ref{fig:antiferro_bounds}, where we are able to certify antiferromagnetic correlations $\langle \sigma^a_i \sigma^a_{i+1} \rangle < 0$ for all $J_2$, recovering the results from \cite{groundstate}. Figure \ref{fig:antiferro_shifts} shows the corresponding bound shifts $\delta_d$, where we find no obvious correlations between the shifts $\delta_d$ and the coupling $J_2$. As the problem size $s_{n,d}^{\mathcal{I}}$ is fixed for all $J_2$, the differences in $\delta_d$ stem solely from differences in minimal eigenvalues $\mu^{ij}_{\min}$ of the projected Gram blocks. This indicates that the numerical problem is equally well conditioned for all $J_2$, in stark contrast to the following application.

\begin{figure}[ht]
  \centering
  \includegraphics[width=\columnwidth]{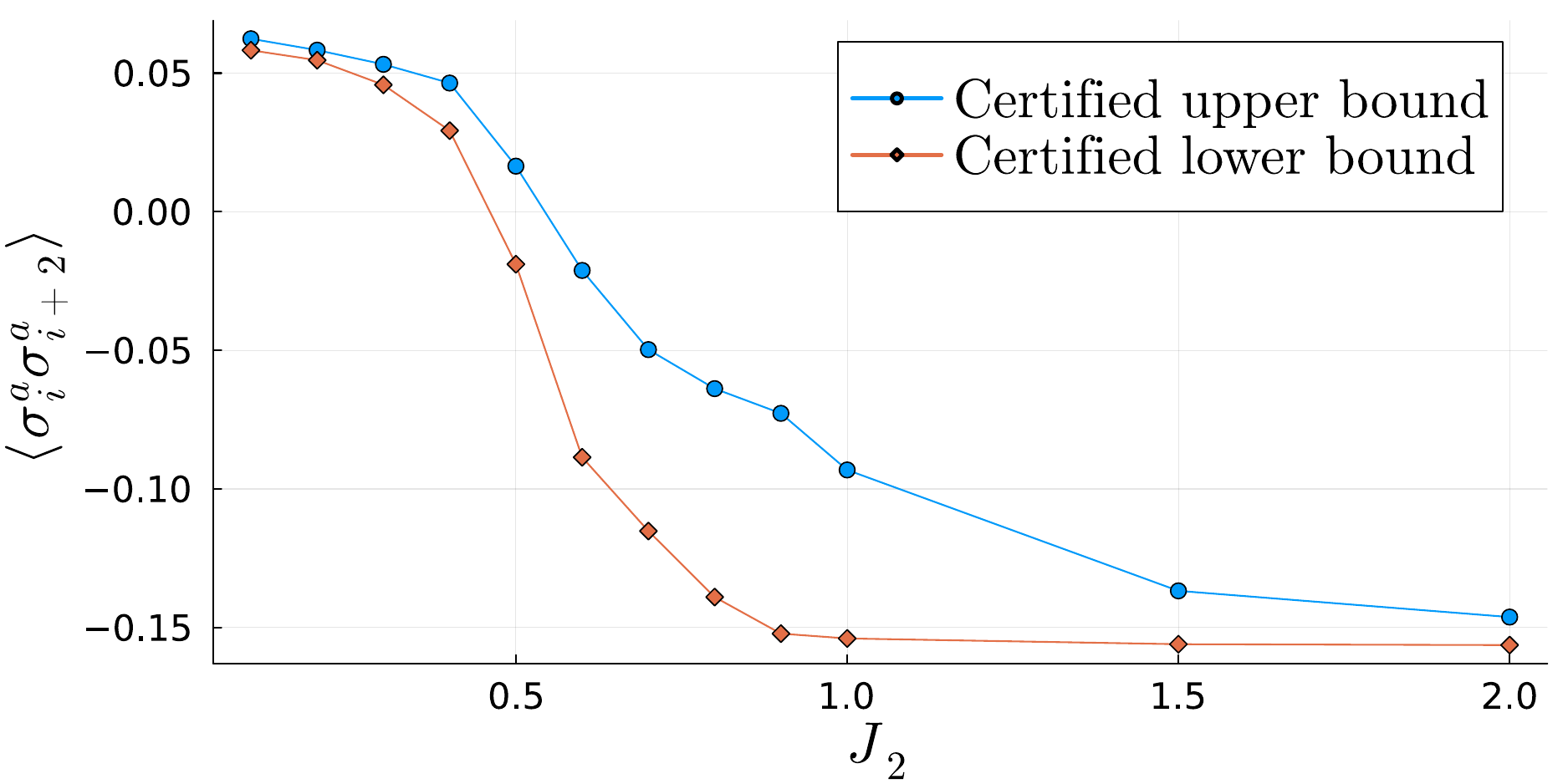}
  \caption{Certified bounds to second neighbour spin correlations $\langle \sigma_i^a\sigma_{i+2}^a\rangle$ in the $J_1-J_2$ Heisenberg chain of length $N=12$ for different values of $J_2$. An intermediate relaxation order between $d=3$ and $d=4$ was used for computations.}
  \label{fig:phasetrans_bounds}
\end{figure}

\begin{figure}[ht]
  \centering
  \includegraphics[width=\columnwidth]{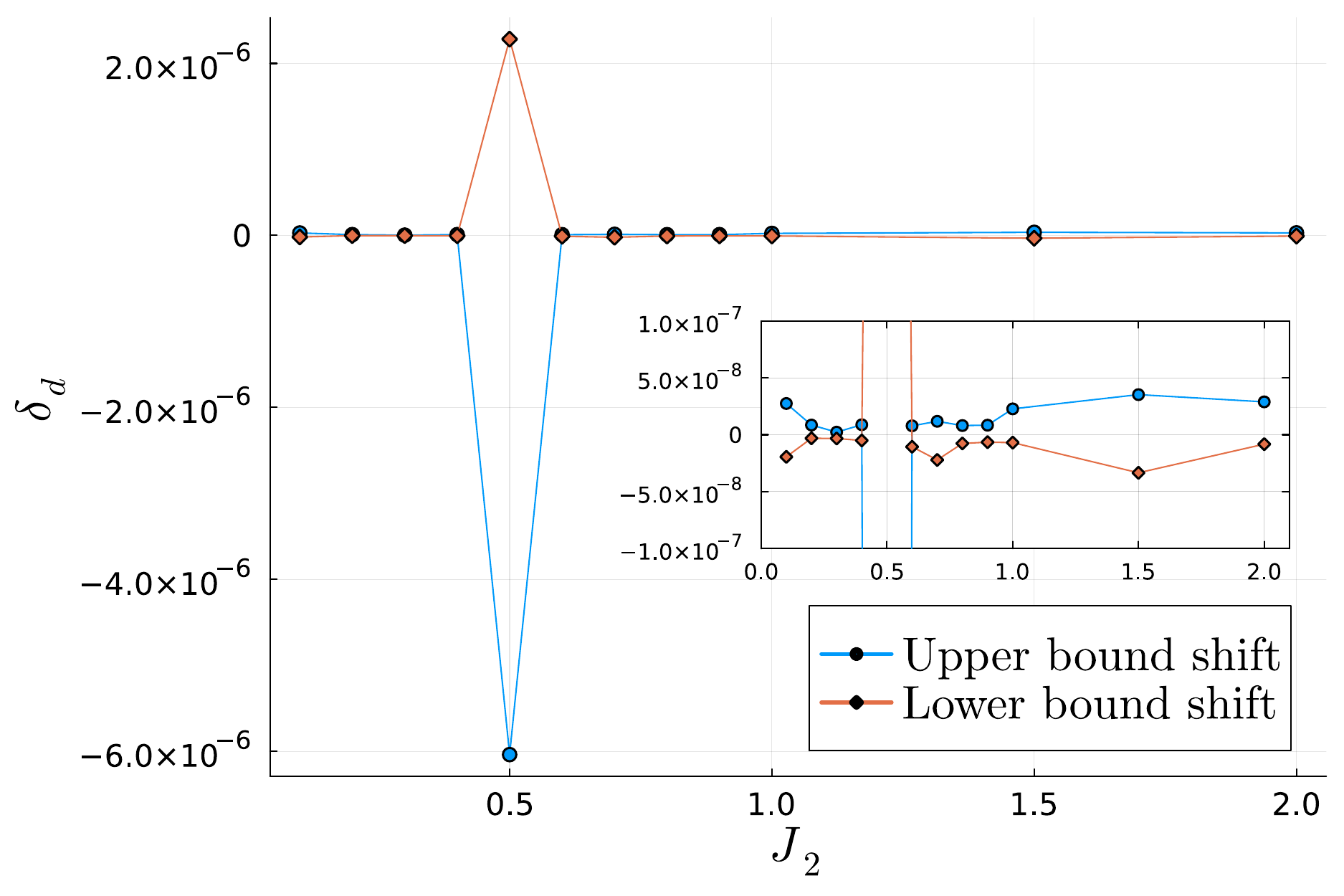}
  \caption{Bound shifts $\delta_d$ for certified bounds on $\langle \sigma_i^a\sigma_{i+2}^a\rangle$ from Figure \ref{fig:phasetrans_bounds}.}
  \label{fig:phasetrans_shifts}
\end{figure}

Figure \ref{fig:phasetrans_bounds} shows certified bounds to second neighbour spin correlations $\langle \sigma_i^a\sigma_{i+2}^a\rangle$ in the $J_1-J_2$ Heisenberg chain of length $N=12$. We are able to certify ferromagnetic correlations $\langle \sigma_i^a\sigma_{i+2}^a\rangle>0$ for couplings $J_2\leq J^+_2<0.5$, and antiferromagnetic correlations $\langle\sigma_i^a\sigma_{i+2}^a\rangle<0$ for $J_2 \geq J^-_2>0.5$. Note that in the analysis of~\cite{groundstate}, the transition point was exactly at $J_2 =0.5$. At this value, the so called \textit{Majumdar-Ghosh point}, two exactly degenerate ground states $\ket{\psi_1}$ and $\ket{\psi_2}$ emerge \cite{degenerate_groundstate}, leading to a distinct behaviour in the observed bound shift, as displayed in Figure \ref{fig:phasetrans_shifts}: Remarkably, while for $J_2 \neq 0.5$ projected Gram matrices $\mathcal{P}(\tilde{G}_i^{(j)})$ with negative minimal eigenvalues $\mu^{ij}_{\min}<0$ emerge, for $J_2=0.5$ the numerical, as well as the projected Gram blocks, lie on the \textit{interior} of the PSD cone, carrying positive minimal eigenvalues $\mu^{ij}_{\min}>0$ of relatively large magnitude. This is in accordance with the observation of the SDP solver not being able to provide an optimal solution at $J_2=0.5$, but issuing a feasible point: As the objective function is linear and the feasible set is the intersection of an affine space $\mathcal{L}$ and the PSD cone, optimal numerical solutions have to lie on the boundary of the PSD cone and approximately fulfill $\mu^{ij}_{\min} \approx 0$. In the case of $\mu^{ij}_{\min}>0$, we are able to tighten the bounds to our problem by applying Theorem \ref{thm:boundtightening} (i) to shift the projected blocks from the interior of the PSD cone to its boundary.

This apparent ill-conditioning can be explained by the two-fold degeneracy of the ground state at $J_2=0.5$: As $\ket{\psi_1}$ and $\ket{\psi_2}$ share exactly the same ground state energy $E_0$, any convex combination will do so too, leading to an increase in dimensionality of the optimal subspace. Hence, the energy constraints employed to compute bounds on $\langle\sigma_i^a\sigma_{i+2}^a\rangle$ do \textit{not} restrict the search space in the same manner as for problems with non-degenerate ground states, as it is the case for the other investigated values of $J_2$ for a finite chain.



This phenomenon suggests that, while in general our certification procedure produces more agnostic bounds, it is possible to construct tighter rational bounds $\lambda_d^{\mathrm{rat}}$ with respect to their numerical counterpart $\lambda_d$ if SDP solutions are issued in the interior of the PSD cone. Conclusively, this suggests a second possible application of our method to obtain certified \textit{and} tighter bounds to problems which are intractable to SDP solvers.
\section{Conclusion and Outlook} \label{sec:conclusion}
In this work we presented a constructive method to obtain certified outer bounds to non-commutative optimization problems by postprocessing numerical data obtained from the NPA hierarchy. We do so by extracting a rational SOHS certificate from an SDP solver's Gram data, which consists of (1) rounding the data to rationals (2) performing a Frobenius optimal projection back onto the affine constraint subspace, and (3) a constraint dependent lifting of the projected Gram matrices into the PSD cone. We further extended this procedure to sparsity and symmetry-adapted scenarios, and analysed the method's complexity relative to the underlying SDP. We also provided a tightening procedure which for PD projected Gram matrices leads to tighter, certified bounds.

We then applied our method to obtain certified bounds on (a) the maximal violation of a catalogue of bipartite Bell inequalities, and (b) quantum many-body ground-state observables, where we found moderate losses in tightness that reflect a justified cost for obtaining rigorous guarantees. In the sparse case we observed that sparse rational bounds can become tighter than their dense counterparts, while in the symmetry-adapted scenario we identified a computational bottleneck that we suspect to be common to any SDP-based computer-assisted proof in arbitrary symmetry-adapted bases. In addition, we also find that in regimes of inexact SDP termination, the rational bounds can become tighter than their numerical, uncertified counterparts, which is of independent interest when dealing with numerical instabilities. In summary, we provided analytical constructions and numerical evidence that promote numerical semidefinite relaxations toward computer-assisted proofs in quantum theory.

Our method is readily extendable to modified non-commutative hierarchies such as trace- or state-optimization, as they differ from the presented hierarchy solely up to moment equality constraints. Following the same argument, it is also possible to extend this method to differentially constrained non-commutative polynomial optimization problems, allowing one to certify non-polynomial quantities in quantum theory.

There are two main venues for further investigation, namely (i) optimality of the bound-lifting step in the dense case, and (ii) scalability in the symmetry-adapted setting: Since our bound-lifting strategy scales with the size of the Gram blocks, which grows rapidly in the dense hierarchy, we observe increasing bound losses at higher relaxation orders. A refined bound-lifting procedure featuring a more involved spectral analysis of the Gram matrices, or a proof of optimality of the presented method hence deserve further investigation. Secondly, as we showed in Appendix \ref{sec:complexity}, when Gram blocks are indexed by polynomial bases, even establishing the numerical non-negativity certificate might exceed the cost of solving the associated SDP. Understanding how structure exploitation of the unitary basis change might reduce the certificate complexity is hence another open question. In addition, it would be interesting to systematically quantify the dependence of the certification loss on solver accuracy: Using higher-precision solvers should yield more accurate numerical certificates and hence smaller bound differences $\delta_d$, at the expense of computational efficiency. Beyond these directions, it would also be worthwhile to further study the certification scheme for numerically delicate instances in which the solver returns solutions strictly inside the PSD cone, and to systematically leverage such cases to obtain tighter bounds.
\section*{Acknowledgements} \label{sec:acknowledgements}
We thank Nando Leijenhorst, Jonas Britz, Erik Altelarrea Ferre, and Andreas Leitherer for useful discussions. This work has been supported by European Union’s HORIZON–MSCA-2023-DN-JD programme under the Horizon Europe (HORIZON) Marie Skłodowska-Curie Actions, grant agreement 101120296 (TENORS), the European Union Quantera project Veriqtas, the European Union Quantera project COMPUTE, the Government of Spain (Severo Ochoa CEX2019-000910-S, FUNQIP and NextGeneration EU PRTR-C17.I1) and European
 Union (QSNP, 101114043 and PASQuanS2.1, 101113690), Fundació Cellex, Fundació Mir-Puig, Generalitat de Catalunya (CERCA program), the ERC AdG CERQUTE, the AXA Chair in Quantum Information Science, the National Key R\&D Program of China under grant No.~2023YFA1009401, the Natural Science Foundation of China under grant No.~12571333, and the International Partnership Program of Chinese Academy of Sciences under grant No.~167GJHZ2023001FN. This work was granted access to the HPC resources of CALMIP supercomputing center under the allocation 2016-23035.

\appendix
\section{Proof of Lemma \ref{thm:boundlowering}} \label{sec:appendix}
\begin{proof}[Proof of Lemma \ref{thm:boundlowering}] We deal with the four different constraint families separately. \\
\textbf{Box and unipotency constraints.}
Assume $\{c_i\}_{i=1,..,n} =\{1-X_i^2\}_{i=1,..,n} \subseteq G$, e.g. Box constraints. We show that for $\varepsilon >0$ there exists the following rational decomposition
\begin{equation}
\label{eq:a1}
\varepsilon s_{n,d} = \varepsilon v_d^*v_d + \sum_{i,k} q_{ki}^\ast(1-X_i^2)q_{ki}.
\end{equation}
with polynomials $q_{ki}\in\mathbb{Q}_{d-1}\langle\underline{X}\rangle$. To do so, we first show that
\begin{equation}
\label{eq:a2}
1 = w^* w \;+\; \sum_{i,k} q_{ki}^*(1-X_i^2)q_{ki}.
\end{equation}
for all monomials $w$ with $|w|\leq d$. Proceed by induction on $d$. The variables $X_j$ yield the base case for words of length $|w| =1$:
\begin{equation}
\label{eq:a3}
1 = X_j^2 + (1-X_j^2).
\end{equation}
Assume the induction hypothesis \eqref{eq:a2} holds for $w$ with $|w|\leq d-1$. Multiplying by $X_j$ from  both sides gives
\begin{equation}
\label{eq:a4}
X_j^2 = u^* u + \sum_{i,k} X_j q_{ki}^*(1-X_i^2)q_{ki} X_j.
\end{equation}
with $u = X_jw$. Plugging this expression into \eqref{eq:a3} yields
\begin{equation*}
1 = u^* u + \sum_{i,k} X_j q_{ki}^\ast(1-X_i^2)q_{ki} X_j + (1-X_j^2)
\end{equation*}
which is of the desired form. Summing over all $u$ with $|u|\le d$ and multiplying by $\varepsilon$ proves the claim \eqref{eq:a1}. The full argument also holds for unipotency constraints.

\medskip
\textbf{Ball constraints.}
Assume $c_1 = (1-\sum_i X_i^2)\in G$. We want to show the existence of
\begin{equation}
\label{eq:a5}
\varepsilon s_{n,d} = \varepsilon v_d^*v_d + \sum_k r_k^\ast r_k + \sum_k q_k^\ast \Bigl(1-\sum_i X_i^2\Bigr) q_k.
\end{equation}
for $\varepsilon > 0$,  $r_k\in\mathbb{Q}_d\langle\underline{X}\rangle$ and $q_k\in\mathbb{Q}_{d-1}\langle\underline{X}\rangle$. We again proceed by induction on $|w|$, where we want to show that for all $w$ with $|w| \leq d$
\begin{equation}
\label{eq:a7}
1 = w^\ast w \;+\; \sum_k r_k^\ast r_k \;+\; \sum_k q_k^\ast \Bigl(1-\sum_i X_i^2\Bigr) q_k.
\end{equation}
 The base case is again given by the variables $X_j$:
\begin{equation}
\label{eq:a6}
1 = X_j^2 + \sum_{i\neq j} X_i^2 + \Bigl(1-\sum_i X_i^2\Bigr).
\end{equation}
Assuming \eqref{eq:a7} holds for $|w|\le d-1$, multiplying the expression from both sides by $X_j$ gives
\begin{equation*}
X_j^2 = u^\ast u \;+\; \sum_k X_j r_k^\ast r_k X_j + \sum_k X_j q_k^\ast \Bigl(1-\sum_i X_i^2\Bigr) q_k X_j.
\end{equation*}
Substitute into \eqref{eq:a6}
\begin{align*}
&1 = u^\ast u + \sum_{i\neq j} X_i^2 + \sum_k X_j r_k^\ast r_k X_j \\ &+ \sum_k X_j q_k^\ast \Bigl(1-\sum_i X_i^2\Bigr) q_k X_j 
   + \Bigl(1-\sum_i X_i^2\Bigr),
\end{align*}
which again proves the claim \eqref{eq:a5} by summing over all $|u|\le d$ and scaling by $\varepsilon$.

\medskip
\textbf{Projection constraints.}
Assume $ \{c_i\}_{i=1}^n =\{X_i - X_i^2\}_{i=1}^n\subseteq H$. We want to show
\begin{equation}
\label{eq:a8}
\varepsilon s_{n,d} = \varepsilon v_d^*v_d + \sum_k r_k^\ast r_k + \sum_{i,k} q_{ki}^\ast (X_i - X_i^2) q_{ki}.
\end{equation}
Which, as in the other cases, is ensured if the following holds for all $w$ with $|w| \leq d$:
\begin{equation}
\label{eq:a9}
1 = w^\ast w + \sum_k r_k^\ast r_k + \sum_{i,k} q_{ki}^\ast (X_i - X_i^2) q_{ki},
\end{equation}
which again is proven by induction. The following holds for all $j$
\begin{equation}
\label{eq:a10}
1 = X_j^2 + (X_j-1)^2 + 2(X_j - X_j^2).
\end{equation}
Assuming \eqref{eq:a9} holds for $|w|=d-1$, multiplying it by $X_j^2$ gives
\begin{equation*}
X_j^2 = u^\ast u \;+\; \sum_k X_j r_k^\ast r_k X_j + \sum_{i,k} X_j q_{ki}^\ast (X_i - X_i^2) q_{ki} X_j.
\end{equation*}
Plugging into \eqref{eq:a10} yields
\begin{align*}
1 &= u^\ast u + \sum_k X_j r_k^\ast r_k X_j + \sum_{i,k} X_j q_{ki}^\ast (X_i - X_i^2) q_{ki} X_j \\
  &\quad + (X_j-1)^2 \;+\; 2(X_j - X_j^2)
\end{align*}
which is of the target SOHS form, yielding \eqref{eq:a8} by summing over all words and multiplying by $\varepsilon$. We note that the same argument also holds for $\{c_i\}_{i=1}^n\subseteq G$. 

\medskip
In all three cases, we obtain the desired SOHS decomposition for $f s_{n,d}$. The summing argument holds when working over a quotient algebra by replacing $s_{n,d}\rightarrow s_{n,d}^{\mathcal{I}}$, as well as when restricted to specific variable subsets in the sparse case, as in Theorem \ref{thm:sparsebounds}. 
\end{proof}
\section{Proof of Theorem \ref{thm:boundtightening}}\label{sec:tighteningproof}
\begin{proof}[Proof of Theorem \ref{thm:boundtightening}]
We start by proving the tightening scheme (i) from the Theorem. 

Let $\left(\tilde{\lambda}_d,\mathcal{P}(\tilde{G}_0)\right)$ be an exact pre-certificate obtained after rounding and projection
\begin{equation}
\label{eq:exactprecertap}
    \tilde{\lambda}_d - f \;=\; v_d^* \mathcal{P}(\tilde{G}_0)\, v_d \qquad \mathrm{mod}\,\mathcal I,
\end{equation}
and let $\mathcal{P}(\tilde{G}_0)\succ 0$ admit a positive lower bound $\mu_{\min}>0$ to its eigenspectrum. Recall the following constant SOHS decomposition from Equation \ref{eq:a1} in Appendix \ref{sec:appendix}:
\begin{equation*}
    s_{n,d} = v_d^*v_d + \sum_{i,k}q_{ki}^*(1-X_i^2)q_{ki}
\end{equation*}
Multiplying by any $\varepsilon\in \mathbb{R}$ and working modulo the underlying ideal then yields the following relation
\begin{equation} \label{eq:unipotentdec}
    \varepsilon s_{n,d}^{\mathcal{I}} = \varepsilon v_d^*v_d  \mod\mathcal{I}
\end{equation}
where this holds particularly for $\varepsilon <0$ and hence for $\varepsilon = -\mu_{\min} < 0$. This relation follows from the vector space property of the ideal. Adding Equations \ref{eq:exactprecertap} and \ref{eq:unipotentdec} then yields the following certificate in the quotient space
\begin{align}
    &\left(\tilde{\lambda}_d - \mu_{\min}s_{n,d}^{\mathcal{I}}\right) - f \;= \\ &v_d^*\left( \mathcal{P}(\tilde{G}_0)\, -\mu_{\min}\mathds{1}\right)v_d \quad \mathrm{mod}\,\mathcal I, \nonumber
\end{align}
which certifies $\lambda_d^{\mathrm{rat}} = \left(\tilde{\lambda}_d - \mu_{\min}s_{n,d}^{\mathcal{I}}\right) < \tilde{\lambda}_d$ as an upper bound to the optimization problem.

We will now proof the second tightening scheme (ii) of the Theorem. Again, assume an exact precertificate as in Equation \ref{eq:exactprecertap}, where no further assumptions on the constraints of the problem are taken. Our goal is to find the maximum $\tau>0$ such that the following rank-1 downdate is still PSD:
\begin{equation}
\label{eq:rkdowndate2}
    \mathcal{P}(\tilde{G}_0)\, -\tau e_0 e_0^* \succeq 0
\end{equation}
considering such downgraded matrix then leads to the following SOHS decomposition
\begin{equation} \label{eq:rkdowndate1}
    \left(\tilde{\lambda}_d - \tau\right) - f = v_d^*\left( \mathcal{P}(\tilde{G}_0)\, -\tau e_0 e_0^* \right) v_d \quad \mathrm{mod}\,\mathcal I, \nonumber.
\end{equation}
Start by considering the following block hermitian matrix
\begin{equation}
    S := \begin{pmatrix} \mathcal{P}(\tilde{G}_0) & e_0 \\ e_0^* & \tau^{-1}\end{pmatrix}
\end{equation}
Now, the Schur complement of $\tau^{-1}$ lets one characterize the PSD condition \ref{eq:rkdowndate2}:
\begin{equation}
    \left( \mathcal{P}(\tilde{G}_0) - \tau e_0 e_0^* \right) \succeq 0 \Leftrightarrow S \succeq 0
\end{equation}
On the other hand, the Schur complement of $\mathcal{P}(\tilde{G}_0)$ yields
\begin{equation}
    \tau^{-1} - e_0^*\mathcal{P}(\tilde{G}_0)^{-1}e_0 \succeq 0 \Leftrightarrow S\succeq 0.
\end{equation}
Putting both arguments together yields
\begin{equation}
    \left( \mathcal{P}(\tilde{G}_0) - \tau e_0 e_0^* \right) \succeq 0 \Leftrightarrow \tau \leq \frac{1}{e_0^*\mathcal{P}(\tilde{G}_0)^{-1}e_0}
\end{equation}
Hence, for $\tau = \frac{1}{e_0^*\mathcal{P}(\tilde{G}_0)^{-1}e_0}$, Equation \ref{eq:rkdowndate1} is a valid SOHS certificate, certifying $\lambda_d^{\mathrm{rat}} = \tilde{\lambda}_d - \tau$ as a valid upper bound to the optimization problem. One finds $\tau \in [\mu_{\min},\mu_{\max}]$ due to the relation of the eigenspectra of $\mathcal{P}(\tilde{G}_0)$ and its inverse. We also note that as $\mathcal{P}(\tilde{G}_0)\in\mathbb{Q}^{s_{n,d}^{\mathcal{I}}\times s_{n,d}^{\mathcal{I}}}$ we have $\tau \in \mathbb{Q}$, which is why no further rationalization is needed if the exact inverse of $\mathcal{P}(\tilde{G}_0)$ is computed.
\end{proof}
\section{Complexity considerations}
\label{sec:complexity}
In this section we address the computational complexity of the projection schemes in Lemmas \ref{thm:sparsefrobopt} and \ref{thm:froboptsym} (dense, sparse, and symmetry-adapted), as well as the complexity to compute the associated bound shift $\delta_d$.

First we show that the projection scheme \eqref{thm:sparsefrobopt}, dense or sparse, over a binomial ideal is computationally less complex than solving the underlying SDP. We give the argument in the sparse case as it trivially includes the dense case. We adopt the standard interior-point complexity model, in which 
an SDP with $K$ blocks of maximal size $N$ requires $\mathcal{O}(KN^3)$ arithmetic operations per iteration, and $\mathcal{O}(\sqrt{KN})$ iterations \cite{sdp_complexity}. As the scaling with respect to the number of constraints involved is not made explicit, this constitutes an agnostic lower bound on the overall complexity.


According to Theorem \ref{thm:sparsebounds}, the Frobenius optimal projection over a quotient space generated by a binomial ideal is given by
\begin{equation*}
    \Delta^{(k)}_{\alpha,\beta} = \frac{r_{\mathcal{N}(\alpha^*\beta)}}{n_\mathcal{I}(\alpha,\beta,k)}
\end{equation*}
Now, for $K$ cliques and associated Gram blocks with largest blocksize $N$, this update has asymptotic complexity $\mathcal{O}(KN^2)$. To compute the maps $n_{\mathcal{I}}(\alpha,\beta,k)$, one has to iterate over all blocks $K$ of maximal size $N^2$, also yielding complexity $\mathcal{O}(KN^2)$.
Now, once the maps $n_{\mathcal{I}}(\alpha,\beta,k)$ are computed, the residuals $r_{\mathcal{N}(\alpha^*\beta)}$ are also obtained in $\mathcal{O}(KN^2)$ according to Equation \eqref{eq:sparseresiduals}, leading to a total complexity of $\mathcal{O}(KN^2)$, which is dominated by the SDP cost per iteration of $\mathcal{O}(KN^3)$.

We are now going to show that in the case of Gram matrices indexed by non-monomial bases, as it is the case when a Gram matrix is block diagonalized due to symmetry, the computational complexity of the rationalization scheme can become dominant depending on the problem parameters.

Assume a given Gram matrix can be block diagonalized due to symmetries of the problem, as it is the case in \cite{groundstate}. Assume that after block-diagonalization we are again left with $K$ blocks of maximal size $N$. Each Gram block $k$ is indexed by a polynomial basis
\begin{equation*}
    b_i^{(k)} = \sum_{\alpha\in KN} b_{\alpha,i}^{(k)}v_\alpha
\end{equation*}
where $(v_{\alpha})_{\alpha=1}^{KN}$ indexes the Gram matrix $G_0$, e.g., constitutes the monomial indexing basis before block diagonalization. The naive projection would be to transform the large block diagonal matrix back to the monomial basis, losing the block diagonal structure and ending up with complexity $\mathcal{O}(K^2N^2)$ according to the previous dense argument. In that case, the algorithm is quadratic in $K$ while the SDP solving cost is linear in $K$. Thus, in the case of many small blocks, performing the projection in the monomial basis might become more costly than solving the SDP.

Secondly, one can perform the projection directly in the polynomial basis as in Lemma \ref{thm:froboptsym}.
Defining all variables accordingly, the numerical certificate reads
\begin{align*}
    \mathcal{N}(f - \lambda)_t \approx \sum_{i,j,k} \tilde{n}_{ijk}^t G_{ij}^{(k)}
\end{align*}
Now, in case of the underlying ideal $\mathcal{I}$ being binomial, the map $n_{ijk}^t$ reads
\begin{equation*}
    \tilde{n}_{ijk}^t = \sum_{\alpha,\beta}^{KN} (b_{i,\alpha}^{(k)})^*b_{j,\beta}^{(k)} \delta_{\mathcal{N}(\alpha^*\beta),t}
\end{equation*}
In the most general case, each indexing polynomial $b_i^{(k)}$ can have the whole monomial basis of size $KN$ as its support. This leads to a complexity of building $\tilde{n}_{ijk}^t$ in a polynomial basis of $\mathcal{O}(K^2N^2)$, which for large $K$ dominates the linear cost in $K$ per SDP iteration.

The complexity  argument above also holds for the establishment of numerical SOHS certificates involving symmetries from SDP data and is hence independent of our rationalization scheme. This implies that any SOHS-based proof for non-negativity in symmetry-adapted bases might become computationally more complex than solving the underlying SDP, which constitutes a general bottleneck for SDP assisted proofs for non-negativity in symmetry-adapted scenarios.

Note that besides the asymptotic argument, there is an associated cost per arithmetic operation performed in rational numbers. Depending on the rounding accuracy, the rationalized Gram matrix entries will have numerators and denominators of large bit size. Depending on the size of the rational numbers, rational arithmetic can become orders of magnitude more expensive than the associated operations using floating point numbers. This introduces a degree of freedom, where one can trade rounding accuracy and hence quality of the result for computational complexity.

In addition to performing the projection, to obtain a certified bound $\lambda_d^{\mathrm{rat}}$, one needs to compute the minimal eigenvalues of the projected Gram matrices, where in the sparse case this results in complexity $\mathcal{O}(KN^3)$, which is dominated by the overall SDP complexity. 
In the symmetry-adapted scenario, while the projection complexity is independent of the choice of basis, the eigenvalue computation  complexity does heavily depend on the choice: Performing the eigenvalue computation in the inflated monomial basis results in a complexity of $\mathcal{O}(K^3N^3)$ while computing the eigenvalues in the block-diagonal basis has complexity $\mathcal{O}(KN^3)$. Hence, for large $K$ the eigenvalue computation in the monomial basis might become more costly than solving the underlying SDP, while the computation in the symmetry-adapted basis is always cheaper. Hence, from a computational point of view, it is preferable to perform the certification scheme in the symmetry-adapted basis than translating back to the monomial basis. 

These asymptotic considerations are consistent with our numerical observations: In the Bell example, projection and eigenvalue costs remain negligible relative to the SDP solving, while in the many-body scenario the combination of high rounding precision and a large number of small blocks lead to the projection eventually becoming the bottleneck of the computation.

\bibliographystyle{quantum}
\bibliography{ref}
\end{document}